\newcommand{\be}{\begin{equation}}
\newcommand{\ee}{\end{equation}}
\newcommand{\bea}{\begin{eqnarray}}
\newcommand{\eea}{\end{eqnarray}}
\newcommand{\ba}{\begin{aligned}}
\newcommand{\ea}{\end{aligned}}
\numberwithin{equation}{section}
\newcounter{thmcounter}
\numberwithin{thmcounter}{section}
\theoremstyle{definition}
\newtheorem{definition}[thmcounter]{Definition}
\newtheorem{remark}[thmcounter]{Remark}
\theoremstyle{plain}
\newtheorem{lemma}[thmcounter]{Lemma}
\newtheorem{proposition}[thmcounter]{Proposition}
\newenvironment{proofof}[2]{{\itshape \noindent Proof of #1 \ref{#2}.}}{\hfill\(\square\)}
\def\1{{\boldsymbol 1}}                     %
\def\0{{\boldsymbol 0}}                     %
\def\CC{{\mathbb C}}                    %
\def\RR{{\mathbb R}}                    %
\def\XX{{\mathbb X}}                    %
\def\ZZ{{\mathbb Z}}
\def\cC{{\mathcal C}}                       %
\def\g{{\mathfrak g}}
\def\h{{\mathfrak h}}
 \def\b{{\mathfrak b}}
 \def\k{{\mathfrak k}}
 \def\fd{{\mathfrak d}}
 \def\n{{\mathfrak n}}
\def\SL{{\rm SL}}                           %
\def\GL{{\rm GL}}                           %
\def\U{{\rm U}}                                     %
\def\tr{\mathrm{tr\,}}                        %
\def\ri{{\rm i}}                            %
\def\hermp{\mathrm{Herm}^+}
\def\sgn{\mathrm{sgn}}
\def\S{\Sigma}
\def\Lm{{\Lambda}}
\newcommand{\bpm}{\begin{pmatrix}}
\newcommand{\epm}{\end{pmatrix}}
\def\cinf{C^\infty}
\def\hol{{Hol}}
\def\half{{\textstyle{\frac12}}} 
\def\dt {\left.\frac{d}{dt}\right|_{t=0}}
\def\llangle {\langle\!\langle}  \def\rrangle {\rangle\!\rangle}
\def\darg{\ \cdot\ ,\ \cdot\ } \def\ddarg{\, \cdot\ ,\, \cdot\ }
\def\bk{\mathbf{k}} \def\bl{\mathbf{l}} \def\br{\mathbf{r}}
\begin{document}

\begin{center}
 {\Large\bf
The semi-direct product of Poisson $G$-spaces }
\end{center}

\medskip
\begin{center}
I.~Marshall
\\

\bigskip

Faculty of Mathematics, Higher School of Economics\\
 National Research University\\
 Usacheva 6, Moscow, Russia\\
 e-mail: imarshall@hse.ru

\end{center}

 \setcounter{tocdepth}{2}
 
 \medskip
 \begin{abstract}
The notion of semi-direct product of Poisson $G$-spaces is applied to illuminate examples arising in spin-extensions of Ruijsenaars models.
 \end{abstract}


{\linespread{0.7}\tableofcontents}



\section{Introduction}\label{intro}
The motivation for this work was to clean up the presentation in \cite{FFM} of a structure emerging from analysis of a class of spin extended models of Ruijsenaars type. As usual with Ruijsenaars models, this work deals with Poisson Lie groups, and the spin systems are described by Hamiltonian reduction applied to Poisson Lie group symmetries on appropriate spaces. In the work of \cite{FFM}, \cite{AO} the ``spin degrees of freedom'' are represented by several copies of a complex vector space $V$, on which there acts either the Poisson Lie group $\GL(V)$ 
or the unitary group. As was explained in \cite{FFM}, for the unitary case the Poisson structure on $V$ should be one discovered by Zakrzewski, and described in \cite{Z}. A transformation---referred to as the \textit{half-dressing transformation}---of the space consisting of several copies of $V$ was useful for implementing the reduction. Upon closer scrutiny, this transformation of $V^d$ to itself produces a nice Poisson structure. 
Following this insight, it was natural to compare with the formulae arising in \cite{AO}, whereupon it was found that there too, half-dressing is a useful device for describing the Poisson bracket on the space of spin variables.

Suppose that we have an ordered collection of vectors $v_1,v_2,\dots,v_d$ in $\CC^n$. To each $v_k$ we associate the positive definite hermitian matrix $\Phi_k=1+v_kv_k^\dag$,\footnote{The symbol $\dag$ denotes  `complex conjugate transpose'} which in turn can be uniquely represented in the form $\Phi_k = b_kb_k^\dag$, with $b_k$ an upper triangular matrix having real, strictly positive elements on the diagonal. Thus we obtain a map $(v_1,\dots,v_d)\mapsto(b_1(v_1),\dots,b_d(v_d))$, and we are interested in the product $b=b_1\dots b_d$. This matrix may be represented by the positive definite matrix $\Phi = bb^\dag$, and it is not hard to see that $\Phi$ is naturally expressed in the form
\[
\Phi = 1 + \tilde v_1\tilde v_1^\dag + \dots + \tilde v_d\tilde v_d^\dag
\]
and we call the vectors $(\tilde v_1,\dots,\tilde v_d)$ the half-dressing of the vectors $(v_1,\dots,v_d)$.
It is of interest to look at Poisson properties of the half-dressing map.

In the present article, the half-dressing formula is interpreted in the Poisson context via the notion of \textit{semidirect product of Poisson spaces}, which is available when a Poisson symmetry is present, 
by means of which the formulae in \cite{FFM} and in \cite{AO} arise as particular examples.

In order to make the article self-contained, Section \ref{PLGintro} provides background information on the subject of Poisson Lie groups.
The main idea is explained in Section \ref{semidirectsec}.
In the remaining sections details of the two examples are fleshed out: the real $\U(n)$ symmetry example corresponding to \cite{FFM}, and the holomorphic $\GL(n,\CC)$ example corresponding to \cite{AO}. Appendix \ref{zakreg} is a discussion of the relation to \cite{Z}, and Appendix \ref{sympstruct} deals with computation of the symplectic form corresponding to the Poisson structure taken from that source.

It was drawn to my attention that aspects of the current work can be found in other settings.
In consequence, and without dwelling on them too much, I will add the following remarks.

(a) What, on a whim, I opted to call `semi-direct product Poisson structure', belongs to the larger family given the name `\textit{mixed product Poisson structures}' by Lu and Mouquin in \cite{LM}, and coinciding with what Zakrzewski in \cite{Z} called `\textit{crossed product Poisson structures}' . The semi-direct Poisson structure appears to be a general case of an example in Section 7 of \cite{LM}. 

(b) The results described in the present article have connections with \textit{multiplicative quiver varieties}---for example a formula reminiscent of the momentum map defined in \eqref{holmomGH} can be found in \cite{GJS} (equation (5.1))---and they would be associated to the simplest nontrivial quiver $(\bullet\longrightarrow\bullet)$.

\subsection{Acknowledgements}
I would like to thank my coauthors, Maxime Fairon and Laszlo Feher, of the article from which the ideas presented here originated.
Thanks are also due for the unprecedentedly helpful contribution from the referee who endorsed the publication of this article in Journal of Geometry and Physics, and who not only corrected some of my foolish mistakes, but made many suggestions for improving the content. This rare generosity was greatly appreciated.



%
%

\section{Overview of Poisson Lie groups}\label{PLGintro}

This section is a summary of several properties of Poisson Lie groups. 
Proofs and further explanations can all be found in the original article of Semenov-Tian-Shansky \cite{STS1}. More recent accounts are \cite{STS2}, \cite{ES}.

\medskip
In general, $G$ will denote a Lie group (real or complex), and its Lie algebra will be written $Lie(G)=\g$. 

\subsection{Definition of a Poisson Lie group and the r-matrix}

\begin{definition}
A Poisson Lie Group $G$ is a Lie group together with a Poisson structure, with respect to which the multiplication law in $G$ defines a Poisson map from $G\times G$ to $G$. That is, if, for any $F,H\in Fun(G)$,
\be
\{F(\,\cdot\,g_2),H(\,\cdot\,g_2)\}(g_1) + \{F(g_1\,\cdot\,),H(g_1\,\cdot\,)\}(g_2) = \{F,H\}(g_1g_2)\qquad\forall g_1,g_2\in G,
\ee
with the obvious notation $F(\,\cdot\,g_2)(g) = F(gg_2)$ and $F(g_1\,\cdot\,)(g)=F(g_1g)$. Here `$Fun(G)$' stands for $\cinf(G)$ if $G$ is real, or $Hol(G)$ (holomorphic functions on $G$)  if $G$ is complex.
\end{definition}
The Poisson structure on a Poisson Lie group $G$ is necessarily zero at the identity element $e\in G$. Hence we may linearise  the Poisson bracket at the identity to obtain a Lie bracket on the dual space $\g^*$ of the Lie algebra $\g=Lie(G)$: denote this bracket $[\darg]_*$.

Often we make the convenient assumption that $\g$ have a non-degenerate, invariant inner-product $\langle\darg\rangle$, and use it to identify $\g$ with its dual $\g^*$. Depending on the context, the same notation is then used for pairings between dual vector spaces, but this will be obvious from the context, and should not cause confusion. 
\begin{definition}
Suppose that $\g$ has a non-degenerate, invariant inner-product $\langle\darg\rangle$. Then the canonical 3-form on $\g$ is
\be
\phi(X,Y,Z) = \langle X,[Y,Z]\rangle,
\ee
and an element $R\in\g\wedge\g$ is said to be a \textit{factorizable r-matrix} if it satisfies the modified classical Yang-Baxter equation,
\be\label{modYB}
\g\wedge\g\wedge\g\owns [R,R] + \phi=0.
\ee
\end{definition}
It is convenient to write the Yang-Baxter equation via the identification of $\g$ with $\g^*$, so that $R\in End(\g)$, and the skew-symmetry of $R$ becomes $R^t=-R$. Then the Yang-Baxter condition is written
\be
\langle X, [RY,RZ]\rangle + c.p. + \langle X,[Y,Z]\rangle =0\qquad\forall X,Y,Z\in\g,
\ee
or, equivalently,
\be
[RX,RY] - R\bigl([RX,Y]+[X,RY]\bigr) + [X,Y]=0 \qquad\forall X,Y\in\g.
\ee
In what follows the same symbol $R$ will be used indiscriminately for the element in $\g\wedge\g$ and for the corresponding element in $End(\g)$, in the expectation that the context will make all meanings clear.

\begin{remark}
In the present article, whenever we speak of an r-matrix on a Lie algebra $\g$, it will be assumed that $\g$ has a non-degenerate, invariant inner product, and the r-matrix is assumed to be factorizable.
\end{remark}

The following proposition follows directly from the Yang-Baxter condition.
\begin{proposition}
The two subspaces 
\[
\g_\pm=Im(R\pm I\!d)
\]
are both subalgebras. 
\end{proposition}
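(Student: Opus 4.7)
The plan is to show directly that for arbitrary $A,B\in \g_\pm$, the bracket $[A,B]$ is again in the image of $R\pm I\!d$. Since every element of $\g_\pm$ is of the form $(R\pm I\!d)X$ for some $X\in\g$, it suffices to check that $[(R\pm I\!d)X,(R\pm I\!d)Y]$ lies in $\mathrm{Im}(R\pm I\!d)$ for every $X,Y\in\g$.

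First I would simply expand bilinearly:
\[
[(R\pm I\!d)X,(R\pm I\!d)Y] = [RX,RY] \pm [RX,Y] \pm [X,RY] + [X,Y].
\]
The modified Yang--Baxter equation \eqref{modYB}, rewritten in the operator form given in the excerpt, says exactly that
\[
[RX,RY] + [X,Y] = R\bigl([RX,Y] + [X,RY]\bigr),
\]
so after substituting this into the expansion the right-hand side collapses to
\[
R\bigl([RX,Y]+[X,RY]\bigr) \pm \bigl([RX,Y]+[X,RY]\bigr) = (R\pm I\!d)\bigl([RX,Y]+[X,RY]\bigr),
\]
which manifestly lies in $\mathrm{Im}(R\pm I\!d)=\g_\pm$. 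This proves that $\g_\pm$ are closed under the bracket, hence are subalgebras.

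There is no real obstacle here; the only ``trick'' is recognising that the mCYBE is precisely the identity needed to eliminate the $[RX,RY]$ and $[X,Y]$ terms in favour of something to which one can factor out $R\pm I\!d$. It is worth noting as a remark that the same computation gives an explicit formula for the induced Lie bracket on $\g_\pm$: if $A=(R\pm I\!d)X$ and $B=(R\pm I\!d)Y$, one reads off $[A,B]=(R\pm I\!d)\bigl([RX,Y]+[X,RY]\bigr)$, a fact that will presumably be used later when building the double of $\g$ out of $\g_+$ and $\g_-$.
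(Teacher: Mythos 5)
Your proof is correct and is exactly the argument the paper intends: the paper omits the computation, stating only that the proposition ``follows directly from the Yang--Baxter condition,'' and your expansion of $[(R\pm I\!d)X,(R\pm I\!d)Y]$ combined with the operator form of the mCYBE is precisely that direct verification.
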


It is convenient to adopt the convention, for any $X\in\g$,
\be
X = X_+ - X_-,\qquad\hbox{with }\quad  X_{\pm}=\frac12(RX\pm X),  \quad X_{\pm}\in\g_\pm,
\ee
so that
\be
RX = X_+ + X_-.
\ee
We will denote by $G_\pm$ the subgroups of $G$ for which $Lie(G_\pm)=\g_\pm$.

A classical r-matrix can be used to define a Poisson bracket on $G$. First of all, we need to define the left- and right-derivatives of a function on $G$.

\begin{definition}
For $F\in Fun(G)$ the left-derivative $DF:G\to\g^*$ and the right-derivative $D'F:G\to\g^*$ are given by
\be
\dt F(e^{tX}ge^{tY}) = \langle D_gF,X\rangle + \langle D'_gF,Y\rangle\qquad\forall X,Y\in\g.
\ee
\end{definition}

As can easily be checked,
\be
D_gF = Ad^*_g D_g'F.
\ee

\begin{definition}
The \textit{Sklyanin bracket} $\{\,\cdot\,,\,\cdot\,\} : \cinf(G)\wedge\cinf(G)\to\cinf(G)$ is defined by
\be
\{F,H\} = \langle DF,R(DH)\rangle - \langle D'F,R(D'H)\rangle.
\ee
\end{definition}

\begin{proposition}\label{Gpbdef}
The Sklyanin bracket is a Poisson bracket; that is, it satisfies the Jacobi identity. Moreover, with the Sklyanin bracket, $G$ is a Poisson Lie group.
\end{proposition}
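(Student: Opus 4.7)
The plan is to verify in turn the Poisson axioms and the multiplicativity property of the bracket. Bilinearity and the Leibniz rule are built into the expression $\{F,H\} = \langle DF, R(DH)\rangle - \langle D'F, R(D'H)\rangle$, since $D$ and $D'$ are first-order differential operators; skew-symmetry of $\{\darg\}$ is the skew-symmetry $R^t = -R$. So the real content is the Jacobi identity and multiplicativity.

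For Jacobi, I would pass to the language of bivector fields. Write $R^L$ and $R^R$ for the left- and right-invariant extensions of $R \in \g\wedge\g$ to $G$. A short unpacking of the definitions of $D$ and $D'$ shows that the Sklyanin bracket is the one associated to the bivector
\[
\pi = R^L - R^R.
\]
Jacobi is equivalent to the vanishing of the Schouten bracket $[\pi,\pi]$. Since left-invariant and right-invariant vector fields commute, the cross terms satisfy $[R^L, R^R]=0$, leaving
\[
[\pi,\pi] = [R^L, R^L] + [R^R, R^R].
\]
Now $[R^L, R^L]$ is the left-invariant 3-vector whose value at $e$ is the algebraic Schouten bracket $[R,R]_{\g}\in\wedge^3\g$, which by the modified Yang-Baxter equation \eqref{modYB} equals $-\phi$. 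Computing $[R^R, R^R]$ the same way, each internal Lie bracket acquires an extra minus sign because right-invariant vector fields carry the opposite Lie bracket, so its value at $e$ is $+\phi$. Bi-invariance of $\phi$, which follows from $\mathrm{Ad}$-invariance of $\langle\darg\rangle$, then forces $\phi^L = \phi^R$, whence $[\pi,\pi] = -\phi^L + \phi^R = 0$.

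For multiplicativity, I would check the equivalent geometric condition that $\pi$ is a multiplicative bivector,
\[
\pi(g_1 g_2) = (L_{g_1})_* \pi(g_2) + (R_{g_2})_* \pi(g_1),
\]
and translate this back into the cocycle identity in the definition of a Poisson Lie group. Substituting $\pi = R^L - R^R$, using $R^L(g) = (L_g)_* R$, $R^R(g) = (R_g)_* R$, and the commutativity $(L_{g_1})_*(R_{g_2})_* = (R_{g_2})_*(L_{g_1})_*$, both sides collapse to $(L_{g_1 g_2})_* R - (R_{g_1 g_2})_* R$.

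The main obstacle is the clean identification of $[R,R]_\g$ with the classical Yang-Baxter expression $\langle X,[RY,RZ]\rangle + \mathrm{c.p.}$: this bookkeeping, together with the sign flip produced by right-invariance, is where care must be taken. Once it is in place, the proof of Jacobi is essentially a single line, with all the structural content concentrated in the bi-invariance of the Cartan 3-form $\phi$, and multiplicativity is an immediate consequence of the commutativity of left and right translations.
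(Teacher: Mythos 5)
The paper offers no proof of this proposition: it is quoted as standard background, with the reader referred to Semenov-Tian-Shansky's original article \cite{STS1} (and \cite{STS2}, \cite{ES}), so there is no in-paper argument to compare against. Your proof is the standard one from that literature and is correct: the cross term $[R^L,R^R]$ vanishes, $[R^R,R^R]$ picks up a sign relative to $[R^L,R^L]$ because right-invariant fields realise the opposite Lie bracket, so $[\pi,\pi]=[R,R]^L-[R,R]^R=-\phi^L+\phi^R$, which vanishes by \eqref{modYB} together with the $\mathrm{Ad}$-invariance of $\phi$; multiplicativity then follows from the commutation of left and right translations exactly as you say. The only quibble is a global sign: with the paper's conventions the left-derivative $D$ pairs with infinitesimal \emph{left} translations, i.e.\ right-invariant vector fields, so the Sklyanin bracket corresponds to $\pi=R^R-R^L$ rather than $R^L-R^R$; this overall sign is immaterial for both the Jacobi identity and multiplicativity.
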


\begin{definition}
Suppose that $G$ is a Poisson Lie group, and let $A\subset G$ be a subgroup in $G$. $A$ is said to be a \textit{Poisson subgroup} if $A$ is also a Poisson subspace in $G$.
\end{definition}

\begin{remark}
A Poisson subgroup of a Poisson Lie group $G$ is itself a Poisson Lie group, with Poisson structure inherited from $G$.
\end{remark}

\begin{remark}
Suppose that $\g$ has an r-matrix. It is not necessarily the case that $\g_+\cap\g_-=\{0\}$: if this condition does hold then $\g$ can be decomposed as the vector space sum $\g=\g_+ + \g_-$, and, with respect to this decomposition, $R$ can be represented as the difference, $R = P_+ - P_-$, of projectors $P_{\pm}= \half(R\pm1)$, and the groups $\exp(\g_\pm)$ are Poisson subgroups.
\end{remark}

\subsection{The Drinfeld double of a Poisson Lie group}
In general, the Lie algebra of a Poisson Lie group may have no non-degenerate, invariant inner product, and is not required to have an r-matrix.  However, as we observe in this subsection, a Poisson Lie group $G$ can always be enlarged to a larger Poisson Lie group $D$ on which the Poisson structure is the Sklyanin bracket defined by an r-matrix, and inside which $G$ is a Poisson Lie subgroup.

\begin{definition}
Let $(G,\{\darg\})$ be a Poisson Lie group. The corresponding \textit{Drinfeld double} Lie algebra is the vector space direct sum $\g\oplus\g^*$, with Lie bracket (uniquely) defined by the properties\\
(i) $(\g,[\darg])$ and $(\g^*,[\ \cdot\ ,\ \cdot\ ]_*)$ are subalgebras,\\
(ii) the natural inner product on $\g\oplus\g^*$ is invariant; that is
\be
\langle [X,Y],Z\rangle = \langle X,[Y,Z]\rangle,\quad\forall X,Y,Z\in\g\oplus\g^*,
\ee
where, for $X=(x,\xi)$, $Y=(y,\eta)$,
\be
\langle X,Y\rangle = \xi(y)+\eta(x).
\ee
\end{definition}
Let $D$ be any connected Lie group with Lie algebra $\g\oplus\g^*$ such that the inclusion $\g\to\g\oplus\g^*$ integrates to a Lie group homomorphism $G\to D$, and take $G^*$ to be the connected Lie subgroup of $D$ with Lie algebra $\g^*$:
$G^*$ is called the `dual group', and $D$ is called the Drinfeld double group. The structures on $G^*$ and $D$ originate from $G$ with its Poisson structure as well as its group structure.\footnote{In general, given a Poisson Lie group $G$, there is no canonical dual group $G^*$, nor Drinfeld double $D$ of $G$. There may be several choices of $D$ up to covering, and there is no canonical choice of $G^*$ unless it is taken to be the connected and simply connected one. A detailed discussion can be found in \cite{LuYak}.}
Alternatively, we might view $D$ as the seed object, from which all other groups and spaces are derived: the natural pairing on the Drinfeld double Lie algebra is always non-degenerate and invariant, and as it is the direct sum of the subalgebras $\g\oplus\{0\}$ and $\{0\}\oplus\g^*$ it has a skew-symmetric r-matrix, and the Sklyanin Poisson structure defined by this r-matrix makes $D$ into a Poisson Lie group, in which $G$ and $G^*$ are Poisson subgroups.\footnote{For this reason, there is no loss of generality in approaching the study of Poisson Lie groups via the use of skew-symmetric r-matrices.} 

\subsubsection{The factorizable case}
\begin{definition}
Suppose that we have a Poisson Lie group $G$, with a non-degenerate invariant inner product on $\g$ and a skew-symmetric r-matrix $R\in\g\wedge\g$, satisfying \eqref{modYB}. The \textit{double} of $G$ is $G\times G$, and the \textit{double Lie algebra} is $\fd :=Lie(D)=\g\oplus\g$. (This an example of a Drinfeld double.) We introduce 
the inner-product $\llangle\darg\rrangle$ on $\fd$ defined by
\be\label{doubleinnprod}
\llangle (X_1,Y_1),(X_2,Y_2)\rrangle = \langle X_1,X_2\rangle - \langle Y_1,Y_2\rangle.
\ee
\end{definition}

The double Lie algebra can be decomposed as the sum of two subalgebras
\be
\g\oplus\g = \g^\delta \oplus \g_R.
\ee
Here $\g^\delta\cong\g$ is the diagonal subalgebra
\be
\g^\delta =\{(X,X)\ |\ X\in\g\},
\ee
and the other subalgebra,
\be
\ba
\g_R &= \{(X,Y)\in\g \oplus \g \ |\ P_-(X) = P_+(Y) \}\\
&=
\{(P_+(X),P_-(X))\ |\ X\in\g\},
\ea
\ee
is the image of the map $P_+\oplus P_- :$ $\g\to\g\oplus\g$. This decomposition gives rise to the double r-matrix on $\g\oplus\g$
\be
\hat R = P_{\g^\delta} - P_{\g_R}.
\ee
Several important properties of the double with the above inner product structure and r-matrix are summarised in the following
\begin{proposition}
The inner-product $\llangle\darg\rrangle$ on $\fd$ is invariant and non-degenerate. The double r-matrix $\hat R$ on the double of $\g$ is skew-symmetric with respect to $\llangle\darg\rrangle$, and is represented in terms of the r-matrix $R$ on $\g$ by the formula
\be
\hat R(X,Y) = \bigl( R(Y-X)\, , \, R(Y-X) \bigr) + (Y,X).
\ee
\end{proposition}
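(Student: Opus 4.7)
The plan is to dispatch the three claims in the order: (i) non-degeneracy and invariance of $\llangle\darg\rrangle$; (ii) the closed-form expression for $\hat R$; (iii) skew-symmetry of $\hat R$. Non-degeneracy is immediate from that of $\langle\darg\rangle$: if $\llangle(X_1,Y_1),(X_2,Y_2)\rrangle = 0$ for every $(X_2,Y_2)$, setting $Y_2=0$ forces $X_1=0$ and setting $X_2=0$ forces $Y_1=0$. Invariance is equally direct, since the bracket on $\fd=\g\oplus\g$ is componentwise and each copy of $\langle\darg\rangle$ on $\g$ is invariant, so the two pieces rearrange separately.

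For the formula, I would solve the decomposition $(X,Y) = (Z,Z) + (U_+,U_-)$ explicitly, where $U_\pm=\half(RU\pm U)$ for some $U\in\g$. The conditions $X = Z + U_+$ and $Y = Z + U_-$, together with $U_+ - U_- = U$ and $U_+ + U_- = RU$, force $U = X-Y$ and $Z = \half(X+Y) - \half R(X-Y)$. The existence and uniqueness of this decomposition simultaneously establishes $\fd = \g^\delta \oplus \g_R$. Substituting into $\hat R(X,Y) = (Z,Z) - (U_+,U_-)$ and simplifying yields $Z - U_+ = Y + R(Y-X)$ and $Z - U_- = X + R(Y-X)$, which is the asserted formula.

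For skew-symmetry, it suffices to check that both $\g^\delta$ and $\g_R$ are isotropic with respect to $\llangle\darg\rrangle$. Isotropy of $\g^\delta$ is by inspection: $\llangle(X,X),(Y,Y)\rrangle = \langle X,Y\rangle - \langle X,Y\rangle = 0$. For $\g_R$, expanding $\llangle(X_+,X_-),(Y_+,Y_-)\rrangle$ with $X_\pm = \half(RX\pm X)$ collapses to $\half\bigl(\langle RX,Y\rangle + \langle X,RY\rangle\bigr)$, which vanishes by $R^t = -R$. Given both Lagrangian summands, writing $v=a+b$ and $v'=a'+b'$ with $a,a'\in\g^\delta$, $b,b'\in\g_R$, one obtains
\[
\llangle\hat R\,v, v'\rrangle + \llangle v, \hat R\,v'\rrangle = 2\llangle a,a'\rrangle - 2\llangle b,b'\rrangle = 0,
\]
which is the required skew-symmetry. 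The only step that uses a structural feature of $R$ beyond linear algebra is the isotropy of $\g_R$, and it invokes only skew-symmetry of $R$ — not the modified Yang--Baxter equation — so no serious obstacle is anticipated.
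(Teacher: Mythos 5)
Your argument is correct and complete. Note that the paper itself offers no proof of this proposition: it is stated as a summary of standard properties of the double, with proofs deferred to the references cited at the start of Section 2, so there is no in-paper argument to compare against. Your verification is the standard one — non-degeneracy and invariance componentwise, the explicit solution of $(X,Y)=(Z,Z)+(U_+,U_-)$ giving $U=X-Y$ and hence the closed form of $\hat R$, and skew-symmetry from the isotropy of the two complementary summands $\g^\delta$ and $\g_R$ (the latter using only $R^t=-R$, as you say) — and all the computations check out.
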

Just as $\langle\darg\rangle$ is used without comment to identify $\g$ with $\g^*$, by means of which the left- and right-derivatives of smooth functions on $G$ are viewed as being situated in $\g$, and by means of which $R$ is viewed indiscriminately as an element of $End(\g)$ or of $\g\wedge\g$; so is $\llangle\darg\rrangle$ used to identify $\fd\sim\fd^*$, thence to represent in $\fd$ left- and right-derivatives of smooth functions and to view $\hat R$ indiscriminately as an element of $End(\fd)$ or of $\fd\wedge\fd$.

Comparing the last collection of information with the general theory, $\g_R$ is isomorphic to the dual Lie algebra $(\g^*, [\darg]_*)$ and $\fd$ is isomorphic to the Drinfeld double Lie algebra.

\subsection{The Lu-Weinstein example}\label{LuWeinstein}
 
This subsection deals with a special case of the Lu-Weinstein example, from  \cite{LW}. Here, the base field is $\RR$ and all functions are real-valued functions.
Consider $\g=gl(n,\CC)$, viewed as a real Lie algebra. The inner product on $\g$ given by
\be
\langle X,Y\rangle = Im\,\tr(XY)
\ee
is invariant and non-degenerate. Let $\k$ and $\b$ be the two subalgebras in $\g$:
\be
\ba
&\k = u(n) = \{X\in\g\ | \ X+X^\dag=0\},\\
&\b = \{\hbox{upper-triangular matrices with real entries on the diagonal}\}.
\ea
\ee
It is easy to check that, as vector spaces, $\g=\k + \b$; any element in $\g$ can be written uniquely as the sum of elements in $\k$ and in $\b$. Also, with respect to $\langle\darg\rangle$, $\k^\perp=\k$ and $\b^\perp=\b$. It follows that 
\be
R = P_\k - P_\b
\ee
defines an r-matrix on $\g$, skew-symmetric with respect to $\langle\darg\rangle$. Groups $G$, $K$, and $B$ corresponding to $\g$, $\k$, and $\b$ are
\[
\ba
G &= \GL(n,\CC),\\
K &= \U(n),\\
B &= \{\hbox{upper triangular matrices with real, positive entries on the diagonal}\};
\ea
\]
$G$ is the Drinfeld double of $K$, with $B$ identified with $K^*$.

Going through the definitions, with the identification $\g^*\cong\g$ inducing $\k^*\cong\b$ and $\b^*\cong\k$, by means of the decomposition $G\owns g =kb^{-1}$, we get the Poisson brackets on $K$ and $B$ inherited from the Sklyanin PB on $G$:
\be\label{KPb}
\{F_1,F_2\}(k) = Im\,\tr \bigl(D_kF_1\, kD'_kF_2k^{-1}\bigr) ,\qquad F_1,F_2\in\cinf(K,\RR),
\ee

\be
\{H_1,H_2\}(b) = - Im\,\tr\bigl( D_bH_1\, bD'_bH_2b^{-1}\bigr) ,\qquad H_1,H_2\in\cinf(B,\RR).
\ee
Here, the left- and right-derivatives are given by
\be
\ba
F\in\cinf(K)\Rightarrow &D_kF,D'_kF\in\k^*\cong\b^\perp=\b:\\ 
&\dt F(e^{t\xi}ke^{t\eta}) = Im\,\tr\bigl(\xi D_kF + \eta D'_kF\bigr)\qquad\forall\xi,\eta\in\k,\\
H\in\cinf(B)\Rightarrow &D_bH,D'_bH\in\b^*\cong\k^\perp=\k:\\ 
&\dt H(e^{t\alpha}be^{t\beta}) = Im\,\tr\bigl(\alpha D_bH + \beta D'_bH\bigr)\qquad\forall\alpha,\beta\in\b.
\ea
\ee
An alternative representation of $B$ comes from the invertible map $B\to \hermp = \{$positive definite, hermitian matrices$\}$:
\be\label{identifyBP}
B\owns b\mapsto bb^\dag=\Lm\in \hermp,
\ee
in terms of which the Poisson bracket is
\be
\{F,H\}(\Lambda) = 8 Im\,\tr\bigl(\Lm X\,(\Lm Y)_\k\Bigr),
\ee
where the skew-hermitian matrices $X=d_\Lm F$ and $Y=d_\Lm H$ are the derivatives of $F$ and $H$, given by
\be
\dt F(\Lm+tA) = Im\,\tr\bigl(A\,d_\Lm F\bigr)\qquad\forall\ \hbox{hermitian}\ A,
\ee
and the subscript `$\k$' indicates projection onto $\k$ parallel to $\b$.

Equivalent to the above formulae for Poisson brackets, are those for Hamiltonian vector fields:
\be
\ba
\XX_F(k) &\sim \exp\bigl(t[k D_k'Fk^{-1}]_\k\bigr)k\qquad F\in\cinf(K,\RR),\\
\XX_F(b) & \sim \exp\bigl(-t[b D_b'Fb^{-1}]_\b\bigr)b \qquad\ F\in\cinf(B,\RR), \\ 
\XX_F(\Lm) &= 4 \left[\bigl(\Lm d_\Lm F)_\k\,,\,\Lm\right]\qquad\qquad F\in\cinf(\hermp,\RR).
\ea
\ee

\subsection{A complex (holomorphic) example}\label{holex}

The example in this subsection may be generalised to any reductive complex Lie algebra; see for example \cite{ES}, \cite{Yred}, \cite{H}.
Here, the base field is $\CC$ and all functions are holomorphic functions.
Viewing $\g=gl(n,\CC)$ as a complex Lie algebra, we make use of the invariant non-degenerate inner product on $\g$,
\be
\langle X,Y\rangle = \tr(XY).
\ee
Let $\n_\pm$, $\h$ be the subalgebras in $\g$:
\[
\ba
\n_+ &= \{\hbox{ strictly upper triangular matrices }\},\\
\n_- &= \{\hbox{ strictly lower triangular matrices }\},\\
\h &= \{\hbox{ diagonal matrices }\}.
\ea
\]
Denote by $N_\pm, H$, the corresponding subgroups in $\GL(n,\CC)$.
Clearly, $\g = \n_+ + \h + \n_-$; that is, any element of $\g$ may be written uniquely as the sum of elements in $\n_+$, in $\h$, and in $\n_-$. It may be checked that, using the projectors $P_U$, $P_L$, $P_\Delta$, with obvious meanings ($U$ stands for `upper', $L$ for `lower' and $\Delta$ for `diagonal'), the linear map
\be
R = P_U - P_L
\ee
is an r-matrix on $\g$, skew-symmetric with respect to $\langle\darg\rangle$. We have
\be
\g\cong \g^\delta = \{(X,X)\,|\, X\in\g\}\subset\g\oplus\g,
\ee
and
\be
\g_R = \{ (x+p,y-p)\, | \, x\in\n_+,\ y\in\n_-,\ p\in\h\}.
\ee

In a dense open subset, the Drinfeld double
\be
D= G\times G \ \  = \{(a,b)\,|\, a,b\in G\}
\ee
can be represented as the product of the two subgroups
\be
D\supset G^\delta = \{(g,g)\,|\, g\in G\}
\ee
and
\be
D\supset G_R= \{ (x\Gamma,y\Gamma^{-1})\,|\, x\in N_+, \ y\in N_-, \  \Gamma\in H\};
\ee
thus, for a dense open subset $D^o\subset D$, we may write
\be\label{DrepGG}
D^o \owns(a,b) = (g_+,g_-)(x,x),
\ee
with 
\be
g:= g_+g_-^{-1} = ab^{-1}
\ee
and
\be
x = (ab^{-1})_+^{-1} a = (ab^{-1})_-^{-1}b.
\ee
It should be noted that the representation \eqref{DrepGG} is not unique, since $G_R\cap G^\delta$ is non-trivial; indeed, it consists of the pairs $(\Delta,\Delta)$ with $\Delta$ a diagonal matrix all of whose entries are in $\{1,-1\}$. Another way to say this is that $G^\delta\times G_R\to D$ is a $(\ZZ_2{})^n$ covering. This causes no problem as the product map $G^\delta\times G_R\to D^o$ is a local diffeomorphism.

The following is self-evident, but it is needed later on, so it is worth having it stated emphatically here:

\begin{proposition}\label{GRactioninG}
The map $(g_+,g_-)\mapsto g_+g_-^{-1}$ from $G_R$ to $G$ represents the left-mutiplication action of $G_R$ on itself in the form
\be
g\cdot h = g_+hg_-^{-1}
\ee

\end{proposition}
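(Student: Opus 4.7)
The proof will be a direct one-line computation exploiting associativity in $G$, and the main work is just to unpack what the statement means. I will show that the map $\pi : G_R \to G$ defined by $\pi(g_+, g_-) = g_+ g_-^{-1}$ intertwines the left-regular action of $G_R$ on itself with the action of $G_R$ on (a dense open subset of) $G$ by $h \mapsto g_+ h g_-^{-1}$. Since the proposition is essentially a tautology, there is no real obstacle; the only subtle point is the $(\ZZ_2)^n$ covering issue, which I address at the end.

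The plan is as follows. Take two elements $(g_+, g_-)$ and $(h_+, h_-)$ of $G_R$. Their product in $G_R$ (with respect to the group structure inherited from $D = G \times G$) is the componentwise product $(g_+ h_+, g_- h_-)$, which still lies in $G_R$ because $G_R$ is a subgroup of $D$. Now apply $\pi$:
\[
\pi(g_+ h_+, g_- h_-) = g_+ h_+ (g_- h_-)^{-1} = g_+ h_+ h_-^{-1} g_-^{-1} = g_+\,\pi(h_+, h_-)\,g_-^{-1}.
\]
Setting $h = \pi(h_+, h_-) \in G$ and identifying the element $g \in G_R$ with its decomposition $(g_+, g_-)$, this reads $\pi\bigl( g \cdot (h_+, h_-)\bigr) = g_+\, h\, g_-^{-1}$, which is exactly the claimed form of the action.

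For the well-definedness point: the map $\pi$ is only a local diffeomorphism onto the open Gauss cell, with fibers forming a $(\ZZ_2)^n$-covering (coming from $\Gamma \mapsto \epsilon \Gamma$, $\epsilon \in H$ with entries in $\{1,-1\}$). This does not affect the statement because the formula $g \cdot h = g_+ h g_-^{-1}$ is read with $g$ a specific element of $G_R$ (and hence with a definite choice of the pair $(g_+, g_-)$); the element $h$ on which one acts is an element of the image of $\pi$, and the computation above shows the action descends to $\pi(G_R)$ under left multiplication. This completes the plan.
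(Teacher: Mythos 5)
Your computation is correct and is exactly the one-line verification the paper treats as self-evident (the paper offers no written proof, remarking only that the statement is obvious): componentwise multiplication in $G_R\subset G\times G$ followed by $(g_+,g_-)\mapsto g_+g_-^{-1}$ immediately yields $g_+hg_-^{-1}$. Your added remark on the $(\ZZ_2)^n$ covering is a sensible clarification but not needed for the claim itself.
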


\subsection{Poisson actions}
\begin{definition}
Let $(M, \{\ddarg\}_M)$ be a Poisson space, let $(G, \{\ddarg\}_G)$ be a Poisson Lie group, and suppose that $G$ acts on $M$. The action is said to be a \textit{Poisson action} if the natural map $G\times M\to M$ defined by this action is Poisson; that is, if for all $F,H\in Fun(M)$
\be
\{F,H\}_M(g\cdot m) = \{F(\,\cdot\,m),H(\,\cdot\,m)\}_G(g) + \{F(g\,\cdot\,),H(g\,\cdot\,)\}_M(m).
\ee
\end{definition}
The notion of momentum map adapted to the Poisson Lie group context is due to Lu \cite{Lu}. Before stating it, we need to understand its component parts. The Lie algebra $\g$ is dual to the Lie algebra of the dual group; that is, $\g=\bigl(Lie(G^*)\bigr)^*$. An element of $\g$ defines a right-invariant one-form on $G^*$ as follows. For $X\in\g$, we must define the function $\langle\tilde X,v\rangle\in Fun(G^*)$, obtained by pairing the right-invariant one-form $\tilde X$ corresponding to $X$ with a vector-field $v\in Vect(G^*)$. Consider an arbitrary point $p\in G^*$, and find $\xi\in\g^*$ such that $T_pG^*\owns v(p)\sim \exp(t\xi)p$; then the value of the function $\langle\tilde X,v\rangle$ at the point $p$ is
\be
\langle\tilde X,v\rangle(p) = \langle X,\xi\rangle.
\ee

\begin{definition}
Suppose that $G$ is a Poisson Lie group, and $M$ is a Poisson space, with a Poisson action of $G$ on $M$. For any $X\in\g=Lie(G)$, denote by $\tilde X\in\Omega^1(G^*)$ the right-invariant one-form on $G^*$ defined by $X$, and denote by $X_M\in Vect(M)$ the infinitesimal action of $X$ on $M$. Denote by $P$ the Poisson tensor on $M$.
A map $J:M\to G^*$ is said to be a momentum map for the Poisson action of $G$ on $M$ if, for any $X\in\g$,
\be
P\bigl(\ \cdot\ , J^*(\tilde X) \bigr) = X_M .
\ee
\end{definition}
Pairing each side with the one-form $dF$, for an arbitrary function $F\in Fun(M)$, we obtain
\be
-\langle\tilde X,dJ\cdot \XX_F \rangle = -\bigl( J^*(\tilde X)\bigr)(\XX_F) = X_M(F).
\ee



\section{The semidirect product of Poisson $G$-spaces}\label{semidirectsec}

Suppose that $M$ and $N$ are Poisson spaces, and that $G$ is a Poisson Lie group, with dual $G^*$ and Drinfeld double $D$. Suppose that $D$ acts on $M$ and on $N$, and that both actions are Poisson actions: then the restrictions $G \circlearrowright M$, $G\circlearrowright N$, $G^*\circlearrowright M$, $G^*\circlearrowright N$ are also Poisson actions, as $G$ and $G^*$ are Poisson subgroups in $D$. 
 
\begin{definition}
Let $A$ be a $D$-space, and consider $F\in Fun(A)$. For any point $x\in A$ we define the \textit{action-derivatives} $D_xF\in\g^*=Lie(G^*)$ and $\nabla_xF\in\g=Lie(G)$ by
\be\label{actionderivatives}
\ba
&\dt F\bigl(e^{tX}\cdot x\bigr) = \langle D_xF,X\rangle\qquad\forall X\in\g,\\
&\dt F\bigl(e^{t\xi}\cdot x\bigr) = \langle\xi , \nabla_xF \rangle\qquad\forall \xi\in\g^*,
\ea
\ee
where angled brackets denote here the natural inner product on $\g\oplus\g^*=Lie(D)$.
\end{definition}

Assuming that the action of $G$ on $M$ has a momentum map $J:M\to G^*$, let us define the map $\S: M\times N\to M\times N$,
\be
\S : (m,n)\mapsto (m,J(m)n).
\ee
Clearly $\S$ is invertible, and it induces what we may call the semidirect Poisson structure on $M\times N$ described in the following
\begin{proposition}\label{semidirect}

For $f,h\in Fun(M)$ and $\varphi,\psi\in Fun(N)$, define the functions $F,H\in Fun(M\times N)$,
\be
\ba
F(m,n) &= f(m) + \varphi(n),\\
H(m,n) &= h(m) + \psi(n).
\ea
\ee

The semidirect Poisson structure on $M\times N$, obtained from the direct sum Poisson structure by applying the map $\S$, is given by
\be\label{semidirectformula}
\{F,H\}(m,n) = \{f,h\}_M(m) + \{\varphi,\psi\}_ N(n) + \langle D_mf,\nabla_n\psi\rangle - \langle D_mh,\nabla_n\varphi\rangle .
\ee
\end{proposition}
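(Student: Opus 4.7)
The plan is to exploit that $\S$ is a diffeomorphism, so the formula \eqref{semidirectformula}---and in particular the Jacobi identity for the semidirect bracket---need only be verified by direct computation of the pullback. Setting $\tilde F := F\circ\S$ and $\tilde H := H\circ\S$, so that $\tilde F(m,n') = f(m)+\varphi(J(m)\cdot n')$ and likewise for $\tilde H$, the characterising relation between the direct-product and semidirect brackets is
\[
\{\tilde F, \tilde H\}_{M\times N}(m,n') = \{F,H\}(m,J(m)\cdot n'),
\]
so computing the left-hand side and relabelling $n := J(m)\cdot n'$ at the end will yield \eqref{semidirectformula}. The direct-product bracket decomposes into the $M$-bracket at fixed $n'$ plus the $N$-bracket at fixed $m$.

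The $N$-bracket reduces to $\{\varphi\circ L_{J(m)},\psi\circ L_{J(m)}\}_N(n')$ because $f(m), h(m)$ are constants in $n$, and the Poisson-action property of $G^*\subset D$ acting on $N$ rewrites it as
\[
\{\varphi,\psi\}_N(J(m)\cdot n') \;-\; \{\varphi(\,\cdot\, n'),\psi(\,\cdot\, n')\}_{G^*}(J(m)).
\]
For the $M$-bracket, set $\alpha(m):=\varphi(J(m)\cdot n')$ and $\beta(m):=\psi(J(m)\cdot n')$; the expansion of $\{f+\alpha,h+\beta\}_M$ gives $\{f,h\}_M$, $\{\alpha,\beta\}_M$, and the two cross terms $\{f,\beta\}_M$ and $\{\alpha,h\}_M$. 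The term $\{\alpha,\beta\}_M(m)$ is handled by invoking the standard fact that a Lu momentum map $J:M\to G^*$ is itself a Poisson map, whence $\{\alpha,\beta\}_M(m) = \{\varphi(\,\cdot\, n'),\psi(\,\cdot\, n')\}_{G^*}(J(m))$, precisely cancelling the spurious $G^*$-bracket remainder from the $N$-side.

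The two cross terms are converted to action-derivative pairings using the defining property $P_M(\,\cdot\,,J^*\tilde X)=X_M$ of the momentum map. Writing $\beta = \tilde\psi_{n'}\circ J$ with $\tilde\psi_{n'}(g):=\psi(g\cdot n')$, the covector $d\tilde\psi_{n'}|_g$ agrees with the right-invariant one-form $\tilde X|_g$ for the unique $X\in\g$ satisfying $\dt\tilde\psi_{n'}(e^{t\xi}g) = \langle X,\xi\rangle$ for every $\xi\in\g^*$; comparing this with the definition \eqref{actionderivatives} of $\nabla$ applied at $g\cdot n'\in N$ identifies $X=\nabla_{J(m)\cdot n'}\psi$. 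Consequently $\XX_\beta|_m = (\nabla_{J(m)\cdot n'}\psi)_M|_m$ and
\[
\{f,\beta\}_M(m) = (\nabla_{J(m)\cdot n'}\psi)_M(f)|_m = \langle D_m f,\nabla_{J(m)\cdot n'}\psi\rangle,
\]
using \eqref{actionderivatives} again. The symmetric term $\{\alpha,h\}_M=-\{h,\alpha\}_M$ contributes $-\langle D_m h, \nabla_{J(m)\cdot n'}\varphi\rangle$, and assembling all surviving contributions and relabelling $n=J(m)\cdot n'$ reproduces \eqref{semidirectformula}. The principal obstacle is the cancellation just described: it crucially depends on $J$ being a Poisson map, which is the property that characterises Lu's group-valued momentum maps for Poisson actions of Poisson Lie groups and is what makes the whole semidirect construction consistent.
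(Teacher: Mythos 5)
Your proposal is correct and follows essentially the same route as the paper: expand the pulled-back direct-product bracket, use the Poisson-action property of $G^*$ on $N$ together with the fact that $J$ is a Poisson map to cancel the $\{\varphi(J(\,\cdot\,)n),\psi(J(\,\cdot\,)n)\}_M$ term against the $G^*$-bracket remainder, and convert the cross terms via the defining property of the momentum map. The only cosmetic difference is that you read that defining property as $\XX_\beta=(\nabla\psi)_M$ while the paper reads it dually as $dJ\cdot\XX_f\sim\exp(-tD_mf)J(m)$; these are the same identity.
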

\begin{proof}
(i) The defining property of the momentum map,
\[
\langle dJ\cdot\XX_f,\tilde X\rangle = - X_M\righthalfcup df\qquad\forall X\in\g,
\]
is equivalent to
\be
T_{J(m)}G^*\owns dJ\cdot\XX_f (m) \sim \exp(-tD_mf)J(m).
\ee
(ii) The Poisson action property of $G^*\circlearrowright N$ is
\be
\ba
\{\varphi(J(m)\,\cdot\ ), \psi(J(m)\,\cdot\ )\}_N(n) + \{\varphi(\ \cdot\, n), \psi(\ \cdot\, n)\}_{G^*}(J(m)) = \{\varphi,\psi\}_N(J(m)n)
\ea
\ee
and, as $J:M\to G^*$ is a Poisson map, we have
\be
\{\varphi(\ \cdot\, n), \psi(\ \cdot\, n)\}_{G^*}(J(m)) = \{\varphi( J(\,\cdot\,)n), \psi( J(\,\cdot\,) n)\}_M(m).
\ee
Putting these together, writing $\tilde n=J(m)n$, so that $\S^*F(m,n)=F(m,\tilde n)$, $\S^*H(m,n)=H(m,\tilde n)$, and temporarily using $\{\ ,\ \}_\S$ to denote the modified Poisson bracket, we have
\[
\ba
\{F,H\}_\S(m,\tilde n) &= \{\S^*F,\S^*H\}(m,n)\\
&=
\{f,h\}_M(m) + \{\varphi,\psi\}_N(\tilde n)\\ 
&\qquad+ \dt\varphi\bigl(\exp(-tD_mh)J(m)n\bigr) - \dt\psi\bigl(\exp(-tD_mf)J(m)n\bigr)\\
&=
\{f,h\}_M(m) + \{\varphi,\psi\}_N(\tilde n) +  \langle D_mf,\nabla_{\tilde n}\psi\rangle - \langle D_mh,\nabla_{\tilde n}\varphi\rangle.
\ea
\]

\end{proof}

Another ingredient is the possibility to modify the Poisson structure in such a way as to break the Poisson symmetry of the Drinfeld double whilst preserving that of the Poisson subgroup $G$; that is, breaking the Poisson symmetry of $G^*$.

\begin{proposition}
Let $G$ be a Poisson Lie group, let $M$ be a Poisson space with Poisson structure $P$, and suppose that $G$ has a Poisson action on $(M,P)$. Let $\Pi\in Vect(M) \wedge Vect(M)$ be a $G$-invariant Poisson tensor on $M$, and suppose that the Schouten bracket $[P,\Pi]$ is zero. Then $\hat P=P+\Pi$ is a Poisson structure, and the action of $G$ on $M$ will still be a Poisson action with respect to $\hat P$.
\end{proposition}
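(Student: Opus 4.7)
The plan is to handle the two claims independently. For the first---that $\hat P$ is a Poisson tensor---the Schouten--Nijenhuis bracket reduces everything to a one-line computation: by bilinearity and graded symmetry,
\[
[\hat P,\hat P] = [P,P] + 2[P,\Pi] + [\Pi,\Pi],
\]
and the three summands vanish respectively because $P$ is Poisson, by the hypothesis $[P,\Pi]=0$, and because $\Pi$ is Poisson.

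For the second claim I would adopt the bivector formulation of the Poisson action property. Let $\sigma:G\times M\to M$ denote the action and $\Pi_G$ the Sklyanin tensor on $G$. The statement that $\sigma$ is a Poisson map with respect to $(\Pi_G,\hat P)$ is equivalent to the identity $\sigma_*(\Pi_G\oplus \hat P)=\hat P$, where the bivector $\Pi_G\oplus\hat P$ on $G\times M$ is the product bivector, carrying no cross term between the $G$- and $M$-factors. Using $\hat P = P+\Pi$ and linearity of pushforward, this splits as
\[
\sigma_*(\Pi_G\oplus P) + \sigma_*(0\oplus\Pi) = \hat P.
\]
The first summand is exactly $P$ by the hypothesis that the action of $G$ on $(M,P)$ is Poisson. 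The second, evaluated at $(g,m)$, reduces to the pushforward of $\Pi|_m$ by the diffeomorphism $\sigma_g:M\to M$, because $\Pi$ has no component in the $G$-direction; by the $G$-invariance of $\Pi$, this equals $\Pi|_{g\cdot m}$. Adding gives $P+\Pi=\hat P$, as required.

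I do not expect a genuine obstacle: both steps are essentially bookkeeping once one adopts the Schouten/bivector framework. The point worth underlining---and perhaps the most instructive aspect of the proof---is that the two hypotheses on $\Pi$ play complementary and non-interchangeable roles: $[P,\Pi]=0$ is what secures the Jacobi identity for the sum, while the $G$-invariance is what preserves the Poisson-action property. Neither can be dropped, and the absence of a $G$-$M$ cross term in the product bivector on $G\times M$ is what lets the $G$-invariance of $\Pi$ act as a local condition at each fixed $g$.
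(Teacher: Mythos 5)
Your proof is correct, and it supplies precisely the bookkeeping that the paper omits: the paper simply declares this proposition ``self-evident'' and gives no argument. Your decomposition $[\hat P,\hat P]=[P,P]+2[P,\Pi]+[\Pi,\Pi]$ together with the pointwise splitting of $d\sigma_{(g,m)}$ into the $(\Pi_G\oplus P)$-part and the $(0\oplus\Pi)$-part (the latter handled by $G$-invariance via $(\sigma_g)_*\Pi=\Pi$) is exactly the standard justification the author had in mind.
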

The proof of this statement is self-evident. In general, if the Poisson action of $G$ results from restriction of the Poisson action of its double $D$, then it may be expected that the full $D$-symmetry is broken by the addition of $\Pi$, as there is no expectation that $\Pi$ be $G^*$-invariant.



\section{Examples of Poisson actions---following Zakrzewski}\label{examples}

Although the examples presented here can be found amongst the results of Zakrzewksi in \cite{Z}, they are presented from a slightly different, and hopefully illuminating, point of view. 

\subsection{The Poisson action of $\GL(n,\CC)$ on several copies of $\CC^n\times\CC^n$}\label{holss}
Let $V$ and $W$ be vector spaces over $\CC$. Suppose that $dim_\CC(V)=n$ and $dim_\CC(W)=d$. We look for a Poisson Lie covariant Poisson bracket on the space
\be
\ba
M &= Hom(W,V)\times Hom(V,W)\\
&=
(V\otimes W^*)\times(V\otimes W^*)^*.
\ea
\ee

Alternatively, after choosing bases in $V$ and $W$,
\be
M =
\{ m=(v,w^T)\,|\, v,w\in Mat_{n\times d}(\CC)\}.
\ee
The space $M$ will be treated as complex, so that all functions are complex-valued; we restrict however to holomorphic functions.

The Poisson Lie groups $G=\GL(V)$ and $H=\GL(W)$ act naturally on $M$:
\be
g\cdot (v,w^T) = (gv,w^Tg^{-1}),\quad\hbox{and}\quad h\cdot(v,w^T) = (vh^{-1}, hw^T)\qquad g\in G, h\in H.
\ee
This problem is most easily treated by extending to the action of the doubles of $G$ and $H$, and then restricting to $G$ and $H$ as Poisson subgroups. To this end we'll consider the actions of $G\times G$ and of $H\times H$ on $M$:
\be
(a,b)\cdot (v,w^T) = (av,w^Tb^{-1}),\ \hbox{and}\ (p,q)\cdot(v,w^T) = (vp^{-1}, qw^T)\quad a,b\in G,\ \  p,q\in H.
\ee

Thus we get maps from the doubles of the respective Lie algebras, $\g=Lie(G)= gl(V)$ and $\h=Lie(H)=gl(W)$, to $Vect(M)$: 
\[
\ba
&\g\oplus\g \owns (A,B)\mapsto ( A, B)_M\in Vect(M); \ \ 
T_mM\owns( A, B)_M(m)\sim (\exp(tA),\exp(tB))\cdot m,\\
&\h\oplus\h \owns (P,Q)\mapsto ( P, Q)_M\in Vect(M); \ \ 
T_mM\owns( P, Q)_M(m)\sim (\exp(tP),\exp(tQ))\cdot m,
\ea
\]
given explicitly by
\be
( A, B)_M(v,w^T) = (Av, -w^TB)
\ee
and
\be
( P, Q)_M(v,w^T) = (-vP,Qw^T).
\ee
We will use the invariant non-degenerate inner products on $\g$ and $\h$ given by the trace forms:
\be
\ba
\langle X_1,X_2\rangle = \tr(X_1X_2)\qquad X_1,X_2\in\g,\\
\langle A_1,A_2\rangle = \tr(A_1A_2)\qquad A_1,A_2\in\h.\\
\ea
\ee
Suppose that $R\in\g\wedge\g$ and $\rho\in\h\wedge\h$ are r-matrices on the respective Lie algebras, and identify them with maps in $End(\g)$ and $End(\h)$ by means of respective traces. With this convention, we may write the Yang-Baxter conditions on $R$ and $\rho$ either as
\be
\ba
\langle [R,R], (X,Y,Z)\rangle &= - \langle X, [Y,Z]\rangle \qquad\forall X,Y,Z\in\g,\\
\langle [\rho,\rho] , (A,B,C)\rangle &= - \langle A, [B,C]\rangle \qquad\forall A,B,C\in\h,
\ea
\ee
or as
\be
\ba
{}[RX,RY] - R\bigl([RX,Y]+[X,RY]\bigr) = -[X,Y],\\
[\rho A,\rho B] - \rho\bigl([\rho A,B] + [A,\rho B]\bigr) = -[A,B].
\ea
\ee
The r-matrices $R$ and $\rho$ on $\g$ and on $\h$ give rise to r-matrices $\hat R$ and $\hat\rho$ on the respective doubles:
\be
\ba
\hat R(A,B) &= (R(B-A)+B,R(B-A)+A),\\
\hat\rho(P,Q) &=(\rho(Q-P)+Q,\rho(Q-P)+P).
\ea
\ee
We define the contravariant two-tensors (bi-vectors) $\hat R_M,\hat\rho_M\in Vect(M)\wedge Vect(M)$ by means of the maps from $\g,\h$ to $Vect(M)$, as follows. Suppose that
\be
\hat R = \sum_i(A_i,B_i)\otimes (X_i,Y_i), \qquad \hat\rho = \sum_i (P_i,Q_i)\otimes (U_i,V_i);
\ee
then
\be
\hat R_M = \sum_i(A_i,B_i)_M\otimes (X_i,Y_i)_M, \qquad \hat\rho = \sum_i (P_i,Q_i)_M\otimes (U_i,V_i)_M.
\ee
We have 
\be
T_m^*M \cong \{(\xi,x^T)\,|\, \xi,x\in Hom(W,V)\},
\ee
and as $M$ is a vector space, statements about Poisson brackets and Hamiltonian vector fields and so on can all be made in terms only of linear functions.
It is straightforward to work through the notations and deduce the following
\begin{proposition}
Let $(A,B)\in\g\oplus\g$ and $(P,Q)\in\h\oplus\h$. Then, at the point $m=(v,w^T)\in M$, the vector fields $(A,B)_M$ and $(P,Q)_M$ define the following forms on $T_m^*M$
\[
\ba
(A,B)_M(v,w^T)(\xi,x^T) &= \langle A, vx^T\rangle  - \langle B,\xi w^T\rangle = \llangle (A,B),(vx^T,\xi w^T)\rrangle,\\
(P,Q)_M(v,w^T)(\xi,x^T) &= - \langle P, x^Tv\rangle + \langle Q, w^T\xi\rangle = -\llangle(P,Q),(x^Tv,w^T\xi)\rrangle,
\ea
\]
\end{proposition}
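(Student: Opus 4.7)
The plan is to argue by direct computation, using the explicit formulas for the infinitesimal actions already supplied in the setup,
\[
(A,B)_M(v,w^T) = (Av, -w^T B), \qquad (P,Q)_M(v,w^T) = (-vP, Qw^T),
\]
together with the natural pairing between $T_mM$ and $T_m^*M$. Since $M=Hom(W,V)\times Hom(V,W)$ is a vector space, these two tangent data are themselves elements of $M$, and the cotangent space $T_m^*M\cong Hom(W,V)\oplus Hom(V,W)$ is paired with $T_mM$ via the trace form. With the indexing used in the statement, for $(\xi,x^T)\in T_m^*M$ and $(\dot v,\dot w^T)\in T_mM$ this pairing reads
\[
(\xi,x^T)\bigl((\dot v,\dot w^T)\bigr) = \tr(x^T\dot v) + \tr(\xi \dot w^T).
\]

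The first step is simply to substitute. Evaluating the covector $(\xi,x^T)$ on the tangent vector $(Av,-w^TB)$ gives $\tr(x^T A v) - \tr(\xi\, w^T B)$; cyclicity of the trace turns this into $\tr(A\,vx^T) - \tr(B\,\xi w^T)$, which by definition of $\langle\,\cdot\,,\,\cdot\,\rangle$ on $\g$ is $\langle A,vx^T\rangle - \langle B,\xi w^T\rangle$. Comparison with \eqref{doubleinnprod} identifies this with $\llangle (A,B),(vx^T,\xi w^T)\rrangle$. The same computation for the $H\times H$ action gives $\tr\bigl(x^T(-vP)\bigr) + \tr(\xi\, Qw^T) = -\tr(P\,x^Tv) + \tr(Q\,w^T\xi)$, i.e.\ $-\langle P,x^Tv\rangle + \langle Q,w^T\xi\rangle$, which by \eqref{doubleinnprod} equals $-\llangle(P,Q),(x^Tv,w^T\xi)\rrangle$.

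There is no real obstacle; the statement is essentially a bookkeeping lemma. The only care required is to fix unambiguously which trace pairing identifies $T_m^*M$ so that the sign conventions come out right, and to apply cyclicity of the trace consistently. The point of recording the result in the $\llangle\,\cdot\,,\,\cdot\,\rrangle$-form is to display the tangent-covector pairings as dualities between the Lie algebra doubles $\g\oplus\g$, $\h\oplus\h$ and the ``matrix-coefficient'' elements $(vx^T,\xi w^T)$, $(x^Tv,w^T\xi)$; this is precisely the packaging needed in the subsequent construction of moment maps and the associated Sklyanin-type Poisson brackets on $M$.
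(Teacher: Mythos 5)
Your computation is correct and is exactly the "straightforward working through of the notations" that the paper invokes in lieu of a written proof: pair the covector $(\xi,x^T)$ with the tangent vectors $(Av,-w^TB)$ and $(-vP,Qw^T)$ via the trace form, use cyclicity, and match against the definition \eqref{doubleinnprod} of $\llangle\ddarg\rrangle$. The sign conventions and the identification of $T_m^*M$ agree with those used in the paper's later formula $F(v,w^T)=\tr(x^Tv+w^T\xi)$, so nothing is missing.
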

\noindent
and hence
\begin{proposition}
At the point $m=(v,w^T)\in M$, the tensors $\hat R_M$ and $\hat\rho_M$ define the forms on $T_m^*M$
\[
\ba
\hat R_M(v,w^T)\bigl( (\xi,x^T),(\eta,y^T)\bigr) 
&=
- \tr\Bigl( (vx^T-\xi w^T)R(vy^T - \eta w^T ) + w^Tv(y^T\xi-x^T\eta)\Bigr),\\
\hat\rho_M(v,w^T)\bigl( (\xi,x^T),(\eta,y^T)\bigr) &=
-\tr\Bigl( (x^Tv-w^T\xi) \rho(y^Tv-w^T\eta) + vw^T(\xi y^T-\eta x^T)\Bigr).
\ea
\]
\end{proposition}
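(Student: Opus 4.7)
The plan is to reduce the second proposition to the first one by writing the double r-matrices $\hat R$ and $\hat\rho$ as bilinear forms on $\fd=\g\oplus\g$ (respectively $\h\oplus\h$) via $\llangle\darg\rrangle$, and then to pull back through the transpose of the infinitesimal action map.

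First, I would read the preceding proposition as giving an explicit formula for the transpose maps of the action maps. The first identity in that proposition states
\[
(A,B)_M(v,w^T)(\xi,x^T) = \llangle (A,B),(vx^T,\xi w^T)\rrangle,
\]
so the transpose of $(A,B)\mapsto (A,B)_M(m)$ is the linear map $\phi_m^t:T_m^*M\to\g\oplus\g$ sending $(\xi,x^T)\mapsto(vx^T,\xi w^T)$. Similarly, the transpose of the $\h\oplus\h$-action is $\psi_m^t:(\xi,x^T)\mapsto -(x^Tv,w^T\xi)$.

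Next, since $\hat R_M=\sum_i (A_i,B_i)_M\otimes (X_i,Y_i)_M$, the evaluation on $(\xi,x^T),(\eta,y^T)$ is by definition
\[
\hat R_M(m)\bigl((\xi,x^T),(\eta,y^T)\bigr) = \hat R\bigl(\phi_m^t(\xi,x^T)\,,\,\phi_m^t(\eta,y^T)\bigr),
\]
where on the right $\hat R$ is viewed as a bilinear form on $\fd$ via $\llangle\darg\rrangle$, i.e.\ $\hat R((A,B),(X,Y))=\llangle\hat R(A,B),(X,Y)\rrangle$. Substituting the explicit formula $\hat R(A,B)=\bigl(R(B-A),R(B-A)\bigr)+(B,A)$ with $(A,B)=(vx^T,\xi w^T)$ and $(X,Y)=(vy^T,\eta w^T)$ yields three terms: a term involving $R$ applied to $\xi w^T-vx^T$ paired with $vy^T-\eta w^T$, and two residual trace terms. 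Using the skew-symmetry $\langle RU,V\rangle=-\langle U,RV\rangle$ to move $R$ onto the right slot, and using cyclicity of the trace to collect the remaining pieces into $w^Tv(y^T\xi-x^T\eta)$, gives the first claimed formula.

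The same strategy works verbatim for $\hat\rho_M$: the only change is the extra sign in $\psi_m^t$, which gets squared in the bilinear evaluation and so cancels on the $R$-term but is absent from the residual term (producing the characteristic $vw^T$ rather than $w^Tv$). I expect the main obstacle to be entirely bookkeeping, specifically keeping track of (i) the conventional sign arising from viewing $\hat R\in\fd\wedge\fd$ as a bilinear form via $\llangle\darg\rrangle$ relative to viewing it as an endomorphism, and (ii) the minus sign from $\psi_m^t$. Once these are pinned down, the remainder is a direct trace manipulation using only skew-symmetry of $R$ (resp.\ $\rho$) and cyclicity of $\tr$.
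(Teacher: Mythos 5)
Your proposal follows the paper's proof exactly: the paper likewise evaluates $\hat R_M$ by feeding the images $(vx^T,\xi w^T)$, $(vy^T,\eta w^T)$ of the covectors under the transposed action map into $\hat R$ viewed as a bilinear form via $\llangle\darg\rrangle$, substitutes the explicit endomorphism formula for $\hat R$, and finishes with trace manipulation, leaving the $\hat\rho_M$ case as ``similar''. One small quibble: the $vw^T$ versus $w^Tv$ difference in the residual term of $\hat\rho_M$ comes from the reversed order of factors in $\psi_m^t$ (i.e.\ $x^Tv$ in place of $vx^T$), not from the overall minus sign of $\psi_m^t$, which cancels in every term, not only in the $\rho$-term.
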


\begin{proof}
Using the map $(A,B)\mapsto(A,B)_M$ of $\g\oplus\g$ to $Vect(M)$ from the previous proposition, we compute
\[
\ba
\hat R_M(v,w^T)\bigl( (\xi,x^T),(\eta,y^T)\bigr) &=
\langle \hat R , (vx^T,\xi w^T)\otimes(vy^T,\eta w^T)\rangle\\
&=
\llangle (vx^T,\xi w^T),\hat R(vy^T,\eta w^T) \rrangle\\
&=
\left\langle\!\left\langle \bigl(vx^T,\xi w^T\bigr),\bigl(R(\eta w^T-vy^T) + \eta w^T , R(\eta w^T-vy^T) + vy^T\bigr)  \right\rangle\!\right\rangle \\
&=
\langle vx^T-\xi w^T, R(\eta w^T-vy^T) \rangle + \langle vx^T,\eta w^T\rangle - \langle \xi w^T, vy^T\rangle\\
&=
\tr\Bigl( - (vx^T-\xi w^T)R( vy^T - \eta w^T) + w^Tv(x^T\eta - y^T\xi)\Bigr).
\ea
\]
The formula for $\hat\rho_M$ can be obtained in a similar fashion.

\end{proof}

Because the actions of $G\times G$ and $H\times H$ on $M$ commute with one another, the Schouten bracket between $\hat R_M$ and $\hat \rho_M$ is zero, and because the maps $X\mapsto X_M$ and $A\mapsto A_M$ from $\g$ and $\h$ to $Vect(M)$ are Lie algebra homomorphisms, we have
\be
\ba
{}[\hat R_M,\hat R_M] &=[\hat R,\hat R]_M,\\
[\hat \rho_M,\hat \rho_M] &=[\hat \rho,\hat\rho]_M,
\ea
\ee
or, more explicitly, we may formulate these statements as
\begin{proposition}\label{Rrhobrackets}
The Schouten brackets between the contravariant two tensors $\hat R_M$ and $\hat \rho_M$ can be written in the form
\[
\ba
{} [\hat R_M,\hat\rho_M] & =0,\\
[\hat R_M,\hat R_M](v,w^T)\bigl( (\xi,x^T),(\eta,y^T),(\zeta,z^T)\bigr)
&=
- \left\langle\!\left\langle (vx^T,\xi w^T)\,,\,\bigl[ (vy^T,\eta w^T),(vz^T,\zeta w^T)\bigr] \right\rangle\!\right\rangle, \\
[\hat\rho_M,\hat\rho_M](v,w^T)\bigl( (\xi,x^T),(\eta,y^T),(\zeta,z^T)\bigr)
&=
+ \left\langle\!\left\langle (x^Tv , w^T\xi)\,,\,\bigl[ (y^Tv , w^T\eta),(z^Tv , w^T\zeta)\bigr] \right\rangle\!\right\rangle.
\ea
\]
\end{proposition}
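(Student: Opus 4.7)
The paragraph preceding the proposition already reduces the statement to three pieces: the Schouten vanishing $[\hat R_M,\hat\rho_M]=0$ (from commutativity of the two actions), and the Gerstenhaber identities $[\hat R_M,\hat R_M]=[\hat R,\hat R]_M$ and $[\hat\rho_M,\hat\rho_M]=[\hat\rho,\hat\rho]_M$ (from the homomorphism property of the infinitesimal action maps $\fd_G\to Vect(M)$ and $\fd_H\to Vect(M)$). My plan is to make each right-hand side explicit by invoking the modified Yang-Baxter equation $[\hat R,\hat R]=-\phi_{\fd_G}$ and $[\hat\rho,\hat\rho]=-\phi_{\fd_H}$, where $\phi(X,Y,Z)=\llangle X,[Y,Z]\rrangle$ is the canonical 3-form on the corresponding double, and then pushing these trivectors forward along the infinitesimal action using the explicit pairing formulas from the previous proposition.

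For $[\hat R_M,\hat\rho_M]=0$, I would simply observe that the $(G\times G)$- and $(H\times H)$-actions on $M=Hom(W,V)\times Hom(V,W)$ commute pointwise, so their fundamental vector fields commute as derivations in $Vect(M)$, and the graded Leibniz rule for the Schouten bracket forces the bracket of any bivector built from $\fd_G$-fields with any bivector built from $\fd_H$-fields to vanish. For $[\hat R_M,\hat R_M]$, I would evaluate the trivector $-(\phi_{\fd_G})_M$ on the three covectors $(\xi,x^T),(\eta,y^T),(\zeta,z^T)\in T^*_{(v,w^T)}M$. The previous proposition tells us that contraction of $(A,B)_M(v,w^T)$ with $(\xi,x^T)$ returns $\llangle(A,B),(vx^T,\xi w^T)\rrangle$, and combining this with the definition of $\phi_{\fd_G}$ immediately yields the expression $-\llangle(vx^T,\xi w^T),[(vy^T,\eta w^T),(vz^T,\zeta w^T)]\rrangle$ claimed in the proposition.

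The identity for $[\hat\rho_M,\hat\rho_M]$ is structurally identical, with the $\fd_H$-pairing $(P,Q)_M(v,w^T)(\xi,x^T)=-\llangle(P,Q),(x^T v,w^T\xi)\rrangle$ replacing the $\fd_G$-pairing. This is the one place I expect real care to be needed, since this pairing carries an extra minus sign. When the trivector $(\phi_{\fd_H})_M$ is evaluated on three covectors, that minus is raised to the third power, producing an overall factor $(-1)^3=-1$; combined with the minus already coming from MCYBE, the two signs cancel and the $\hat\rho$-formula appears with the $+$ sign rather than the $-$ sign that appeared in the $\hat R$-case, accounting for the contrasting signs in the last two lines of the proposition.
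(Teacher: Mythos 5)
Your proposal is correct and follows essentially the same route as the paper, which derives the proposition from exactly the three ingredients you name: commutativity of the two actions for the vanishing of $[\hat R_M,\hat\rho_M]$, the homomorphism identities $[\hat R_M,\hat R_M]=[\hat R,\hat R]_M$ and $[\hat\rho_M,\hat\rho_M]=[\hat\rho,\hat\rho]_M$ combined with the modified Yang--Baxter equation, and the pairing formulas of the preceding proposition to make the trivectors explicit. Your sign bookkeeping for the $\hat\rho$ case (the cube of the minus in the $\h\oplus\h$ pairing cancelling against the minus from the Yang--Baxter equation) is precisely the point the paper flags in the remark that the signs ``arise from the signs involved in the different actions.''
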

The signs involved in the last two formulae arise from the signs involved in the different actions (both actions are ``left-actions'').

An easy corollary is that the sum $P = \hat R_M + \hat\rho_M$ is a Poisson tensor on $M$: expanding the righthand sides in Proposition \ref{Rrhobrackets}, this follows by applying the property $\tr(AB^T) = \tr(B^TA)$ for any rectangular matrices $A, B$ of the same shape.
We can do more though. Any bivector on $M$, invariant with respect to the actions of $G$ and of $H$ on $M$, will have zero Schouten bracket with $P$, so we may add (any multiple of) the invariant Poisson bivector $\Pi$, given by
\be\label{invformhol}
\Pi(v,w^T)\bigl( (\xi,x^T),(\eta,y^T)\bigr) = \tr( x^T\eta - y^T\xi),
\ee
and still retain the Poisson property.
\begin{proposition}
For any $\lambda\in\CC$, the bi-vector
\[
P_\lambda = \hat R_M + \hat\rho_M + 2\lambda\Pi
\]
is a Poisson tensor on $M$, and the actions of $G$ and of $H$ on $M$ are Poisson actions when $G$ and $H$ have the respective Sklyanin brackets defined by $R$ and $\rho$. For $\lambda=0$, the actions of the Drinfeld doubles $G\times G$ and $H\times H$ on $M$ are Poisson actions when $G\times G$ and $H\times H$ have the respective Sklyanin brackets defined by $\hat R$ and $\hat\rho$. 

\end{proposition}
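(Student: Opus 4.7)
The plan is to verify two claims for $P_\lambda = \hat R_M + \hat\rho_M + 2\lambda \Pi$: first, the Jacobi identity in the form $[P_\lambda, P_\lambda] = 0$ with respect to the Schouten bracket; second, Poisson-action compatibility with the various groups.

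For the Jacobi identity, I would expand $[P_\lambda, P_\lambda]$ bilinearly into six Schouten brackets. Several of these are immediate: $[\hat R_M, \hat\rho_M] = 0$ by Proposition \ref{Rrhobrackets}; $[\Pi,\Pi] = 0$ because $\Pi$ is a constant bivector in the linear coordinates on $M$; and $[\hat R_M, \Pi] = [\hat\rho_M, \Pi] = 0$ by the invariance observation made just before the statement. The heart of the calculation is the cancellation $[\hat R_M, \hat R_M] + [\hat\rho_M, \hat\rho_M] = 0$. The formulas in Proposition \ref{Rrhobrackets} express each of these trivectors as a trilinear form obtained by pairing $\llangle\darg\rrangle$ with a componentwise bracket in $\g \oplus \g$; unfolding these and invoking trace cyclicity in the form $\tr(vx^T\cdot vy^T\cdot vz^T) = \tr(x^Tv\cdot y^Tv\cdot z^Tv)$, together with the analogous identity on the $\xi, w^T$ side, gives $[\hat R_M, \hat R_M] = -[\hat\rho_M, \hat\rho_M]$.

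The Poisson-action statements then follow from the construction of $\hat R_M$ and $\hat\rho_M$: by design, $\hat R_M$ realises the $G \times G$-action on $M$ as a Poisson action when $G \times G$ carries its Sklyanin bracket defined by $\hat R$, and likewise for $\hat\rho_M$ and $H \times H$. Adding a bivector invariant under an acting group does not disturb its Poisson-action property. Since $\Pi$ is invariant under the diagonal $G$- and diagonal $H$-actions, the $G$- and $H$-actions remain Poisson for every $\lambda$. When $\lambda = 0$ no symmetry-breaking $\Pi$ is present and, since $\hat R_M$ commutes with the $H \times H$-action while $\hat\rho_M$ commutes with the $G \times G$-action, the full doubles retain the Poisson-action property.

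The main obstacle is the trace-cyclic cancellation $[\hat R_M, \hat R_M] + [\hat \rho_M, \hat \rho_M] = 0$, which requires carefully matching up cyclic permutations of six-factor trace expressions; once this identity is in hand, the remaining steps are essentially formal and rely on the invariance observations and the preceding propositions of this section.
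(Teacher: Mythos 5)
Your proposal is correct and follows essentially the same route as the paper: the Jacobi identity is reduced to the cancellation $[\hat R_M,\hat R_M]+[\hat\rho_M,\hat\rho_M]=0$ via the trace-cyclicity identity $\tr(AB^T)=\tr(B^TA)$ applied to the formulas of Proposition \ref{Rrhobrackets}, the $\Pi$-terms are disposed of by the same invariance observation, and the Poisson-action claims follow from $G$, $H$ being Poisson subgroups of their doubles together with the invariant-bivector proposition of Section \ref{semidirectsec}. No substantive difference.
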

The first part of this result is a consequence of the fact that $G$ and $H$ are Poisson subgroups (via diagonal embedding) of their respective doubles $G\times G$ and $H\times H$. It is convenient to use $(\half)$ times $P$ to define the Poisson bracket.

\begin{proposition}\label{ZakPBhol}
The bracket $\{\ddarg\}$ on $M$, defined for linear functions $F,H\in\hol(M,\CC)$
\[
F(v,w^T) = \tr(x^Tv+w^T\xi),\quad H(v,w^T) = \tr(y^Tv+w^T\eta),
\]
by
\[
\ba
\{F,H\}(v,w^T) &= \half\tr\Bigl( - (vx^T-\xi w^T)R(vy^T - \eta w^T ) - (x^Tv-w^T\xi) \rho(y^Tv-w^T\eta)\\ 
&\qquad\qquad - w^Tv(y^T\xi-x^T\eta) - vw^T(\xi y^T-\eta x^T) + 2\lambda( x^T\eta - y^T\xi) \Bigr),
\ea
\]
defines a Poisson bracket on $M$. Equivalently, the Hamiltonian vector field corresponding to $F$ is
\[
\ba
&\XX_F(v,w^T) = \\
&\qquad\half\Bigl( - R(vx^T - \xi w^T)v - v\rho(x^Tv - w^T\xi) + \xi w^Tv + vw^T\xi + 2\lambda\xi,\\
&\qquad\qquad\qquad
w^T R(vx^T - \xi w^T) + \rho(x^Tv - w^T\xi)w^T - w^Tvx^T - x^Tvw^T - 2\lambda x^T\Bigr).
\ea
\]
The actions of $G$ and $H$ on $M$, given by
\[
g\cdot(v,w^T) = (gv,w^Tg^{-1}),\quad h\cdot (v,w^T) = (vh^{-1}, hw^T),
\]
are Poisson actions when $G$ is endowed with the Sklyanin bracket defined by $R\in\g\wedge\g$ and $H$ is endowed with the Sklyanin bracket defined by $\rho\in\h\wedge\h$.
\end{proposition}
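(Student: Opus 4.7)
The plan is to observe that Proposition \ref{ZakPBhol} is essentially a coordinate-level unpacking of the unnamed proposition directly preceding it, which already establishes that $P_\lambda = \hat R_M + \hat\rho_M + 2\lambda\Pi$ is a Poisson tensor on $M$ and that the $G$- and $H$-actions are Poisson actions with respect to it. So the conceptual content is in place; what remains is to substitute the explicit formulas for $\hat R_M$, $\hat\rho_M$ and $\Pi$ into $\tfrac12 P_\lambda(dF,dH)$ and then to extract the Hamiltonian vector field by dualising with an arbitrary linear test function.

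\textbf{Step 1 (the bracket formula).} For a linear function $F(v,w^T) = \tr(x^Tv + w^T\xi)$, the differential $dF$ at every point is the constant covector $(\xi,x^T)$ under the identification $T^*_mM\cong\{(\xi,x^T)\}$ used in the preceding propositions; similarly $dH\leftrightarrow(\eta,y^T)$. I would plug these covectors into the explicit formulas for $\hat R_M$ and $\hat\rho_M$ proved in Proposition \ref{Rrhobrackets} and the preceding one, and into \eqref{invformhol} for $\Pi$. Summing the three contributions and multiplying by $\half$ reproduces exactly the stated trace expression for $\{F,H\}(v,w^T)$. The Poisson property of $\tfrac12 P_\lambda$, and hence of the bracket, is then just the preceding proposition.

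\textbf{Step 2 (the Hamiltonian vector field).} Writing $\XX_F(v,w^T) = (\alpha,\beta^T)$, the defining identity $\{F,H\}(m) = dH\cdot\XX_F(m) = \tr(y^T\alpha + \beta^T\eta)$ must hold for every linear $H$, i.e.\ for all $(\eta,y^T)$. Starting from the bracket formula of Step 1, I would rearrange each trace so that $y^T$ sits on the far left and $\eta$ on the far right, using only cyclicity of the trace together with the skew-symmetry $R^t=-R$ (and $\rho^t=-\rho$); for instance, a term like $\tr(\xi w^T R(\eta w^T))$ becomes $-\tr(w^T R(\xi w^T)\eta)$, contributing $-w^T R(\xi w^T)$ to the $\eta$-coefficient. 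Collecting the $y^T$-coefficients gives $\alpha$ and the $\eta$-coefficients give $\beta^T$; comparing with the statement shows they match the two components displayed in the proposition. This is the only calculation that requires care; the sign/transpose bookkeeping is the main obstacle, though nothing beyond routine trace manipulation is needed.

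\textbf{Step 3 (the Poisson-action claim).} This is a direct transcription of the preceding proposition: for any $\lambda$ the tensor $\Pi$ is manifestly invariant under both $G$ and $H$, so adding $2\lambda\Pi$ to $\hat R_M + \hat\rho_M$ does not spoil the Poisson-action property for the subgroups $G\subset G\times G$ and $H\subset H\times H$, where $G$ and $H$ carry the Sklyanin brackets from $R$ and $\rho$ respectively. No new computation is required here.
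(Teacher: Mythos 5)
Your proposal matches the paper's (implicit) argument: the paper gives no separate proof of Proposition \ref{ZakPBhol}, obtaining it exactly as you do by evaluating $\half P_\lambda=\half(\hat R_M+\hat\rho_M+2\lambda\Pi)$ on the constant covectors $(\xi,x^T)$, $(\eta,y^T)$ attached to linear functions and quoting the preceding proposition for the Poisson and Poisson-action claims. One caveat on your Step 2: under the paper's conventions (its momentum-map definition forces $\XX_F=P(\,\cdot\,,dF)$) the displayed vector field satisfies $dH(\XX_F)=\{H,F\}$ rather than $\{F,H\}$ --- check this on the $2\lambda$-term --- so your normalisation $\{F,H\}=dH\cdot\XX_F$ would produce the opposite overall sign for $\XX_F$; apart from that convention mismatch the computation is the intended one.
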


Momentum maps for the Poisson actions of $G$ and $H$ should be maps from $M$ to the respective dual groups $G_R$ and $H_\rho$. If $\lambda\neq0$, we define the maps $\Phi^G_\lambda:M\rightarrow G$, and $\Phi^H_\lambda:M\rightarrow H$, by
\be\label{holmomGH}
\Phi_\lambda^G(v,w^T) = \lambda Id_n + vw^T,\qquad \Phi_\lambda^H(v,w^T) = \bigl(\lambda Id_d + w^Tv\bigr)^{-1},
\ee
but these must be treated as maps to $G_R$ and to $H_\rho$ via the identifications 
\be\label{factG}
G_R\sim G;\quad (g_+,g_-)\mapsto g_+g_-^{-1},
\ee
and
\be\label{factH}
H_\rho\sim H;\quad (h_+,h_-)\mapsto h_+h_-^{-1}.
\ee
Care must be taken to respect the fact that $\Phi^G$ and $\Phi^H$ are both defined only on open subsets in $M$; this is because \eqref{holmomGH} must represent invertible matrices, and the factorisations \eqref{factG} and \eqref{factH} must be possible, all of which are conditions restricting $(v,w^T)$ to lie in some open subset in $M$.

We will prove
\begin{proposition}
For $\lambda\neq0$ the maps $\Phi_\lambda^G$ and $\Phi_\lambda^H$ are momentum maps for the Poisson actions on $M$ of $G$ and $H$ respectively.
\end{proposition}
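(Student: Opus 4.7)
The plan is to verify directly the momentum map identity $-\langle\tilde X, dJ\cdot\XX_F\rangle = X_M(F)$ for $J=\Phi^G_\lambda$ (with $X\in\g$) and $J=\Phi^H_\lambda$ (with $X\in\h$), testing against linear functions $F(v,w^T)=\tr(x^Tv+w^T\xi)$, which is enough since $M$ is a vector space. The right-hand side is immediate from the actions: for the $G$-action one computes $X_M(F)=\tr\bigl(X(vx^T-\xi w^T)\bigr)=\langle X,Y\rangle$ with $Y:=vx^T-\xi w^T$, and for the $H$-action one gets $X_M(F)=-\langle X,Y'\rangle$ with $Y':=x^Tv-w^T\xi$. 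To read off the left-hand side, I use the identification $G^*\cong G_R\subset G\times G$ from Section \ref{holex}: a tangent vector at $(g_+,g_-)\in G_R$ generated by $(\eta_+,\eta_-)\in\g_R$ pushes forward under $(g_+,g_-)\mapsto g_+g_-^{-1}=g$ to $\eta_+g-g\eta_-$, and the right-invariant one-form $\tilde X$ attached to $X\in\g\cong\g^\delta$ pairs with this tangent vector to give $\llangle(X,X),(\eta_+,\eta_-)\rrangle=\langle X,\eta_+-\eta_-\rangle$. The task therefore reduces to writing $d\Phi^G_\lambda\cdot\XX_F$ in the canonical form $\eta_+g-g\eta_-$ and reading off $\eta_+-\eta_-$; analogously for $\Phi^H_\lambda$ with $H_\rho$ in place of $G_R$.

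For $\Phi^G_\lambda=\lambda+vw^T=g$, plug the Hamiltonian vector field formula from Proposition \ref{ZakPBhol} into $d\Phi^G_\lambda\cdot\XX_F=\XX_F^{(1)}w^T+v\,\XX_F^{(2)}$. I expect the $\rho$-terms to cancel algebraically (reflecting the $H$-invariance of $\Phi^G_\lambda$), the pure $R$-terms to combine into the commutator $[g,R(Y)]$ once $vw^T=g-\lambda$ is substituted, and the remaining cross-terms together with the bare $2\lambda\xi$, $-2\lambda x^T$ contributions to regroup into $(\xi w^T-vx^T)g+g(\xi w^T-vx^T)=-Yg-gY$. The outcome should be
\[
d\Phi^G_\lambda\cdot\XX_F \;=\; g\,Y_- - Y_+\,g, \qquad Y_\pm=\half(RY\pm Y)\in\g_\pm,
\]
which is precisely $\eta_+g-g\eta_-$ with $(\eta_+,\eta_-)=(-Y_+,-Y_-)\in\g_R$; hence $\eta_+-\eta_-=-Y$ and $-\langle\tilde X,d\Phi^G_\lambda\cdot\XX_F\rangle=\langle X,Y\rangle=X_M(F)$.

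For $\Phi^H_\lambda=(\lambda+w^Tv)^{-1}=h$ the same strategy applies after invoking $d(A^{-1})=-h(dA)h$ with $A=\lambda+w^Tv=h^{-1}$. The roles of $R$ and $\rho$ swap: the $R$-terms now cancel, the $\rho$-terms yield a commutator with $h^{-1}$, and the regrouping via $w^Tv=h^{-1}-\lambda$ produces $d\Phi^H_\lambda\cdot\XX_F=X'_+h-hX'_-$ with $X'_\pm=\half(\rho X'\pm X')\in\h_\pm$, which matches the minus sign in $X_M(F)=-\langle X,Y'\rangle$ and delivers the identity via the $H_\rho$-analogue of the pairing above. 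The main obstacle throughout is purely algebraic bookkeeping: one must track the many $R$-, $\rho$-, cross- and $\lambda$-terms so that they regroup cleanly into expressions involving $g$ or $h^{-1}$ rather than $vw^T$ or $w^Tv$, noting in particular how the $2\lambda\xi$, $-2\lambda x^T$ pieces (which at first glance look extraneous) are exactly what is needed to absorb the $\lambda$-deficit left over by rewriting $vw^T$ as $g-\lambda$. Once this is done the identification with tangent vectors in $G_R$, $H_\rho$ is automatic.
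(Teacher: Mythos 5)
Your proposal is correct and follows essentially the same route as the paper: plug the Hamiltonian vector field of a linear function into $d\Phi_\lambda^G\cdot\XX_F$ (resp.\ $d\Phi_\lambda^H\cdot\XX_F$), observe the cancellation of the $\rho$-terms (resp.\ $R$-terms) and the regrouping of the remaining terms into $-Y_+g+gY_-$ with $Y=vx^T-\xi w^T$ (resp.\ $Y'_+h-hY'_-$ with $Y'=x^Tv-w^T\xi$), and then pair with the right-invariant one-form via the identification of $G_R$ (resp.\ $H_\rho$) acting by $(g_+,g_-)\cdot h=g_+hg_-^{-1}$, exactly as in the paper's use of Proposition \ref{GRactioninG}. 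The cancellations and sign bookkeeping you anticipate are precisely what the paper's displayed computation confirms.
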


\begin{proof}

Consider, for $F\in Hol(M)$ with $x^T=\delta_vF$ and $\xi=\delta_{w^T}F$,
\[
\ba
2\left(d\Phi_\lambda^G\cdot\XX_F\right)(v,w^T)
&= 
\Bigl( - R(vx^T-\xi w^T)v - v\rho(x^Tv-w^T\xi) + 2\lambda\xi  + vw^T\xi  + \xi w^Tv\Bigr)w^T\\
&\qquad+ v\Bigl( w^TR(vx^T-\xi w^T) + \rho(x^Tv-w^T\xi)w^T - 2\lambda x^T - x^Tvw^T - w^Tvx^T\Bigr)\\
&=
- 2\lambda(vx^T-\xi w^T) - \bigl((R+Id)(vx^T-\xi w^T)\bigr)vw^T - vw^T\bigl((R-1)(vx^T-\xi w^T)\bigr)\\
&=
-\bigl((R+Id)(vx^T-\xi w^T)\bigr)[\lambda Id + vw^T] + [\lambda Id + vw^T]\bigl((R-1)(vx^T-\xi w^T)\bigr)\\
&=
- \bigl((R+Id)(vx^T-\xi w^T)\bigr)\Phi_\lambda^G(v,w^T) + \Phi_\lambda^G(v,w^T)\bigl((R-1)(vx^T-\xi w^T)\bigr).\\
\ea
\]
By means of the result stated in proposition \ref{GRactioninG}, we have
\[
\left(d\Phi_\lambda^G\cdot\XX_F\right)(v,w^T)
\sim
\exp(-t(vx^T-\xi w^T))\cdot\Phi_\lambda^G(v,w^T),
\]
so that, for $X\in\g$,
\[
\ba
\left\langle\tilde X\,,\, \left(d\Phi_\lambda^G\cdot\XX_F\right)(v,w^T)\right\rangle
&=
-\langle X\,,\, (vx^T-\xi w^T)\rangle\\
&=
-\tr\bigl( x^T(Xv) - (w^TX)\xi\bigr)\\
&=
\dt F\bigl(\exp({-tX})\cdot(v,w^T)\bigr)\\
&=
- \langle X_M, dF\rangle(v,w^T);
\ea
\]
that is, for any $F\in Hol(M)$,
\[
\left\langle\tilde X\,,\, \left(d\Phi_\lambda^G\cdot\XX_F\right)\right\rangle
=
- \langle X_M, dF\rangle,  \qquad X\in\g,
\]
as required.

\medskip
\noindent
Now consider
\[
\ba
&2\left(d\Phi_\lambda^H\cdot\XX_F\right)(v,w^T) \\
&=
- \bigl(\lambda Id + w^Tv\bigr)^{-1} \left[w^T\Bigl( - R(vx^T-\xi w^T)v - v\rho(x^Tv-w^T\xi) + 2\lambda \xi + vw^T\xi + \xi v\Bigr)\right.\\
&\qquad\left.+\Bigl( w^TR(vx^T-\xi w^T) + \rho(x^Tv-w^T\xi)w^T - 2\lambda x^T - x^Tvw^T - w^Tvx^T\Bigr)v\right] \bigl(\lambda Id + w^Tv\bigr)^{-1}\\
&=
- \bigl(\lambda Id + w^Tv\bigr)^{-1} \Bigl[ - 2\lambda(x^Tv-w^T\xi ) \\
&\qquad\qquad + \bigl((\rho-Id)(x^Tv-w^T\xi)\bigr)w^Tv - w^Tv\bigl((\rho+1)(x^Tv-w^T\xi)\bigr) \Bigr]\bigl(\lambda Id + w^Tv\bigr)^{-1} \\
&=
- \bigl(\lambda Id + w^Tv\bigr)^{-1} \Bigl[\bigl((\rho-Id)(x^Tv-w^T\xi)\bigr)(\lambda Id + w^Tv)\\ 
&\qquad\qquad\qquad\qquad\qquad\qquad
- (\lambda Id + w^Tv)\bigl((\rho+1)(x^Tv-w^T\xi ) \bigr)\Bigr] \bigl(\lambda Id + w^Tv\bigr)^{-1} \\
&=
\bigl((\rho+1)( x^Tv - w^T\xi )\bigr)\Phi_\lambda^H(v,w^T)  -  \Phi_\lambda^H(v,w^T) \bigl((\rho-1)( x^Tv - w^T\xi )\bigr).
\ea
\]
By means of the result stated in Proposition \ref{GRactioninG}, we have
\[
\left(d\Phi_\lambda^H\cdot\XX_F\right)(v,w^T)
\sim
\exp(t(x^Tv - w^T\xi ))\cdot\Phi_\lambda^H(v,w^T),
\]
so that, for $X\in\h$, 
\[
\ba
\left\langle\tilde X\,,\, \left(d\Phi_\lambda^H\cdot\XX_F\right)(v,w^T)\right\rangle
&=
\langle X\,,\, (x^Tv - w^T\xi )\rangle\\
&=
\tr\bigl( x^T(vX) - (Xw^T)\xi\bigr)\\
&=
\dt F\bigl(\exp({-tX})\cdot(v,w^T)\bigr)\\
&=
-\langle X_M, dF\rangle(v,w^T);
\ea
\]
that is, for any $F\in Hol(M)$,
\[
\left\langle\tilde X\,,\, \left(d\Phi_\lambda^H\cdot\XX_F\right)\right\rangle
=
-\langle X_M, dF\rangle,  \qquad X\in\h,
\]
as required.

\end{proof}

For later use, it is also useful to give attention to computation of the action-derivatives.

\begin{proposition}\label{actionderivshol}
Let $f\in\hol(M)$. The action-derivatives $D_{(v,w^T)}f\in\g_R$ and $\nabla_{(v,w^T)}f\in\g^\delta$, defined in \eqref{actionderivatives}, are 
\[
\ba
D_{(v,w^T)}f &=
\half\bigl( R(vx^T - \xi w^T), R(vx^T - \xi w^T)\bigr) + \half\bigl(vx^T - \xi w^T, \xi w^T - vx^T\bigr),\\
\nabla_{(v,w^T)}f
&=
\half\bigl(vx^T + \xi w^T, vx^T + \xi w^T\bigr) + \half\bigl(R(\xi w^T - vx^T)\,,\, R(\xi w^T - vx^T)\bigr),
\ea
\]
where $x^T=\delta_vf$ and $\xi=\delta_{w^T}f$.
\end{proposition}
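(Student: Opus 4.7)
The plan is to unravel the definition \eqref{actionderivatives} in the double-group setting $D=G\times G$, where $\g$ is identified with the diagonal subalgebra $\g^\delta$ and $\g^*$ with its complement $\g_R$, and to use the double pairing $\llangle (X_1,Y_1),(X_2,Y_2)\rrangle=\tr(X_1X_2)-\tr(Y_1Y_2)$ on $\fd=\g\oplus\g$ to single out the correct element in each subspace. By the conventions of Proposition \ref{ZakPBhol}, the differential of $f$ at $m=(v,w^T)$ is encoded by $x^T=\delta_vf$ and $\xi=\delta_{w^T}f$, so
\[
\dt f(v+t\delta v,w^T+t\delta w^T)=\tr(x^T\delta v)+\tr(\delta w^T\xi).
\]

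For $D_mf\in\g_R$, take $X=(A,A)\in\g^\delta$. The infinitesimal action $(e^{tA},e^{tA})\cdot(v,w^T)=(e^{tA}v,w^Te^{-tA})$ produces variations $\delta v=Av$ and $\delta w^T=-w^TA$, and hence
\[
\dt f\bigl(e^{tX}\cdot m\bigr)=\tr(x^TAv)-\tr(w^TA\xi)=\tr\bigl(A(vx^T-\xi w^T)\bigr).
\]
Writing $D_mf=(P_1,P_2)\in\g_R$, the identity $\llangle(P_1,P_2),(A,A)\rrangle=\tr\bigl((P_1-P_2)A\bigr)$ must match the above for every $A\in\g$, forcing $P_1-P_2=vx^T-\xi w^T$. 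Parameterising $\g_R$ by $Z\mapsto(P_+Z,P_-Z)$ with $P_\pm Z=\half(RZ\pm Z)$ (so $P_+Z-P_-Z=Z$), the unique such element corresponds to $Z=vx^T-\xi w^T$; splitting $(P_+Z,P_-Z)=\half(RZ,RZ)+\half(Z,-Z)$ yields the stated formula for $D_mf$.

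For $\nabla_mf\in\g^\delta$, take an arbitrary $\xi_0=(P_+Z,P_-Z)\in\g_R$. Then $e^{t\xi_0}\cdot m=(e^{tP_+Z}v,w^Te^{-tP_-Z})$, giving
\[
\dt f\bigl(e^{t\xi_0}\cdot m\bigr)=\tr\bigl(P_+Z\cdot vx^T\bigr)-\tr\bigl(P_-Z\cdot\xi w^T\bigr).
\]
Writing $\nabla_mf=(N,N)$ and using $P_+Z-P_-Z=Z$, the matching condition is $\tr(ZN)=\tr(P_+Z\cdot vx^T)-\tr(P_-Z\cdot\xi w^T)$ for all $Z\in\g$. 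Transposing the projectors through the trace via the skew-symmetry $R^t=-R$ (which gives $P_+^t=-P_-$ and $P_-^t=-P_+$) rewrites the right-hand side as $\tr\bigl(Z(P_+(\xi w^T)-P_-(vx^T))\bigr)$, so that $N=P_+(\xi w^T)-P_-(vx^T)$; expanding the projectors and placing $N$ into both slots of $(N,N)$ reproduces exactly the decomposition claimed in the proposition.

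The computation is entirely mechanical; the only subtleties requiring care are the minus sign in the second slot of $\llangle\darg\rrangle$, the minus sign in the action on $w^T$ arising from the inverse transpose, and the identity $R^t=-R$ used when transposing $P_\pm$ through the trace. None of these presents a genuine obstacle.
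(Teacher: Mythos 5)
Your proof is correct and follows essentially the same route as the paper: compute the derivative of $f$ along the $\g^\delta$- and $\g_R$-actions, pair against the complementary subalgebra via $\llangle\darg\rrangle$, and solve using the parameterisation $Z\mapsto(P_+Z,P_-Z)$ of $\g_R$. The only difference is that you make explicit the adjoint identities $P_\pm^t=-P_\mp$ in the $\nabla_mf$ computation, which the paper leaves implicit.
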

\begin{proof}
It is sufficient to prove the claim when $f$ is the linear function,
\[
f(v,w^T) = \tr(x^Tv + w^T\xi),
\]
with $x,\xi\in\CC^n$ constant.
We compute
\[
\ba
\dt f\bigl(e^{t(A,A)}\cdot(v,w^T)\bigr) &= \tr\bigl(x^TAv - w^TA\xi\bigr)\quad\forall (A,A)\in\g^\delta\\ 
\Rightarrow\ 
\bigl\langle D_{(v,w^T)}f \,,\, (A,A)\bigr\rangle &= \tr A(vx^T - \xi w^T) \\
\Rightarrow\ 
\g_R\owns D_{(v,w^T)}f &= \half\bigl( (R+1)(vx^T - \xi w^T), (R-1)(vx^T - \xi w^T)\bigr).
\ea
\]
Similarly
\[
\ba
\dt f\bigl(e^{t(A_+,A_-)}
\cdot(v,w^T)\bigr) &= \tr\bigl(x^TA_+v - w^TA_-\xi\bigr)\quad\forall(A_+,A_-)\in\g_R\\
\Rightarrow\
\bigl\langle\nabla_{(v,w^T)}f \,,\,(A_+,A_-)\bigr\rangle &= \tr\bigl(A_+vx^T-A_ - \xi w^T\bigr)\\
\Rightarrow\ \g^\delta\owns  \nabla_{(v,w^T)}f &=\half\bigl( (1-R)(vx^T) + (1+R)(\xi w^T)\,,\, (1-R)(vx^T) + (1+R)(\xi w^T)\bigr).
\ea
\]

\end{proof}

\subsection{The Poisson action of $\U(n)$ on several copies of $\CC^n$}\label{Unaction}\label{realss}
Let $V$ and $W$ be vector spaces over $\CC$. Suppose that $dim_\CC(V)=n$ and $dim_\CC(W)=d$. We look for a Poisson Lie group covariant Poisson bracket on the space
\be
\ba
M &= Hom(W,V)\\
&=
V\otimes W^*\\
&=
Mat_{n\times d}(\CC),
\ea
\ee
treated as real, so that all functions are real-valued.

As for the previous example, it is convenient to treat the actions of the groups $K_n=\U(n)$ and $K_d=\U(d)$ on $M$,
\be
k\cdot v = k v,\quad\hbox{and}\quad \tilde k\cdot v = v\tilde k^{-1},\quad k\in K_n,\ \tilde k\in K_d,
\ee
by extending to actions of the doubles, and then restricting to $K_n$ and $K_d$ as Poisson subgroups. This is the Lu-Weinstein example. We may apply the results and notation of subsection \ref{LuWeinstein}, adding labels $n$ and $d$ where appropriate, though we'll drop them when they are implicitly obvious.
We consider the action of $G_n$ and of $G_d$ on $M$
\be
g\cdot v = gv,\quad\hbox{and}\quad h\cdot v = vh^{-1},\quad g\in G_n,\ h\in G_d.
\ee
Thus we get maps from the Lie algebras $\g_n=Lie(G_n)$ and $\g_d=Lie(G_d)$ to $Vect(M)$:
\be
\ba
\g_n\owns X\mapsto X_M\in Vect(M)\ &:\ T_vM\owns X_M(v) = Xv,\\
\g_d\owns A\mapsto A_M\in Vect(M)\ &:\ T_vM\owns A_M(v) = -vA.
\ea
\ee
Making use of the non-degenerate inner products on $\g_n$ and $\g_d$ given by the imaginary part of the trace forms,
\be
\ba
\langle X_1,X_2\rangle &= Im\,\tr(X_1X_2)\quad X_1,X_2\in\g_n,\\
\langle Y_1,Y_2\rangle &= Im\,\tr(Y_1Y_2)\quad Y_1,Y_2\in\g_d,
\ea
\ee
the r-matrices on $\g_n$ and $\g_d$ are defined by the respective decompositions $\g=\k\oplus\b$:
\be
\ba
&\g_n\wedge\g_n\owns R \sim P_{\k_n} - P_{\b_n}\\
&\g_d\wedge\g_d\owns \rho \sim P_{\k_d} - P_{\b_d},
\ea
\ee
where $\k_n=Lie(K_n)$, $\b_n=Lie(B_n)$, $\k_d=Lie(K_d)$ and $\b_d=Lie(B_d)$. As $M$ is a linear space, the cotangent space to $M$ can be identified with $M$, thus
\be
\ba
T_v^*M &= Hom(V,W)\\
&=
W\otimes V^*\\
&=
Mat_{d\times n}(\CC)
\ea
\ee
and $T_v^*M$ is spanned by the forms
\be
\xi : v\mapsto Im\,\tr(\xi^\dag v), \qquad \xi\in Mat_{n\times d}(\CC).
\ee
For any $X,Y\in\g_n$, the bivector $(X\wedge Y)_M$ on $M$ gives
\be
\ba
(X\wedge Y)_M(v)(\xi,\eta) &= Im\,\tr\bigl(\xi^\dag Xv\bigr) Im\,\tr\bigl(\eta^\dag Yv\bigr) - Im\,\tr\bigl(\xi^\dag Yv\bigr) Im\,\tr\bigl(\eta^\dag Xv\bigr)\\
&=
\langle X\wedge Y, v\xi^\dag\otimes v\eta^\dag\rangle,
\ea
\ee
from which we deduce
\be
\ba
R_M(v)(\xi,\eta) &= \langle R, v\xi^\dag\otimes v\eta^\dag\rangle \\
&=
Im\,\tr\Bigl( v\xi^\dag\Bigl( (v\eta^\dag)_{\k_n} - (v\eta^\dag)_{\b_n}\Bigr)\Bigr)\\
&=
Im\,tr \Bigl(v\xi^\dag\bigl( 2(v\eta^\dag)_{\k_n} - v\eta^\dag\bigr)\Bigr).
\ea
\ee
In a similar fashion, we obtain
\be
\rho_M(v)(\xi,\eta) = Im\,\tr\Bigl(\xi^\dag v\bigl(2(\eta^\dag v)_{\k_d} - \eta^\dag v\bigr)\Bigr).
\ee
Analogously to Proposition \ref{Rrhobrackets} we easily compute the Schouten brackets,
\[
\ba
{}[R_M,\rho_M]&=0,\\
[R_M,R_M](v)(\xi,\eta,\zeta) &= Im\,\tr\Bigl(v\xi^\dag [ v\eta^\dag , v\zeta^\dag ]\Bigr),\\
[\rho_M,\rho_M](v)(\xi,\eta,\zeta) &= -Im\,\tr\Bigl( \xi^\dag v [ \eta^\dag v , \zeta^\dag v ]\Bigr).
\ea
\]
As in the previous example we introduce the invariant Poisson structure
\be
\Pi(v)(\xi,\eta) = Im\,\tr\bigl( \xi^\dag\eta\bigr),
\ee
and putting these together, we can formulate
\begin{proposition}
For any $\lambda\in\RR$, the bi-vector
\[
P_\lambda =  R_M + \rho_M - 2\lambda\Pi
\]
is a Poisson tensor on $M$ and the actions of $K_n$ and of $K_d$ on $M$ are Poisson actions when $K_n$ and $K_d$ have the respective Poisson brackets given by \eqref{KPb}.
For $\lambda=0$, the actions of $G_n$ and of $G_d$ on $M$ are Poisson actions when $G_n$ and $G_d$ have the respective Sklyanin brackets defined by $R$ and $\rho$. 
\end{proposition}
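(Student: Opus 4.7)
The plan is to compute the Schouten bracket $[P_\lambda,P_\lambda]$ as a polynomial in $\lambda$ and to show that every coefficient vanishes, and then to deduce the Poisson-action statements from the framework established in the proposition immediately following Proposition~\ref{semidirect}.

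By bilinearity,
\[
[P_\lambda,P_\lambda] = [R_M+\rho_M,R_M+\rho_M] - 4\lambda\,[R_M+\rho_M,\Pi] + 4\lambda^2[\Pi,\Pi].
\]
The $\lambda^0$ piece is handled exactly as in the holomorphic analogue: the cross term $[R_M,\rho_M]$ vanishes since the $G_n$ and $G_d$ actions commute, and the two remaining self-brackets cancel after expanding the commutators in the displayed Schouten-bracket formulas for $[R_M,R_M]$ and $[\rho_M,\rho_M]$ and invoking cyclicity of trace (for instance $\tr(v\xi^\dag v\eta^\dag v\zeta^\dag)=\tr(\xi^\dag v\eta^\dag v\zeta^\dag v)$, with analogous identities for the other three cubic monomials). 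The $\lambda^2$ piece vanishes trivially because $\Pi$ is a constant bivector on the vector space $M$.

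The main obstacle will be the $\lambda^1$ coefficient, i.e.\ the vanishing of $[R_M+\rho_M,\Pi]$. In the holomorphic setting the analogous identity was immediate because the corresponding invariant bivector was preserved by the full actions of $G$ and $H$, whereas here $\Pi(v)(\xi,\eta)=Im\,\tr(\xi^\dag\eta)$ is invariant under $K_n\times K_d$ but not under the $B_n\times B_d$-part of the dressing decomposition (indeed $k^\dag k=\1$ for $k\in K_n$, while $b^\dag b\neq\1$ for a generic $b\in B_n$). I would attack this directly: for any linear vector field $X_M(v)=Xv$ one computes $\cL_{X_M}\Pi(v)(\xi,\eta)=-Im\,\tr\bigl(\xi^\dag(X+X^\dag)\eta\bigr)$. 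Writing $R=\sum_a e_a\wedge f^a$ with $e_a\in\k_n$ and $f^a\in\b_n$ dual under $\langle\cdot,\cdot\rangle$, the Schouten bracket reduces to $[R_M,\Pi]=\sum_a (e_a)_M\wedge\cL_{(f^a)_M}\Pi$ (the companion summand vanishes because $\k_n$ preserves $\Pi$). The remaining 3-tensor must then be shown to be annihilated by its $\rho_M$ counterpart upon antisymmetrisation, exploiting the Lagrangian property of $\k,\b$ under $Im\,\tr$ and the symmetry of the dual pairing on the two sides of the dressing.

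With $[P_\lambda,P_\lambda]=0$ established, the Poisson-action claims follow from the general principle recorded in the proposition after Proposition~\ref{semidirect}. For $\lambda=0$, the tensor $R_M+\rho_M$ is built from the factorizable r-matrices $R,\rho$ so that the Sklyanin-equipped doubles $G_n,G_d$ act Poissonly on $M$; restricting to the Poisson subgroups $K_n,K_d$ gives Poisson actions of the compact groups automatically. For $\lambda\neq 0$, the Poisson bivector $\Pi$ is $K_n\times K_d$-invariant but not $G_n\times G_d$-invariant, so by the cited proposition $K_n$ and $K_d$ continue to act Poissonly on $(M,P_\lambda)$ while the Poisson-Lie symmetry of the full doubles is broken.
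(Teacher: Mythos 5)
Your skeleton matches the paper's proof: kill the $\lambda^0$ coefficient by the mutual cancellation of $[R_M,R_M]$ and $[\rho_M,\rho_M]$ (commuting actions for the cross term, cyclicity of trace for the self-brackets), note $[\Pi,\Pi]=0$ because $\Pi$ is constant, and then read off the Poisson-action statements from the modification proposition at the end of Section \ref{semidirectsec}. On the one delicate point you are in fact more careful than the paper, whose proof simply asserts that invariance of $\Pi$ gives $[R_M,\Pi]=0$ and $[\rho_M,\Pi]=0$ separately. As you correctly observe, $\Pi$ is invariant only under $\k_n$ and $\k_d$, and writing $R=\sum_a e_a\wedge f^a$ with $e_a\in\k_n$, $f^a\in\b_n$ dual under $Im\,\tr$, one is left with $[R_M,\Pi]=\sum_a (e_a)_M\wedge\cL_{(f^a)_M}\Pi$, which is genuinely nonzero (test it at $n=2$, $d=1$ with $v=\xi=e_1$, $\eta=e_2$, $\zeta=\ri e_2$): only the \emph{sum} $[R_M,\Pi]+[\rho_M,\Pi]$ vanishes. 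Your aside that the holomorphic case was ``immediate'' is not quite right, though --- there too the invariant bivector is preserved only by the diagonal subgroup, not by the double whose action builds $\hat R_M$, so the same pairwise cancellation is needed there.

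The genuine gap is that you stop at ``must then be shown to be annihilated by its $\rho_M$ counterpart upon antisymmetrisation'' without doing it, and this is the only step of the whole proof that is not routine. It does go through. Using $\sum_a\langle e_a,A\rangle f^a=A_{\b}$ and the identity $A_{\b}+A_{\b}^{\dag}=A+A^{\dag}$, the left-action contribution contracts to the alternating sum over $\xi,\eta,\zeta$ of $-Im\,\tr\bigl(\eta^\dag(v\xi^\dag+\xi v^\dag)\zeta\bigr)$, and the right-action contribution to the alternating sum of $-Im\,\tr\bigl(\eta^\dag\zeta(\xi^\dag v+v^\dag\xi)\bigr)$. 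Of the twelve resulting monomials, eight cancel in pairs by cyclicity of the trace alone, and the remaining four cancel using $Im\,\tr(A)=-Im\,\tr(A^\dag)$, e.g.\ $\bigl(\eta^\dag\xi\, v^\dag\zeta\bigr)^\dag=\zeta^\dag v\,\xi^\dag\eta$. With that computation supplied your argument is complete; the final paragraph on the Poisson actions of $K_n$, $K_d$ (general $\lambda$) versus $G_n$, $G_d$ ($\lambda=0$) correctly reproduces the paper's reasoning.
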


\begin{proof}
The Poisson property $[P_\lambda , P_\lambda ]=0$ is checked by a computation analogous to that in Proposition \ref{Rrhobrackets}, in which the right hand sides to $[R_M,R_M]$ and $[\rho_M,\rho_M]$ again cancel each other, whilst invariance of the Poisson structure $\Pi$ gives it a zero Schouten bracket with $R_M$ and with $\rho_M$.
For zero $\lambda$, the last part of the proposition is obvious. For nonzero $\lambda$ the Poisson property applies to the Poisson subgroups $K_n\subset G_n$ and $K_d\subset G_d$ due to the invariance of $\Pi$ with respect to the actions of $K_n$ and of $K_d$.
\end{proof}

It is convenient to use $(\half)$ times $P$ to define the Poisson bracket.

\begin{proposition}\label{ZakPBreal}
The bracket $\{\ddarg\}$ on $M$, defined for linear functions $F,H\in\cinf(M,\RR)$
\[
F(v) = Im\,\tr(\xi^\dag v),\quad H(v) = Im\,\tr(\eta^\dag v),
\]
by
\[
\{F,H\}(v) = Im\,\tr\Bigl( v\xi^\dag(v\eta^\dag)_{\k_n} + \xi^\dag v (\eta^\dag v)_{\k_d} - \xi^\dag v\eta^\dag v - \lambda\xi^\dag\eta\Bigr) ,
\]
defines a Poisson bracket on $M$. Equivalently, the Hamiltonian vector field corresponding to $F$ is
\[
\XX_F(v) =
(v\xi^\dag)_{\k_n}v + v(\xi^\dag v)_{\k_d} - v\xi^\dag v -\lambda\xi .
\]
The actions of $K_n$ and $K_d$ on $M$, given by
\[
g\cdot v = gv,\quad h\cdot v = vh^{-1}
\]
are Poisson actions when $K_n$ and $K_d$ are endowed with Poisson brackets of the form \eqref{KPb}.
\end{proposition}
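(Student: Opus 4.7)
The proof is essentially a direct computation: the Poisson property of $\tfrac12 P_\lambda$ and the fact that the actions of $K_n$ and $K_d$ are Poisson actions have both already been established in the preceding proposition, so there is nothing new to do on those fronts. What remains is (a) to make the bracket explicit on linear test functions, and (b) to rearrange the resulting expression into the form characterising $\XX_F$.

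For (a), I would substitute $F(v) = Im\,\tr(\xi^\dag v)$ and $H(v) = Im\,\tr(\eta^\dag v)$ into $\{F,H\}(v) = \tfrac12(R_M + \rho_M - 2\lambda\Pi)(v)(\xi,\eta)$, using the explicit formulas for $R_M$, $\rho_M$ and $\Pi$ derived earlier in the subsection. Each of $R_M$ and $\rho_M$ contributes a projection-onto-$\k$ piece together with terms $-\tfrac12 Im\,\tr(v\xi^\dag v\eta^\dag)$ and $-\tfrac12 Im\,\tr(\xi^\dag v\eta^\dag v)$ respectively; cyclicity of the trace identifies these two pieces and merges them into the single summand $-Im\,\tr(\xi^\dag v\eta^\dag v)$ appearing in the stated formula.

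For (b), I would use the defining relation $Im\,\tr(\eta^\dag \XX_F(v)) = dH_v(\XX_F(v)) = \{H,F\}(v)$ and, term by term, transfer each $\k$-projection through the pairing, then pull $\eta^\dag$ to the left using cyclicity of the trace. The key identity is
\[
Im\,\tr\bigl(A\cdot (B)_{\k}\bigr) = Im\,\tr\bigl((A)_{\b}\cdot B\bigr),
\]
valid for $\k\in\{\k_n,\k_d\}$ with complementary decomposition $\g = \k + \b$. This is exactly the transpose relation $P_\k^t = P_\b$ forced by the skew-symmetry $R^t=-R$ of $R = P_\k - P_\b$ under the pairing $Im\,\tr$. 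Substituting $(X)_\b = X - (X)_\k$, and using $Im\,\tr(\xi^\dag\eta) = -Im\,\tr(\eta^\dag\xi)$ for the $\lambda$-contribution, then reassembles everything into the stated expression for $\XX_F$.

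The main obstacle is purely bookkeeping: one has to keep the signs straight, being careful both with the Hamiltonian-vector-field sign convention and with the adjoint relation $P_\k^t = P_\b$. Once those are handled correctly, the proof reduces to routine applications of cyclicity of the trace and the identity $Im\,\tr(X^\dag) = -Im\,\tr(X)$.
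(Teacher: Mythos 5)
Your proposal is correct and follows the same route the paper intends: the paper gives no separate proof of this proposition, presenting it as the immediate unpacking of the preceding proposition (Poisson property of $P_\lambda$ and the Poisson actions) once one takes $\tfrac12 P_\lambda$ and evaluates on linear functions. Your bookkeeping — the merging of the two $-\tfrac12$ terms by cyclicity, the adjoint relation $P_{\k}^{t}=P_{\b}$ coming from $\k^\perp=\k$, $\b^\perp=\b$ under $Im\,\tr$, and the sign convention $dH(\XX_F)=\{H,F\}$ — all checks out against the stated formulas.
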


If $\lambda\neq0$, we define the maps $\Phi_\lambda^{(n)} : M \to \hermp_n$ and $\Phi_\lambda^{(d)} : M\to \hermp_d$, by
\be\label{realmomGH}
\Phi_\lambda^{(n)}(v) = \lambda Id_n + vv^\dag,\qquad \Phi_\lambda^{(d)}(v) = [\lambda Id_d + v^\dag v]^{-1},
\ee
interpreted as maps to $B_n$ and $B_d$ via the identification \eqref{identifyBP}.

\begin{proposition}
For $\lambda\neq0$, the maps $\Phi_\lambda^{(n)}$ and $\Phi_\lambda^{(d)}$ are momentum maps for the Poisson actions on $M$ of $K_n$ and $K_d$ respectively.
\end{proposition}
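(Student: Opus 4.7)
The proof will parallel the holomorphic case proved just above, adapted to the real setting with inner product $\langle\ddarg\rangle=Im\,\tr(\ddarg)$. The key inputs are the Hamiltonian vector field formula from Proposition \ref{ZakPBreal}, the identification \eqref{identifyBP} of $B_n$ (respectively $B_d$) with $\hermp_n$ (respectively $\hermp_d$), and the orthogonality $\k^\perp=\k$ under $Im\,\tr$.

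For $\Phi_\lambda^{(n)}$, the first step is to compute
\[
d\Phi_\lambda^{(n)}\cdot\XX_F = (\XX_F)v^\dag + v(\XX_F)^\dag
\]
by direct substitution of the formula for $\XX_F$. Using skew-hermiticity of the $\k$-projections together with $vv^\dag = \Phi_\lambda^{(n)} - \lambda I_n$, the $\lambda$-linear pieces, the pieces arising from $v(\xi^\dag v)_{\k_d}$, and the quadratic-in-$v$ terms should all collapse, producing an expression of the form $\zeta\Phi_\lambda^{(n)} + \Phi_\lambda^{(n)}\zeta^\dag$ with $\zeta = -(v\xi^\dag)_{\b_n}\in\b_n$. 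Under \eqref{identifyBP} this is exactly the tangent vector at $\Phi_\lambda^{(n)}(v)$ corresponding to $\dot b \sim \exp(t\zeta)b$ in $B_n$, so for $X\in\k_n$ the right-invariant one-form satisfies $\langle\tilde X, d\Phi_\lambda^{(n)}\cdot\XX_F\rangle = \langle X,\zeta\rangle$. Since $\k_n^\perp=\k_n$, one may drop the $\k_n$-component of $v\xi^\dag$, reducing this to $-Im\,\tr(Xv\xi^\dag)$, which is precisely $-X_M(F)(v)$ for the action $X_M(v) = Xv$ on a linear test function $F(v)=Im\,\tr(\xi^\dag v)$.

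The argument for $\Phi_\lambda^{(d)}$ is analogous, with the wrinkle that, because $\Phi_\lambda^{(d)}$ is defined as an inverse, one computes
\[
d\Phi_\lambda^{(d)}\cdot\XX_F = -\Phi_\lambda^{(d)}\bigl(\XX_F^\dag v + v^\dag\XX_F\bigr)\Phi_\lambda^{(d)}
\]
and then uses $v^\dag v = (\Phi_\lambda^{(d)})^{-1} - \lambda I_d$. The same kind of cancellation yields an expression of the form $\zeta\Phi_\lambda^{(d)} + \Phi_\lambda^{(d)}\zeta^\dag$ with $\zeta = (\xi^\dag v)_{\b_d}$. On the $X_M(F)$ side the sign also flips, because the $K_d$-action is $\tilde k\cdot v = v\tilde k^{-1}$ and hence $X_M(v) = -vX$; the two sign changes combine to give the momentum map identity once again.

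The main obstacle, and the only genuinely laborious part, is the bookkeeping in the $d\Phi\cdot\XX_F$ expansion: one must keep careful track of the fact that $(\xi^\dag v)^\dag = v^\dag\xi$ is \emph{not} the same as $-\xi^\dag v$, of the decomposition $v\xi^\dag = (v\xi^\dag)_{\k_n}+(v\xi^\dag)_{\b_n}$ and its adjoint, and of the $\lambda$-contributions coming from both $vv^\dag=\Phi-\lambda I$ and the explicit $-\lambda\xi$ term in $\XX_F$. Once the terms that do not multiply $\Phi_\lambda^{(\cdot)}$ on one side or the other have been seen to cancel, the remainder of the proof is a one-line invocation of $\k^\perp=\k$.
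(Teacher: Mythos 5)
Your proposal follows essentially the same route as the paper's own proof: compute $d\Phi_\lambda^{(n)}\cdot\XX_F = \XX_F v^\dag + v\XX_F^\dag$ (and the inverse-rule analogue for $\Phi_\lambda^{(d)}$), use skew-hermiticity of the $\k$-projections and the identity $v\xi^\dag+\xi v^\dag = (v\xi^\dag)_{\b_n}+((v\xi^\dag)_{\b_n})^\dag$ to collapse everything to $\zeta\Phi+\Phi\zeta^\dag$ with $\zeta=-(v\xi^\dag)_{\b_n}$ (resp.\ $\zeta=(\xi^\dag v)_{\b_d}$), identify this via \eqref{identifyBP} with the $B$-tangent vector $\exp(t\zeta)b$, and finish using $\k^\perp=\k$; your intermediate formulas and sign analysis all agree with the paper. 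The only difference is that you assert rather than display the term-by-term cancellation, but the cancellations you claim are exactly the ones the paper carries out.
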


\begin{proof}
Consider, for $F\in\cinf(M,\RR)$ with $\delta_vF=\xi^\dag$,
\[
\ba
\left( d\Phi_\lambda^{(n)}\cdot\XX_F\right) &(v)\\
&=
\bigl((v\xi^\dag)_{\k_n}v + v(\xi^\dag v)_{\k_d} - v\xi^\dag v -\lambda\xi\bigr) v^\dag
+
v\bigl((v\xi^\dag)_{\k_n}v + v(\xi^\dag v)_{\k_d} - v\xi^\dag v -\lambda\xi\bigr)^\dag\\
&=
-\lambda(\xi v^\dag + v\xi^\dag) + \bigl((v\xi^\dag)_{\k_n} - v\xi^\dag\bigr)vv^\dag + vv^\dag\bigl((v\xi^\dag)_{\k_n} - v\xi^\dag\bigr)^\dag\\
&=
-\bigl(v\xi^\dag\bigr)_{\b_n}[\lambda + vv^\dag] - [\lambda +vv^\dag] \bigl(v\xi^\dag\bigr)_{\b_n}{}^\dag,
\ea
\]
where the last step follows from the decomposition $v\xi^\dag = (v\xi^\dag)_{\k_n} + (v\xi^\dag)_{\b_n}$; thus,
\[
v\xi^\dag + \xi v^\dag = v\xi^\dag + (v\xi^\dag)^\dag = (v\xi^\dag)_{\b_n} + (v\xi^\dag)_{\b_n}{}^\dag .
\]
Given that $\Phi_\lambda^{(n)}(v)$ represents an element $b$ in $B_n$ via $bb^\dag = \lambda +vv^\dag$, we have
\[
\left( d\Phi_\lambda^{(n)}\cdot\XX_F\right) (v) \sim \exp(-t(v\xi^\dag)_{\b_n})\cdot\Phi_\lambda^{(n)}(v),
\]
so that, for $X\in\k_n$,
\[
\ba
\left\langle\tilde X, \left( d\Phi_\lambda^{(n)}\cdot\XX_F\right) (v)\right\rangle &=
-\langle X, (v\xi^\dag)_{\b_n}\rangle\\
&=
-Im\,\tr\bigl(\xi^\dag Xv\bigr)\\
&=
-\dt F\bigl(\exp(tX)\cdot v\bigr)\\
&=
-\langle X_M,dF\rangle(v);
\ea
\]
that is, for any $F\in\cinf(M,\RR)$,
\[
\langle\tilde X, ( d\Phi_\lambda^{(n)}\cdot\XX_F )\rangle = - \langle X_M,dF\rangle.
\]

Consider
\[
\ba
\bigl( d&\Phi_\lambda^{(d)}\cdot  \XX_F\bigr) (v)\\
&=
-[\lambda+v^\dag v]^{-1}{\Bigl(}v^\dag \bigl((v\xi^\dag)_{\k_n}v + v(\xi^\dag v)_{\k_d} - v\xi^\dag v -\lambda\xi\bigr)
+
\bigl((v\xi^\dag)_{\k_n}v + v(\xi^\dag v)_{\k_d} - v\xi^\dag v -\lambda\xi\bigr)^\dag v \Bigr)[\lambda+v^\dag v]^{-1}\\
&=
-[\lambda+v^\dag v]^{-1}\left(-\lambda(v^\dag\xi + \xi^\dag v) + v^\dag v\bigl((\xi^\dag v)_{\k_d} - \xi^\dag v\bigr)+ \bigl((v\xi^\dag)_{\k_d} - \xi^\dag v\bigr)^\dag v^\dag v\right)[\lambda+v^\dag v]^{-1}\\
&=
-[\lambda+v^\dag v]^{-1}\left(-\bigl(\xi^\dag v\bigr)_{\b_d}{}^\dag[\lambda + v^\dag v] - [\lambda +v^\dag v] \bigl(\xi^\dag v\bigr)_{\b_d}\right)[\lambda+v^\dag v]^{-1}\\
&=
(\xi^\dag v)_{\b_d}[\lambda+v^\dag v]^{-1} + [\lambda+v^\dag v]^{-1}\bigl((\xi^\dag v)_{\b_d}\bigr)^\dag.
\ea
\]
Given that $\Phi_\lambda^{(d)}(v)$ represents an element $b$ in $B_d$ via $\hermp\owns bb^\dag = \lambda+v^\dag v$, we have
\[
\left( d\Phi_\lambda^{(d)}\cdot\XX_F\right) (v) \sim \exp(t(\xi^\dag v)_{\b_d})\cdot\Phi_\lambda^{(d)}(v),
\]
so that, for $X\in\k_d$,
\[
\ba
\left\langle\tilde X, \left( d\Phi_\lambda^{(d)}\cdot\XX_F\right) (v)\right\rangle &=
\langle X, (\xi^\dag v)_{\b_d}\rangle\\
&=
Im\,\tr\bigl(\xi^\dag vX\bigr)\\
&=
\dt F\bigl(\exp(-tX)\cdot v\bigr)\\
&=
-\langle X_M,dF\rangle(v);
\ea
\]
that is, for any $F\in\cinf(M,\RR)$,
\[
\langle\tilde X, ( d\Phi_\lambda^{(d)}\cdot\XX_F )\rangle =  -\langle X_M,dF\rangle.
\]

\end{proof}

\begin{proposition}\label{actderivsreal}
Let $f\in\cinf(M,\RR)$. The action-derivatives $D_vf\in\b_n$ and $\nabla_vf\in\k_n$, defined in \eqref{actionderivatives}, are 
\[
D_vf = (v\delta_vf)_\b,\qquad  \nabla_vf = (v\delta_vf)_\k.
\]
\end{proposition}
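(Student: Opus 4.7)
My plan is to unpack the definitions in \eqref{actionderivatives} in the concrete setup of Subsection \ref{realss}. The Drinfeld double at hand is $G_n=\GL(n,\CC)$, acting on $M$ by left multiplication; the Poisson Lie group is $G=K_n$ (so $\g=\k_n$), and the dual is $G^*=B_n$ (so $\g^*\cong\b_n$ via $\langle\darg\rangle=Im\,\tr$). The two facts that will do essentially all the work are the isotropies $\k_n^\perp=\k_n$ and $\b_n^\perp=\b_n$, which realise the non-degenerate duality pairing between $\k_n$ and $\b_n$.

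First I would apply the chain rule. Write $\delta_v f$ for the matrix in $Mat_{d\times n}(\CC)$ representing the ordinary gradient of $f$ via $\dt f(v+tu)=Im\,\tr\bigl((\delta_v f)u\bigr)$---this is the convention already used in the preceding momentum-map proof, where one encounters $\delta_v F=\xi^\dag$. Since $X\in\k_n$ acts on $v$ by $X\cdot v=Xv$, cyclicity of trace gives
\[
\dt f(e^{tX}\cdot v)=Im\,\tr\bigl((\delta_v f)Xv\bigr)=Im\,\tr\bigl(X\,v\delta_v f\bigr)=\langle v\delta_v f,X\rangle.
\]
An identical computation with $\xi\in\b_n$ (noting that $\xi$ is viewed as an element of $\g_n$ acting on $v$ again by left multiplication) yields $\dt f(e^{t\xi}\cdot v)=\langle v\delta_v f,\xi\rangle$.

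To finish, I would compare with the defining relations \eqref{actionderivatives}. Because $\k_n^\perp=\k_n$, the $\k$-component of $v\delta_v f$ pairs to zero against any $X\in\k_n$, so $\langle v\delta_v f,X\rangle=\langle(v\delta_v f)_\b,X\rangle$ for all $X\in\k_n$, forcing $D_v f=(v\delta_v f)_\b\in\b_n$. Symmetrically, $\b_n^\perp=\b_n$ kills the $\b$-component against any $\xi\in\b_n$, leaving $\nabla_v f=(v\delta_v f)_\k\in\k_n$. There is no real obstacle here---the only point requiring attention is the (slightly asymmetric) convention for $\delta_v f$ as a $d\times n$ matrix, so that the product $v\delta_v f$ lies in $\g_n=gl(n,\CC)$, where the decomposition $\g_n=\k_n\oplus\b_n$ can be invoked.
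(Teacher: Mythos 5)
Your proof is correct and follows essentially the same route as the paper's: reduce to the first-order data $\delta_vf$, use cyclicity of the trace to write the infinitesimal action as $\langle v\delta_vf,\sigma\rangle$, and then invoke the isotropy relations $\k_n^\perp=\k_n$, $\b_n^\perp=\b_n$ to isolate the $\b$- and $\k$-components. The only cosmetic difference is that the paper phrases the reduction as ``WLOG $f$ is linear'' rather than via the chain rule.
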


\begin{proof}
It is convenient to suppose, without loss of generality, that $f$ is the linear function,
\[
f(v) = Im\,\tr(\xi^\dag v),
\]
for which
\[
\delta_vf=\xi^\dag.
\]
We compute
\[
\dt f(e^{t\sigma}v) = Im\,\tr(\xi^\dag\sigma v).
\]
For $\sigma\in\k_n$, this gives
\[
Im\,\tr(D_vf \sigma) = \langle D_vf,\sigma\rangle = Im\,\tr(\xi^\dag\sigma v) = \langle(v\xi^\dag)_\b,\sigma\rangle
\]
and for $\sigma\in\b_n$, we have
\[
Im\,\tr(\nabla_vf \sigma) = \langle \sigma,\nabla_vf\rangle = Im\,\tr(\xi^\dag\sigma v) = \langle\sigma,(v\xi^\dag)_\k\rangle.
\]
Hence the result.
\end{proof}



\section{Examples of semidirect products}

In this section, using the formulae of Section \ref{semidirectsec}, we will look at the semidirect product of $M$ and $N$ for the examples of the previous section.

\subsection{The real case: action of $\U(n)$ on several copies of $\CC^n$}
With respect to subsection \ref{realss}, suppose we have the Poisson spaces 
 \be
(M,P_M) =
\Bigl(\bigl\{ Mat_{n\times d_1}(\CC)\bigr\}\,,\, R_M + \rho^{(1)}_M + \lambda\Pi_M\Bigr),
\ee
 
\be
(N,P_N) =
\Bigl(\bigl\{Mat_{n\times d_2}(\CC)\bigr\}\,,\,  R_N + \rho^{(2)}_N + \mu\Pi_N\Bigr),
\ee
with $\rho^{(1)}$ an r-matrix on $gl(d_1,\CC)$, with $\rho^{(2)}$ an r-matrix on $gl(d_2,\CC)$, both skew-symmetric with respect to the respective forms $Im\,\tr$, and with $\lambda,\mu\in\RR$ possibly different. Consider the map $\S:M\times N\to \tilde M=\CC^{n\times(d_1+d_2)}$,
\be
\S(v;V) = \bigl(v , \Phi^G_\lambda(v)\cdot V\bigr) = (v,b(v)V)
\ee
with $\Phi^G_\lambda(v) = \lambda+vv^\dag$ from \eqref{realmomGH}, and $b$ defined by $\Phi^G_\lambda(v) = b(v)b(v)^\dag$. Define the linear functions on $\tilde M$,
\be
\ba
\tilde F(\tilde V) = \tr\bigl(\tilde X^\dag\tilde V ),\qquad \tilde H(\tilde V) = \tr\bigl(\tilde Y^\dag\tilde V ).
\ea
\ee
Our aim is to prove
\begin{proposition}\label{concatenatedPBreal}
For $\mu=1$, the Poisson bracket on $\tilde M$ engendered by $\S$---that is, such that $\S:M\times N\to\tilde M$ is a Poisson isomorphism---is given by
\[
\ba
\{\tilde F,\tilde H\} (\tilde V) &=
\half Im\,\tr\bigl( \tilde V \tilde X^\dag (\tilde V \tilde Y^\dag)_{\k_n} +  \tilde X^\dag \tilde V (\tilde Y^\dag \tilde V)_{\k_{(d_1+d_2)}} - \tilde V\tilde X^\dag\tilde V\tilde Y^\dag - \lambda \tilde X^\dag \tilde Y\bigr) .
\ea
\]
\end{proposition}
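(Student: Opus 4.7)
The approach is to apply Proposition~\ref{semidirect} directly, decomposing the linear test functions according to the block structure $\tilde V = [v\,|\,\tilde V_2]$ with $v \in Mat_{n\times d_1}(\CC)$ and $\tilde V_2 \in Mat_{n\times d_2}(\CC)$. Writing $\tilde X = [\xi\,|\,X_2]$ and $\tilde Y = [\eta\,|\,Y_2]$ with matching block shapes, the map $\S$ identifies $(v,V) \in M \times N$ with $(v, \tilde V_2 = b(v)V)$ on $\tilde M$, so the linear function $\tilde F(\tilde V) = Im\,\tr(\tilde X^\dag \tilde V)$ splits in the new coordinates as $\tilde F = f + \varphi$ with $f(v) = Im\,\tr(\xi^\dag v)$ and $\varphi(\tilde V_2) = Im\,\tr(X_2^\dag \tilde V_2)$, and similarly $\tilde H = h + \psi$. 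The momentum-map hypothesis of Proposition~\ref{semidirect} is supplied by $\Phi^{(n)}_\lambda$ from \eqref{realmomGH}.

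Proposition~\ref{semidirect} then yields
\[
\{\tilde F, \tilde H\}(\tilde V) = \{f, h\}_M(v) + \{\varphi, \psi\}_N(\tilde V_2) + \langle D_v f,\nabla_{\tilde V_2}\psi\rangle - \langle D_v h,\nabla_{\tilde V_2}\varphi\rangle ,
\]
and each summand is evaluated explicitly using Proposition~\ref{ZakPBreal} (with parameter $\lambda$ on $M$ and $\mu = 1$ on $N$) together with Proposition~\ref{actderivsreal}. In particular $D_v f = (v\xi^\dag)_{\b_n}$ and $\nabla_{\tilde V_2}\psi = (\tilde V_2 Y_2^\dag)_{\k_n}$, and using $\b_n^\perp = \b_n$ the two cross-pairings collapse to $Im\,\tr[v\xi^\dag (\tilde V_2 Y_2^\dag)_{\k_n} - v\eta^\dag (\tilde V_2 X_2^\dag)_{\k_n}]$.

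It remains to expand the claimed target and match. The decisive ingredient is the block form of the Lu--Weinstein decomposition on $gl(d_1+d_2,\CC)$: requiring $(\tilde Y^\dag \tilde V)_{\k_{(d_1+d_2)}}$ to be anti-Hermitian and $(\tilde Y^\dag \tilde V)_{\b_{(d_1+d_2)}}$ to be upper triangular with real diagonal forces the upper-right block of $(\tilde Y^\dag \tilde V)_{\k_{(d_1+d_2)}}$ to be $-v^\dag Y_2$ and the lower-left block to be $Y_2^\dag v$, while the two diagonal blocks reduce to the smaller projections $(\eta^\dag v)_{\k_{d_1}}$ and $(Y_2^\dag \tilde V_2)_{\k_{d_2}}$. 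Substituting into $\tilde X^\dag \tilde V\,(\tilde Y^\dag \tilde V)_{\k_{(d_1+d_2)}}$ and grouping the result by $(\xi,\eta)$-, $(X_2,Y_2)$-, $(\xi,Y_2)$-, and $(X_2,\eta)$-sector, the first two sectors reproduce $\{f,h\}_M$ and $\{\varphi,\psi\}_N$; the two cross-sectors, combined with the corresponding block pieces of $-\tilde V\tilde X^\dag \tilde V\tilde Y^\dag$, collapse via the identity $Im\,\tr[A(v\eta^\dag)_\b] = Im\,\tr[A_\k (v\eta^\dag)_\b]$ (a consequence of $\b_n^\perp=\b_n$, $\k_n^\perp=\k_n$) to exactly the two action-derivative cross-terms of the semidirect formula.

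The main obstacle is the algebraic bookkeeping in this last step: one must verify that the rank-one off-block-diagonal pieces $\pm v^\dag Y_2$ and $Y_2^\dag v$ of $(\tilde Y^\dag \tilde V)_{\k_{(d_1+d_2)}}$, together with the rank-one term $X_2^\dag vv^\dag Y_2$ coming from the block expansion of $\tilde X^\dag \tilde V\,(\tilde Y^\dag\tilde V)_{\k_{(d_1+d_2)}}$, recombine with the invariant contribution $-\mu X_2^\dag Y_2$ from $\{\varphi,\psi\}_N$ and the $-\lambda X_2^\dag Y_2$ block of the target's $-\lambda\tilde X^\dag \tilde Y$ in a consistent way. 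The hypothesis $\mu = 1$ is precisely what makes this $(X_2,Y_2)$-sector balance succeed, producing the single universal invariant coefficient $\lambda$ on the right-hand side of the final formula.
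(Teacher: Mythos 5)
Your overall strategy coincides with the paper's: split $\tilde F=f+\varphi$ according to the block decomposition $\tilde V=(v,\tilde V_2)$, feed the pieces into Proposition \ref{semidirect}, compute the cross-terms from Proposition \ref{actderivsreal}, and match against the target using the block form of $(\tilde Y^\dag\tilde V)_{\k_{(d_1+d_2)}}$, whose off-diagonal blocks $-v^\dag Y_2$ and $Y_2^\dag v$ you identify correctly.

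There is, however, a genuine gap in your treatment of the invariant term $\mu\Pi_N$. Proposition \ref{semidirect} is proved under the hypothesis that the Drinfeld double acts on $N$ by Poisson actions; its proof uses the Poisson action property of $G^*\circlearrowright N$. When $\mu\neq 0$ this hypothesis fails for the $\Pi$-part of the $N$-bracket: $\Pi$ is invariant under $K_n$ but not under $B_n=G^*$, so formula \eqref{semidirectformula} may be applied only to the $\Pi$-free part $\{\ ,\ \}_0$, and the contribution of $\mu\Pi$ must be computed by pulling it back through $\S$ by hand. Since $\S^*\varphi(v,V)=Im\,\tr\bigl((b^\dag X_2)^\dag V\bigr)$ with $bb^\dag=\lambda+vv^\dag$, this pullback is
\[
\mu\,\{\S^*\varphi,\S^*\psi\}_\Pi = \mu\, Im\,\tr\bigl(X_2^\dag\,[\lambda+vv^\dag]\,Y_2\bigr),
\]
not the constant $\mu\, Im\,\tr(X_2^\dag Y_2)$ that you assign to ``the invariant contribution from $\{\varphi,\psi\}_N$''. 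The extra pieces $\mu X_2^\dag vv^\dag Y_2$ and $\mu\lambda X_2^\dag Y_2$ are exactly what is needed to match, respectively, the rank-one term $X_2^\dag vv^\dag Y_2$ produced by the off-diagonal blocks in $\tr\bigl(\tilde X^\dag\tilde V(\tilde Y^\dag\tilde V)_{\k_{(d_1+d_2)}}\bigr)$ and the $(X_2,Y_2)$-block of $-\lambda\tilde X^\dag\tilde Y$; the condition $\mu=1$ is forced by this comparison. With your accounting the $(X_2,Y_2)$-sector does not close: the target's $-X_2^\dag vv^\dag Y_2$ is left unmatched and the scalar terms would demand $\mu=\lambda$ rather than $\mu=1$. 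This is precisely the point of equation \eqref{consttermreal} in the paper, which your argument needs to reproduce before the final bookkeeping can succeed.
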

The proof of this proposition is essentially a direct application of Proposition \ref{semidirect}. We may split it into several parts.

In the notation of Proposition \ref{semidirect}, we introduce the linear functions
\[
\ba
&f(v) = Im\,\tr(\xi^\dag v) , \quad h(v) = Im\,\tr(\eta^\dag v),\\
&\varphi(V) = Im\,\tr(X^\dag V) , \quad \psi(V) = Im\,\tr(Y^\dag V).
\ea
\]

Let's denote the Poisson bracket on $N$ $\{\ ,\ \}_0 + \mu\{\ ,\ \}_\Pi$. 
Define the functions
\be
\ba
F(v ; V) &= f(v) + \varphi(V) , \\
H(v ; V) &= h(v) + \psi(V)  .
\ea
\ee
Thus, we have
\[
\S^*F(v ; V) = \tr\bigl(\xi^\dag v + X^\dag bV),
\]
with $\Phi^G_\lambda(v) = bb^\dag$.
As in the proof of Proposition \ref{semidirect}, we may find the contribution of $\Pi$ to the modified Poisson structure by computing
\be\label{consttermreal}
\ba
\{\S^*F,\S^*H\}_\Pi (v ; V ) &= Im\,\tr \bigl( (X^\dag b) (b^\dag Y) \bigr)\\
&=
Im\,\tr ( X^\dag bb^\dag Y ) \\
&=
Im\, \tr(X^\dag[\lambda+vv^\dag]Y).
\ea
\ee 
The following Lemma is a direct consequence of Proposition \ref{actderivsreal}.
\begin{lemma}\label{actderivstermreal}
The term $\langle D_vf , \nabla_V\psi\rangle$ from Proposition \ref{semidirect} is
\[
\langle D_vf , \nabla_V\psi\rangle = 
Im\,\tr\bigl( v\xi^\dag (VY^\dag)_\k \bigr).
\]
\end{lemma}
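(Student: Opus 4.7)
The plan is to recognise this lemma as a direct corollary of Proposition \ref{actderivsreal}, applied once to each factor, together with the orthogonality properties of the Lu--Weinstein decomposition $gl(n,\CC)_\RR = \k_n \oplus \b_n$ with respect to $Im\,\tr$.

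First I would compute $D_v f$. Since $f(v)=Im\,\tr(\xi^\dag v)$ is linear, its ordinary derivative is $\delta_v f=\xi^\dag$, so Proposition \ref{actderivsreal} yields $D_v f=(v\xi^\dag)_{\b}\in\b_n$. Next I would observe that the $K_n$-action on $N=Mat_{n\times d_2}(\CC)$ is implemented by the same left-multiplication formula $k\cdot V=kV$ as on $M$, so the derivation of Proposition \ref{actderivsreal} transports verbatim to functions on $N$. Applying it to $\psi(V)=Im\,\tr(Y^\dag V)$, with $\delta_V\psi=Y^\dag$, gives $\nabla_V\psi=(VY^\dag)_{\k}\in\k_n$.

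It then remains only to unwind the pairing. In the semidirect product formula of Proposition \ref{semidirect}, $\langle\darg\rangle$ denotes the natural invariant pairing on the Drinfeld double $\fd=\g\oplus\g^*$, which in the present (Lu--Weinstein) setting is simply $Im\,\tr$ on $\k_n\oplus\b_n$. Hence
\[
\langle D_vf,\,\nabla_V\psi\rangle = Im\,\tr\bigl((v\xi^\dag)_{\b}\,(VY^\dag)_{\k}\bigr).
\]
Now $(VY^\dag)_{\k}\in\k_n$, and by the identity $\k^\perp=\k$ recorded in subsection \ref{LuWeinstein}, the product $Im\,\tr\bigl((v\xi^\dag)_{\k}\,(VY^\dag)_{\k}\bigr)$ vanishes. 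So $(v\xi^\dag)_{\b}$ may be replaced by the full expression $v\xi^\dag$ without changing the value, producing the claimed formula.

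There is no real obstacle: the whole argument is a substitution followed by one orthogonality cancellation. The only point requiring a moment of care is the double-checking of pairing conventions (ensuring that $D_vf\in\b_n$ is paired with $\nabla_V\psi\in\k_n$ via $Im\,\tr$, not via some twist), and that Proposition \ref{actderivsreal} genuinely applies to $\psi$ despite its living on $N$ rather than $M$, which follows at once from the identical form of the $K_n$-action on both spaces.
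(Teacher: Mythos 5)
Your proposal is correct and follows exactly the route the paper intends: the paper simply declares the lemma ``a direct consequence of Proposition \ref{actderivsreal}'', and your substitution $D_vf=(v\xi^\dag)_{\b}$, $\nabla_V\psi=(VY^\dag)_{\k}$ followed by the $\k^\perp=\k$ cancellation is precisely the omitted computation. The one point you flag for care (that the pairing is $Im\,\tr$ between $\b_n$ and $\k_n$) is handled correctly.
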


\begin{proofof}{Proposition}{concatenatedPBreal} 
Implementing 
Proposition \ref{semidirect}, making use of \eqref{consttermreal} and Lemma \ref{actderivstermreal}, then rearranging terms, we have\footnote{A cavalier approach is taken to the use of `$\tr$' in that the same operation impinges on matrices of differing sizes. This just avoids having to write out lots of different terms. The meaning should be clear.} 
\[
\ba
2\{F,H\}&(v;V)\\ 
&= \{f,h\}_M(v) + \{\varphi,\psi\}_0(V) + \langle D_vf,\nabla_V\psi\rangle - \langle D_vh,\nabla_V\varphi\rangle
- \mu Im\,\tr(X^\dag[\lambda+vv^\dag]Y) \\
&=\\
&\quad
Im\,\tr\bigl( (v\xi^\dag + VX^\dag)(v\eta^\dag + VY^\dag)_{\k_n} + \xi^\dag v (\eta^\dag v)_{\k_{d_1}} + X^\dag V (Y^\dag V)_{\k_{d_2}} \\
&\qquad\qquad - v\xi^\dag v\eta^\dag - VX^\dag VY^\dag - VX^\dag v\eta^\dag - \mu X^\dag vv^\dag Y
-
\lambda(\xi^\dag\eta + \mu X^\dag Y)\bigr).
\ea
\]
Now, let us write
\[
\tilde V = (v,V),\ \ \tilde X = (\xi,X),\ \ \tilde Y = (\eta,Y),
\]
so that
\[
\tilde V \tilde X^\dag = (v,V)\left(\begin{array}{c} \xi^\dag \\ X^\dag\end{array}\right)
=
v\xi^\dag + VX^\dag,
\]

\[
\ba
\tilde V\tilde X^\dag\tilde V\tilde Y^\dag = (v\xi^\dag + VX^\dag)(v\eta^\dag + VY^\dag)
\ea
\]

\[
\tilde X^\dag \tilde V =
\left(\begin{array}{cc} \xi^\dag v & \xi^\dag V \\ X^\dag v & X^\dag V \end{array}\right)
=
\left(\begin{array}{cc} (\xi^\dag v)_{k_1} & - v^\dag X \\ X^\dag v & (X^\dag V)_{\k_2} \end{array}\right)
+ 
\left(\begin{array}{cc} (\xi^\dag v)_{\b_1} & \xi^\dag V + v^\dag X \\ 0 & (X^\dag V)_{\b_2} \end{array}\right);
\]
that is,
\[
(\tilde X^\dag\tilde V)_{\k_{(d_1+d_2)}} = 
\left(\begin{array}{cc} (\xi^\dag v)_{k_{d_1}} & - v^\dag X \\ X^\dag v & (X^\dag V)_{\k_{d_2}} \end{array}\right).
\]

Using these expressions, it is straightforward to verify Proposition \ref{concatenatedPBreal}.

\end{proofof}

\subsection{The holomorphic case: action of $\GL(n,\CC)$ on several copies of $\CC^n\times\CC^n$}
Similarities between this subsection and the previous one allows us to curtail the proofs, which would otherwise be unreasonably long. For any reader who finds an inadequate supply of details in what follows, it should be sufficient to make a comparison with analogous proofs for the real case.

With reference to subsection \ref{holss}, suppose we have the Poisson spaces 
 \be
M =
\Bigl(\Bigl\{ m=(v,w^T)\,\left|\, v,w\in Mat_{n\times d_1}(\CC)\Bigr\}\,,\, \{\ ,\ \}\sim \hat R_M + \hat\rho^{(1)}_M + \lambda\Pi_M\Bigr)\, ,\right.
\ee
 
\be
N =
\Bigl(\Bigl\{ n=(V,W^T)\, \left|\, V,W\in Mat_{n\times d_2}(\CC)\Bigr\}\,,\, \{\ ,\ \}\sim \hat R_N + \hat\rho^{(2)}_N + \mu\Pi_N\Bigr)\, ,\right.
\ee
with $\rho^{(1)}$ an r-matrix on $gl(d_1,\CC)$, with $\rho^{(2)}$ an r-matrix on $gl(d_2,\CC)$, both skew-symmetric with respect to the respective trace-forms, and with $\lambda,\mu\in\CC$ possibly different. Consider the map $\S:M\times N\to  M\times N$,
\[
\S(v,w^T;V,W^T) = \bigl(v,w^T; \Phi^G_\lambda(v,w^T)\cdot(V,W^T)\bigr),
\] 
with $\Phi^G_\lambda = \lambda + vw^T$ from \eqref{holmomGH}, and define the concatenation map $\cC: M\times N\to \tilde M=\CC^{n\times(d_1+d_2)}\times\CC^{(d_1+d_2)\times n}$,
\be
\cC(v,w^T;V,W^T) = (\tilde V,\tilde W^T),\quad\hbox{ with } \tilde V= (v,V),\ \tilde W = (w,W). 
\ee
Define the linear functions on $\tilde M$,
\be
\ba
\tilde F(\tilde V,\tilde W^T) &= \tr\bigl(\tilde X^T\tilde V + \tilde W^T\tilde P\bigr),\\
\tilde H(\tilde V,\tilde W^T) &= \tr\bigl(\tilde Y^T\tilde V + \tilde W^T\tilde Q\bigr).
\ea
\ee
Our aim is to prove
\begin{proposition}\label{concatenatedPBhol}
For $\mu=1$, the Poisson bracket on $\tilde M$ engendered by the composition $\cC\circ\S$ is given by
\[
\ba
\{\tilde F,\tilde H\} (\tilde V,\tilde W^T) &=
\half\tr\bigl( - (\tilde V \tilde X^T - \tilde P \tilde W^T) R (\tilde V \tilde Y^T - \tilde Q \tilde W^T) - (\tilde X^T \tilde V - \tilde W^T \tilde P)\tilde\rho(\tilde Y^T \tilde V - \tilde W^T \tilde Q) \\
&\qquad+
\tilde W^T \tilde V(\tilde X^T \tilde Q - \tilde Y^T \tilde P) + \tilde V \tilde W^T(\tilde Q \tilde X^T - \tilde P \tilde Y^T) + 2\lambda (\tilde X^T \tilde Q - \tilde Y^T \tilde P)\bigr),
\ea
\]
where the r-matrix $\tilde\rho$ on $gl(d_1+d_2,\CC)$ is given by
\[
\tilde\rho\left(\begin{array}{cc}A_{11} &A_{12}\\ A_{21} &A_{22}\end{array}\right)
=
\left(\begin{array}{cc}\rho_1(A_{11}) & A_{12} \\ - A_{21} & \rho_2(A_{22})\end{array}\right),
\] 
for $A_{11}\in Mat(d_1\times d_1,\CC)$, $A_{12}\in Mat(d_1\times d_2,\CC)$, $A_{21}\in Mat(d_2\times d_1,\CC)$, $A_{22}\in Mat(d_2\times d_2,\CC)$.
\end{proposition}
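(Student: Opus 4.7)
The plan is to mirror the proof of Proposition \ref{concatenatedPBreal}, applying Proposition \ref{semidirect} to $\S$ and recognising the result (after relabelling via $\cC$) as a Zakrzewski-type bracket of the form in Proposition \ref{ZakPBhol} on $\tilde M$ with the block-structured r-matrix $\tilde\rho$ on $gl(d_1+d_2,\CC)$.

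Using $\cC$, decompose $\tilde F = f + \varphi$ with $f(v,w^T) = \tr(x^T v + w^T\xi)$ and $\varphi(V,W^T) = \tr(X^T V + W^T P)$, so that $\tilde X = (x,X)$ and $\tilde P = (\xi,P)$; analogously $\tilde H = h + \psi$ with $\tilde Y = (y,Y)$, $\tilde Q = (\eta,Q)$. Proposition \ref{semidirect} produces
\[
\{F,H\}_\S = \{f,h\}_M + \{\varphi,\psi\}_N + \langle D_m f,\nabla_n\psi\rangle - \langle D_m h,\nabla_n\varphi\rangle.
\]
Since $\Pi_N$ is constant on the vector space $N$, and the $\Phi,\Phi^{-1}$-scalings of the $\GL$-action on the $V$- and $W^T$-factors cancel in the pairing of the corresponding derivatives, the $\mu\Pi_N$-piece of $\{\varphi,\psi\}_N$ pulls back under $\S$ unchanged in form.

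The cross terms are evaluated from Proposition \ref{actionderivshol}: writing $a := vx^T - \xi w^T$, $a' := VY^T - QW^T$, and $b' := VY^T + QW^T$, one finds $D_m f = \half((R+1)a,(R-1)a)\in\g_R$ and $\nabla_n\psi = \half(b' - Ra',b'-Ra')\in\g^\delta$, whence the double pairing $\llangle\cdot,\cdot\rrangle$ collapses to
\[
\langle D_m f,\nabla_n\psi\rangle = \tr\bigl(a\cdot\tfrac12(b'-Ra')\bigr) = \half\tr(a\,b') - \half\tr(a\,Ra'),
\]
together with the analogous formula for $\langle D_m h,\nabla_n\varphi\rangle$ obtained by substituting $(y,\eta,X,P)$ for $(x,\xi,Y,Q)$.

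Reassembly uses the block identities $\tilde V\tilde X^T - \tilde P\tilde W^T = (vx^T-\xi w^T) + (VX^T - PW^T)$ and the $2\times 2$ decomposition
\[
\tilde X^T\tilde V - \tilde W^T\tilde P = \bpm x^T v - w^T\xi & x^T V - w^T P \\ X^T v - W^T\xi & X^T V - W^T P \epm,
\]
together with the analogous forms for $\tilde W^T\tilde V$ and $\tilde V\tilde W^T$. The $R$-piece of the target splits (via $\tr(ARB) = -\tr(R(A)B)$) into diagonal pieces matching the $R$-parts of $\{f,h\}_M$ and $\{\varphi,\psi\}_0$ and off-diagonal pieces matching the $R$-parts of the cross terms. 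The $\tilde\rho$-piece expands blockwise, with $\rho_1,\rho_2$ on the diagonal blocks providing the $\rho$-parts of $\{f,h\}_M$ and $\{\varphi,\psi\}_0$, and with the $(+A_{12},-A_{21})$ off-diagonal pattern precisely absorbing the $R$-independent cross terms. The bilinear pieces in $\tilde W^T\tilde V$ and $\tilde V\tilde W^T$ split similarly into diagonal $M$- and $N$-contributions plus cross-block terms. Finally the $2\lambda$-piece collects the $\lambda$-term of $\{f,h\}_M$ with the $\mu\Pi_N$-contribution, and fixes $\mu = 1$ in the same spirit as in Proposition \ref{concatenatedPBreal}. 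The main obstacle is the sign bookkeeping: verifying that the $(+A_{12},-A_{21})$ off-diagonal pattern in $\tilde\rho$ is precisely what is forced by the alternation $+\langle D f,\nabla\psi\rangle - \langle D h,\nabla\varphi\rangle$ in the cross terms together with the skew symmetry of $R$.
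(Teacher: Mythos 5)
Your overall strategy is the same as the paper's (apply Proposition \ref{semidirect}, compute the cross terms from Proposition \ref{actionderivshol}, and reassemble blockwise), and your evaluation of the cross terms $\langle D_mf,\nabla_n\psi\rangle=\half\tr(ab')-\half\tr(aRa')$ agrees with the paper's Lemma \ref{actderivstermhol}. However, there is a genuine error in your treatment of the $\Pi_N$ contribution. You claim that the scalings of the $G_R$-action ``cancel in the pairing'' so that the $\mu\Pi_N$-piece pulls back under $\S$ unchanged in form. This is false: the action of $J(m)=\Phi^G_\lambda(v,w^T)\in G_R$ on $N$ is $(g_+,g_-)\cdot(V,W^T)=(g_+V,\,W^Tg_-^{-1})$, so the derivatives of $\S^*\varphi$ are $X^Tg_+$ and $g_-^{-1}P$, and the pairing produces the factor $g_+g_-^{-1}=g=\lambda+vw^T$ in the middle, \emph{not} the identity. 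The correct pullback (the paper's equation \eqref{consttermhol}) is
\[
\{\S^*F,\S^*H\}_\Pi=\tr\bigl(X^T[\lambda+vw^T]Q-Y^T[\lambda+vw^T]P\bigr),
\]
and both pieces of this are essential: the $\lambda$-part (with $\mu=1$) is what produces the $(2,2)$-block of the term $2\lambda(\tilde X^T\tilde Q-\tilde Y^T\tilde P)$ in the target formula, while the $vw^T$-part supplies exactly the cross-block contribution $\tr\bigl(vw^T(QX^T-PY^T)\bigr)$ hiding inside $\tr\bigl(\tilde V\tilde W^T(\tilde Q\tilde X^T-\tilde P\tilde Y^T)\bigr)$. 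With your version of the $\Pi$-term the reassembly cannot close: you would be missing those cross terms and would get the wrong coefficient on the $N$-block of the $\lambda$-term.

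The rest of the sketch (the block identities for $\tilde V\tilde X^T-\tilde P\tilde W^T$, $\tilde X^T\tilde V-\tilde W^T\tilde P$, and the observation that the $(+A_{12},-A_{21})$ off-diagonal pattern of $\tilde\rho$ is forced by the antisymmetric combination $\langle Df,\nabla\psi\rangle-\langle Dh,\nabla\varphi\rangle$) is consistent with the paper's proof, so once the $\Pi$-pullback is corrected your argument should go through.
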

As for the real example, this proposition is a direct application of Proposition \ref{semidirect}. As before, we split it into several parts.

In the notation of Proposition \ref{semidirect}, let us introduce the linear functions
\[
\ba
&f(v,w^T) = \tr(x^Tv+w^T\xi),\quad h(v,w^T) = \tr(y^Tv+w^T\eta),\\
&\varphi(V,W^T) = \tr(X^TV+W^TP),\quad \psi(V,W^T) = \tr(Y^TV+W^TQ).
\ea
\]

Let's denote the Poisson bracket on $N$ $\{\ ,\ \}_0 + \mu\{\ ,\ \}_\Pi$. 
Define the functions
\be
\ba
F(v,w^T; V,W^T) &= f(v,w^T) + \varphi(V,W^T) , \\
H(v,w^T; V,W^T) &= h(v,w^T) + \psi(V,W^T) .
\ea
\ee
Thus, writing $\Phi^G_\lambda(v,w^T) = g_+g_-^{-1}=g$ for short, we have
\[
\S^*F(v,w^T; V,W^T) = \tr\bigl(x^Tv + w^T\xi + X^Tg_+V + W^Tg_-^{-1}P\bigr).
\]
As in the proof of Proposition \ref{semidirect}, we find the contribution of $\Pi$ to the modified Poisson structure by computing
\be\label{consttermhol}
\ba
\{\S^*F,\S^*H\}_\Pi (v,w^T;V,W^T) &= \tr \bigl( (X^Tg_+) (g_-^{-1} Q) - (Y^Tg_+)(g_-^{-1}P)\bigr)\\
&=
\tr ( X^TgQ - Y^TgP) \\
&=
\tr(X^T[\lambda+vw^T]Q - Y^T[\lambda+vw^T]P).
\ea
\ee
Next, we need

\begin{lemma}\label{actderivstermhol}
The term $\langle D_{(v,w^T)}f , \nabla_{(V,W^T)}\psi\rangle$ from Proposition \ref{semidirect} is
\[
\ba
\langle D_{(v,w^T)}f , \nabla_{(V,W^T)}\psi\rangle &= 
\half\tr\bigl( - (vx^T-\xi w^T) R (VY^T - QW^T)\bigr) \\ 
&\qquad
+ \half\tr\bigl( vx^TVY^T - \xi w^T QW^T + W^Tvx^TQ - w^TVY^T\xi\bigr)\,.
\ea
\]
\end{lemma}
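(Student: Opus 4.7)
The plan is to apply Proposition \ref{actionderivshol} to both $f$ and $\psi$, then compute the $\fd$-pairing \eqref{doubleinnprod} and reorganise using cyclicity of the trace together with skew-symmetry of $R$. Set $a := vx^T - \xi w^T$. Since $f$ is linear with $\delta_v f = x^T$, $\delta_{w^T}f = \xi$, Proposition \ref{actionderivshol} rewrites as
\[
D_{(v,w^T)}f = \tfrac12\bigl((R+1)a,\, (R-1)a\bigr) \in \g_R.
\]
Applying the same proposition to $\psi$, with $x^T$ replaced by $\delta_V\psi = Y^T$ and $\xi$ by $\delta_{W^T}\psi = Q$, I would write
\[
\nabla_{(V,W^T)}\psi = (u,u) \in \g^\delta, \qquad u := \tfrac12\bigl((1-R)(VY^T) + (1+R)(QW^T)\bigr).
\]

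The key simplification is that, because $\nabla_{(V,W^T)}\psi$ has both components equal to $u$, the pairing \eqref{doubleinnprod} collapses:
\[
\llangle D_{(v,w^T)}f,\,\nabla_{(V,W^T)}\psi\rrangle
= \bigl\langle \tfrac12(R+1)a,u\bigr\rangle - \bigl\langle\tfrac12(R-1)a,u\bigr\rangle
= \langle a, u\rangle.
\]
Next I would expand
\[
\langle a,u\rangle = \tfrac12\tr\bigl((vx^T - \xi w^T)\bigl[(1-R)(VY^T) + (1+R)(QW^T)\bigr]\bigr)
\]
and split the result into $R$-dependent and $R$-independent parts. The $R$-terms collect at once into
\[
\tfrac12\tr\bigl(-(vx^T - \xi w^T)\,R(VY^T - QW^T)\bigr),
\]
matching the first piece of the stated formula.

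For the non-$R$ terms, I would apply cyclicity to rewrite the cross terms: $\tr(vx^T QW^T) = \tr(W^T v x^T Q)$ and $\tr(\xi w^T V Y^T) = \tr(w^T V Y^T \xi)$. This produces precisely $\tfrac12\tr(vx^T V Y^T - \xi w^T QW^T + W^T v x^T Q - w^T V Y^T \xi)$, which is the second piece of the lemma. The whole exercise is a bookkeeping computation; the only place where one has to be slightly careful is in correctly using the pairing on $\fd$ with its relative minus sign, but the collapse to $\langle a,u\rangle$ described above bypasses any real difficulty.
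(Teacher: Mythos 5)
Your computation is correct and follows exactly the route of the paper's own (much terser) proof: substitute the expressions from Proposition \ref{actionderivshol}, observe that the $\fd$-pairing \eqref{doubleinnprod} collapses to $\langle vx^T-\xi w^T,\,u\rangle$ because $\nabla_{(V,W^T)}\psi$ is diagonal, and then split into $R$-dependent and $R$-independent parts using cyclicity of the trace. (The skew-symmetry of $R$ you mention in your plan is never actually needed, but that is harmless.)
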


\begin{proof}
Substituting from Proposition \ref{actionderivshol} we have
\[
\ba
\langle D_{(v,w^T)}f , \nabla_{(V,W^T)}\psi\rangle = 
\half\tr\Bigl((vx^T-\xi w^T)\bigl( (VY^T + QW^T) - R(VY^T - QW^T)\bigr)\Bigr)\, ,
\ea
\]
and the result follows after some rearrangement of terms.
\end{proof}

\begin{proofof}{Proposition}{concatenatedPBhol} 
Implementing 
Proposition \ref{semidirect}, making use of \eqref{consttermhol} and Lemma \ref{actderivstermhol}, then rearranging terms, we have
\[
\ba
2\{F,H\}&(v,w^T;V,W^T)\\ 
&= \{f,h\}_M(v,w^T) + \{\varphi,\psi\}_0(V,W^T) + \langle D_{(v,w^T)}f,\nabla_{(V,W^T)}\psi\rangle - \langle D_{(v,w^T)}h,\nabla_{(V,W^T)}\varphi\rangle \\
&\qquad\qquad\qquad\qquad+ 2\mu\tr(X^T[\lambda+vw^T]Q - Y^T[\lambda+vw^T]P).  \\
\ea
\]
Now, let us write
\[
\tilde V = (v,V),\ \  \tilde W = (w,W),\ \ \tilde X = (x,X),\ \ \tilde Y = (y,Y),\ \ \tilde P = (\xi,P),\ \ \tilde Q = (\eta,Q),
\]
so that
\[
\tilde V \tilde X^T - \tilde P \tilde W^T = (v,V)\left(\begin{array}{c} x^T \\ X^T\end{array}\right)- (\xi,P) \left(\begin{array}{c} w^T \\ W^T\end{array}\right)
=
vx^T - \xi w^T + VX^T - PW^T,
\]

\[
\tilde X^T \tilde V - \tilde W^T \tilde P = 
\left(\begin{array}{cc} x^Tv - w^T\xi & x^TV - w^TP \\ X^Tv - W^T\xi & X^TV - W^TP\end{array}\right),
\]

\[
\ba
\tilde V \tilde W^T(\tilde Q\tilde X^T - \tilde P\tilde Y^T)
&=
(v,V)\left(\begin{array}{c}w^T \\ W^T\end{array}\right)\left( (\eta,Q)\left(\begin{array}{c} x^T \\ X^T\end{array}\right) - (\xi,P) \left(\begin{array}{c} y^T \\ Y^T\end{array}\right) \right) \\
&=
(vw^T+VW^T)(\eta x^T - \xi y^T + WX^T - PY^T),
\ea
\]

\[
\ba
\tilde W^T\tilde V ( \tilde X^T\tilde Q - \tilde Y^T\tilde P) 
&= 
\left(\begin{array}{c}w^T \\ W^T\end{array}\right) (v,V) \left( \left(\begin{array}{c} x^T \\ X^T\end{array}\right) (\eta,Q) -  \left(\begin{array}{c} y^T \\ Y^T\end{array}\right) (\xi,P) \right)\\
&=
\left(\begin{array}{cc} w^Tv & w^TV \\ W^Tv & W^TV\end{array}\right)
\left(\begin{array}{cc} x^T\eta - y^T\xi & x^TQ - y^TP \\ X^T\eta - Y^T\xi & X^TQ - Y^TP\end{array}\right).
\ea
\]
Using these expressions, it is straightforward to verify Proposition \ref{concatenatedPBhol}.

\end{proofof}



\section{Applications: half-dressing}

As mentioned earlier, half-dressing arose quite naturally in a study of spin Ruijsenaars models in the setting of Hamiltonian reduction. This section treats half-dressing as a special case of the semidirect product of Poisson $G$-spaces.\footnote{The nomenclature \textit{half-dressing} has been chosen in preference to the more natural \textit{dressing} simply for the sake of consistency with respect to \cite{FFM}.}

\subsection{Half-dressing on $\CC^n$ for the $\U(n)$-action}
Let $v$ and $w$ be vectors in $\CC^n$. From $v$, let us form the positive-definite, hermitian matrix 
\be
\Phi(v) = 1+vv^\dag.
\ee
Any positive definite, hermitian matrix may be decomposed as the product of an element in $B$ and its hermitian conjugate; that is, there exists a unique $b\in B$ such that $bb^\dag = \Phi$. We obtain therefore a map from $\CC^n$ to $B$:
\be
v\mapsto b(v) : b(v) b(v)^\dag = \Phi(v) = 1 + vv^\dag.
\ee
The half-dressing of $w$ by $v$ is the map
\be
v\cdot w = b(v)w
\ee
in which $b(v)w$ is usual matrix multiplication. What is done in the spin Ruijsenaars model is to take several vectors like these and to combine them using sequential half-dressing. Consider the ordered collection $v_1,v_2,\dots, v_d\in\CC^n$, and construct from this collection a \textit{half-dressed}, similarly ordered, collection $\tilde v_1, \tilde v_2, \dots,\tilde v_d$ by the following procedure:
\be
\ba
\tilde v_1 &= v_1,\\
\tilde v_2 &= v_1\cdot v_2 = \tilde v_1\cdot v_2,\\
\tilde v_3 &= v_1\cdot(v_2\cdot v_3) = \tilde v_2\cdot v_3,\\
\vdots\ &\\
\tilde v_d &= v_1\cdot(v_2\cdot(v_3\cdot(\dots \cdot(v_{d-1}\cdot v_d)\dots))) = \tilde v_{d-1}\cdot v_d.
\ea
\ee
It was necessary to look at the multiple copies of $\CC^n$ as carrying a Poisson structure, and this Poisson structure has the natural property of making $\CC^n$ into a Poisson $G$-space, with $G=\U(n)$. It is clear that this mapping from $(v_1,\dots,v_d)$ to $(\tilde v_1,\dots,\tilde v_d)$ is invertible, and so one may ask what the Poisson structure looks like in the new variables; indeed this played an intrinsic role in \cite{FFM}.

With insight gained from the notion of semidirect product of Poisson $G$-spaces, this may be seen as a straightforward application of the example of the last section: the case $d_1=1$ of Proposition \ref{concatenatedPBreal} provides us with the inductive step for the whole construction.
The details of this induction argument should be clear, and are left to any interested reader to work through for themselves. 

\subsection{Half-dressing on $\CC^n\times\CC^n$ for the $\GL(n,\CC)$-action}
It is worth some attention to formulate the analogue of half-dressing in the holomorphic 
setting, where it is easily adapted from the notion of semidirect product.

Let $(v_1,w_1^T)$ and $(v_2,w_2^T)$ be two pairs of vectors in $\CC^n$. From the first pair, let us form the invertible matrix
\be
\Phi(v_1,w_1^T) = 1 + v_1w_1^T.
\ee
We decompose this matrix as the product of an upper-triangular matrix $\Phi_+$ and a lower-triangular matrix $\Phi_-^{-1}$, with the property that the diagonal parts of both matrices are the same. That is to say, as mentioned just after \eqref{factH}, we restrict to a suitable open subset in $\CC^n\times\CC^n$ in which $\Phi$ is invertible, and factorizable. We obtain in this way a map\footnote{As mentioned in subsection \ref{holex}, the map is defined modulo multiplication on the right by a pair of matrices of the form $(\Delta,\Delta)$, where $\Delta$ is a diagonal matrix with entries in $\{1,-1\}$. Hence, in reality we have $2^n$ maps. Any one of them will do, and the claims which follow are still valid.}  from $\CC^n\times\CC^n$ to $HN_+\times HN_-$:
\be
(v,w^T)\mapsto (\Phi(v,w^T)_+\,,\,\Phi(v,w^T)_-) \ : \ \Phi(v,w^T)_+ \Phi(v,w^T)_-^{-1} = \Phi(v,w^T) = 1+vw^T.
\ee
The half-dressing of $(v_2,w_2^T)$ by $(v_1,w_1^T)$ is the map
\be
(v_1,w_1^T)\cdot (v_2,w_2^T) = \bigl(\Phi(v_1,w_1^T)_+v_2\,,\, w_2^T\Phi(v_1,w_1^T)_-^{-1}\bigr) .
\ee
Now we take an ordered collection $(v_1,w_1^T), (v_2,w_2^T),\dots,(v_d,w_d^T)\in\CC^n\times\CC^n$ of pairs of vectors in $\CC^n$, and construct the \textit{half-dressed}, similarly ordered, collection $(\tilde v_1, \tilde w_1^T), (\tilde v_2, \tilde w_2^T),\dots,(\tilde v_d, \tilde w_d^T)$ by the following procedure:
\be\label{seqdress}
\ba
(\tilde v_1,\tilde w_1^T) &= (v_1,w_1^T),\\
(\tilde v_2,\tilde w_2^T) &= (v_1,w_1^T)\cdot(v_2,w_2^T) = (\tilde v_1,\tilde w_1^T) \cdot (v_2,w_2^T) ,\\
(\tilde v_3,\tilde w_3^T) &= (v_1,w_1^T)\cdot ((v_2,w_2^T)\cdot (v_3,w_3^T))
= (\tilde v_2,\tilde w_2^T) \cdot (v_3,w_3^T) ,\\
\vdots\qquad&\\
(\tilde v_d,\tilde w_d^T) &= (v_1,w_1^T)\cdot ((v_2,w_2^T)\cdot ((v_3,w_3^T)\cdot (\cdots (v_{d-1},w_{d-1}^T)\cdot(v_d,w_d^T)\dots)))
= (\tilde v_{d-1},\tilde w_{d-1}^T) \cdot (v_d,w_d^T) .
\ea
\ee
We will prove the following
\begin{proposition}
Suppose that the Poisson bracket on $(\CC^n\times\CC^n)^d$ is the direct sum of $d$ copies of the $d=1$ version of Proposition \ref{ZakPBhol}, with $\lambda=1$, and consider the map $\mathscr{D}:(\CC^n\times \CC^n)^d\to\CC^{n\times d}\times\CC^{d\times n}$ given by
\eqref{seqdress} :
\[
\ba
&\mathscr{D}(v_1,w_1^T; v_2,w_2^T;\dots;v_d,w_d^T) = (V,W^T)\\
&\qquad\hbox{with}\\
&V=(\tilde v_1, \tilde v_2,\dots,\tilde v_d),\quad
W =(\tilde w_1,\tilde w_2,\dots,\tilde w_d). 
\ea
\]
Then the Poisson bracket on $\CC^{n\times d}\times\CC^{d\times n}$ engendered by $\mathscr{D}$ is the bracket in Proposition \ref{ZakPBhol}.
\end{proposition}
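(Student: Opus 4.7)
The natural approach is induction on $d$, with the inductive step furnished by Proposition \ref{concatenatedPBhol}. The base case $d=1$ is trivial: $\mathscr{D}$ reduces to the identity map, and there is nothing to prove.

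For the inductive step, split $(\CC^n\times\CC^n)^d$ as $(\CC^n\times\CC^n)^{d-1}\times(\CC^n\times\CC^n)$ and apply the partial half-dressing $\mathscr{D}_{d-1}$ to the first factor alone. By the inductive hypothesis, the image $(\tilde V_{d-1},\tilde W_{d-1}^T)\in \CC^{n\times(d-1)}\times\CC^{(d-1)\times n}$ carries the Zakrzewski bracket of Proposition \ref{ZakPBhol} with $d$ replaced by $d-1$, $\lambda=1$, and $\rho^{(1)}$ the standard r-matrix $P_U-P_L$ on $gl(d-1,\CC)$. The remaining pair $(v_d,w_d^T)$ carries the $d=1$ version of the same bracket (with $\mu=1$, and $\rho^{(2)}=0$ on $gl(1,\CC)=\CC$). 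Proposition \ref{concatenatedPBhol} then tells us that $\mathcal{C}\circ\Sigma$ endows the image $\CC^{n\times d}\times\CC^{d\times n}$ with exactly the Zakrzewski bracket of Proposition \ref{ZakPBhol} for $d_1+d_2=d$, because the combined r-matrix $\tilde\rho$ constructed in Proposition \ref{concatenatedPBhol} reduces to the standard $P_U-P_L$ on $gl(d,\CC)$.

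What remains is to verify that $\mathscr{D}$ itself decomposes as $(\mathcal{C}\circ\Sigma)\circ(\mathscr{D}_{d-1}\times\mathrm{id})$, so that the bracket it transports is the one just identified. This boils down to the identity
\[
\Phi^G_1(\tilde V_{d-1},\tilde W_{d-1}^T)\cdot(v_d,w_d^T) \;=\; (\tilde v_d,\tilde w_d^T),
\]
where the left-hand side is the $G_R$-action of the momentum map on the incoming pair and the right-hand side is the sequentially half-dressed pair from \eqref{seqdress}. Via the identification $G_R\sim G$ of \eqref{factG}, this requires showing that $\Phi^G_1(\tilde V_{d-1},\tilde W_{d-1}^T)=1+\tilde V_{d-1}\tilde W_{d-1}^T$ factorizes as the accumulated upper- and lower-triangular product of the dressing matrices assembled at the previous stages of \eqref{seqdress}. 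This is the main technical point and the expected obstacle; it is the holomorphic counterpart of the identity $(b_1\cdots b_{d-1})(b_1\cdots b_{d-1})^\dag = 1+\sum_{j=1}^{d-1}\tilde v_j\tilde v_j^\dag$ recorded in the introduction for the real case, and it is established by a routine sub-induction that unwinds the recursion in \eqref{seqdress}.
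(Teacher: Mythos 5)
Your argument is correct in outline but runs the induction in the opposite direction from the paper, and this costs you an extra lemma. The paper peels off the \emph{first} pair: its inductive step is the case $d_1=1$, $d_2=d$ of Proposition \ref{concatenatedPBhol}, with $M$ the fresh pair $(v_1,w_1^T)$ and $N$ the already half-dressed block of the remaining $d$ pairs. With that choice the momentum map entering $\S$ is simply $\Phi^G_1(v_1,w_1^T)=1+v_1w_1^T$, so the decomposition $\mathscr{D}_{d+1}=(\cC\circ\S)\circ(\mathrm{id}\times\mathscr{D}_d)$ is literally the nested form of \eqref{seqdress} and no factorization identity needs to be verified. You instead peel off the \emph{last} pair ($d_1=d-1$, $d_2=1$), which forces you to identify $\Phi^G_1(\tilde V_{d-1},\tilde W_{d-1}^T)=1+\sum_{j<d}\tilde v_j\tilde w_j^T$ with the accumulated product of the individual dressing factors; you correctly flag this as the technical crux but leave it to a ``routine sub-induction''. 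The identity is indeed true and routine --- it telescopes via $g_+(1+v w^T)g_-^{-1}=g_+g_-^{-1}+(g_+v)(w^Tg_-^{-1})$ together with the local (up to $(\ZZ_2)^n$) uniqueness of the $G^\delta G_R$ factorization --- so your proof closes, but that computation should be written out, whereas the paper's ordering of the induction avoids it altogether. Your observation that the assembled r-matrix $\tilde\rho$ of Proposition \ref{concatenatedPBhol} reduces to $P_U-P_L$ on $gl(d,\CC)$ when $\rho_1=P_U-P_L$ and $\rho_2=0$ is correct, and holds equally in either ordering.
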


\begin{proof}
We may prove this by induction on $d$, and applying Proposition \ref{concatenatedPBhol}. For $d=2$, the claim is equivalent to the case $d_1=1=d_2$. Presupposing that the result holds for some value of $d$, the case $d_1=1$, $d_2=d$ confirms that the result holds for $d+1$.
\end{proof}

\subsection{Comments}

Whereas in Section \ref{examples} the left and right r-matrices $R$ and $\rho$ were arbitrary, the right r-matrix $\rho$ becomes uniquely generated by the half-dressing procedure. The half-dressing described here was based on the action of the group $G$, or $K_n$, on the left. Clearly the same thing could be done by means of the action of the group $H$, or $K_d$, on the right, but in that case the left r-matrix $R$ will be restricted. In fact, for the $\U(n)$ case, one can envisage a sort of Fourier transform operation from $d$ uncoupled copies of the Poisson space $\CC^n$ to $n$ uncoupled copies of the Poisson space $\CC^d$, or the analogous operation for the $\GL(n,\CC)$ case involving $\CC^n\times\CC^n$ and $\CC^d\times\CC^d$.

In the special case $d=n$, for $\rho=R$ and if $\lambda=0$, the formulae for the Poisson brackets in Proposition \ref{ZakPBhol} and in Proposition \ref{ZakPBreal} are the standard ones for the \textit{Heisenberg double}, and they can be described in terms of a symplectic structrue with explicit form due to Alekseev and Malkin \cite{AM}. In Appendix \ref{sympstruct} it is shown by means of an explicit formula for the symplectic form in terms of the non half-dressed representation, that, for $\U(n)$ symmetry, all cases are symplectic. It would be interesting to find an analogue of the Alekseev-Malkin formula, at least for the $\U(n)$ case, and ideally for arbitrary $n$ and $d$. Moreover, it would be interesting to find such a formula by means of Hamiltonian reduction; probably starting with the analogue of the Alekseev-Malkin structure for $d=n$.


\appendix

\section{Regularisation of Zakrzewski's  $\U(n)$ Poisson-symmetric Poisson structure on $\CC^n$}\label{zakreg}
According to \cite{Z} it appears that there should be an infinite family---parametrised by a function, denoted $\alpha$ below---of Poisson structures on $\CC^n$ on which the natural action of $\U(n)$ is Poisson. As we see here, this is not quite the case, and the apparent generality can be gauged away by a suitable change of variable; to be precise, $\alpha$ can be normalized to $1$ as long as it never takes the value zero.

Zakrzewski found the following Poisson bracket on $\CC^n$
\be\label{ZakPBgen}
\{Im(\xi^\dag w),Im(\eta^\dag w)\} = Im\Bigl( \xi^\dag(w\eta^\dag)_\k w - \alpha\xi^\dag\eta - \half\xi^\dag w\eta^\dag w - \beta\xi^\dag ww^\dag \eta\Bigr),
\ee
with corresponding Hamiltonian vector field
\be\label{ZakHamvfgen}
\XX_{Im(\xi^\dag w)}(w) = (w\xi^\dag)_\k w - \alpha(|w|^2)\xi - \half(\xi^\dag w)w - \beta(|w|^2)(w^\dag \xi)w,
\ee
in which $\alpha$ and $\beta$ must be related by the condition (prime denotes derivative)
\be\label{Zakcondn}
2(\alpha-t\alpha')\beta = 2\alpha\alpha' + \alpha + t\alpha',
\ee
but are otherwise arbitrary functions of $|w|^2$.

In the case $d=1$, the formulae in Proposition \ref{ZakPBreal} correspond to $\alpha=\lambda$, $\beta=\half$, and it looks like a special case. 
However, we may prove

\begin{proposition}\label{special2Zak}
The Poisson structure given by \eqref{ZakPBgen} and \eqref{ZakHamvfgen} with $\alpha=1$  can be transformed to the general $\alpha$ case by the map
\[
w\mapsto u = f(|w|^2)w,\quad\hbox{ with } f(|w|^2)=\sqrt{\alpha(|u|^2)}
\]
\end{proposition}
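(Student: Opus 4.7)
The strategy is to exploit that $\Psi : w \mapsto u = f(|w|^2)w$ is $\U(n)$-equivariant (it is a radial rescaling), so the pushforward of a $\U(n)$-Poisson-symmetric bracket on $\CC^n$ is again a $\U(n)$-Poisson-symmetric bracket, and hence, by Zakrzewski's classification, must be of the form \eqref{ZakPBgen} for some pair $(\alpha_{\text{new}}, \beta_{\text{new}})$ satisfying \eqref{Zakcondn}. Starting from the special case $\alpha \equiv 1$, $\beta \equiv \tfrac12$ on the $w$-side, the task reduces to checking that, after pushing forward by $\Psi$, the $\xi$-independent coefficient in front of $-\xi$ in the Hamiltonian vector field is $\alpha(|u|^2)$ (forcing $f = \sqrt{\alpha \circ \Psi}$), and that the coefficient of $(w^\dag \xi)w$ gives a $\beta(|u|^2)$ compatible with \eqref{Zakcondn}.

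The concrete plan is to test the transformed bracket on linear functions $F(u) = Im(\xi^\dag u)$. Pulling back via $\Psi$ gives the non-linear function $\tilde F(w) = f(|w|^2)\, Im(\xi^\dag w)$, whose derivative I compute to be
\[
\tilde\xi := \delta_w \tilde F = f(|w|^2)\,\xi \;-\; 2 i f'(|w|^2)\, Im(\xi^\dag w)\, w,
\]
using the identity $Re(z) = Im(iz)$ on $d|w|^2 = 2Re(w^\dag\, dw)$. Feeding $\tilde\xi$ into \eqref{ZakHamvfgen} with $\alpha\equiv 1$, $\beta\equiv\tfrac12$ produces $\XX_{\tilde F}(w)$; I then apply the differential $d\Psi_w(\dot w) = f(|w|^2)\dot w + 2 f'(|w|^2) Re(w^\dag \dot w)\, w$ to this vector to obtain the Hamiltonian vector field in the $u$-coordinates. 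Finally, I use the defining implicit relation $f^2(|w|^2) = \alpha(|u|^2)$, i.e.\ $|u|^2 = \alpha(|u|^2)\,|w|^2$, to translate every occurrence of $|w|^2$, $f$, $f'$ into $|u|^2$, $\alpha$, $\alpha'$; differentiating this implicit relation gives $f' = \alpha'/(2(\alpha - t\alpha'))$ with $t = |u|^2$, which is exactly the kind of expression one expects to encounter.

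The main obstacle is bookkeeping: the extra radial term in $\tilde\xi$ interacts non-trivially with the projection $(w\tilde\xi^\dag)_\k$, because $(ww^\dag)_\k$ does not vanish under the Iwasawa-type decomposition $gl(n,\CC) = \k \oplus \b$. I expect two checks to carry the proof through: first, the $(w\xi^\dag)_\k w$-term on the $w$-side is mapped, via the coefficient $f$ on both sides and the homogeneity of the projection $(\cdot)_\k$, to the corresponding term $(u\xi^\dag)_\k u$ on the $u$-side with coefficient $1$; second, comparing the coefficients in front of $\xi$, $(\xi^\dag u) u$, and $(u^\dag \xi)u$ produces three scalar equations in $\alpha, \alpha', \beta$, of which one fixes $\beta$ and the remaining two must collapse to \eqref{Zakcondn}. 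The anticipated algebraic identity that makes this collapse work is precisely the differential consequence $2 f'(|w|^2)|w|^2 + f = \alpha'(|u|^2)\cdot d|u|^2/d|w|^2 \cdot f^{-1}$ of the implicit definition of $f$, which rearranges into \eqref{Zakcondn} once one sets $\beta = \tfrac12 + (\text{correction depending on } \alpha, \alpha')$.
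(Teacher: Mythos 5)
Your route is the same as the paper's: pull back the linear function $Im(\xi^\dag u)$ along $w\mapsto u=f(|w|^2)w$, feed its derivative into \eqref{ZakHamvfgen} with $\alpha=1$, $\beta=\half$, push the Hamiltonian vector field forward with $d\Psi_w(\dot w)=f\dot w+2f'\,Re(w^\dag\dot w)\,w$ (the paper writes this as $\XX_H(u)=f\,\XX_H(w)+f'\{|w|^2,H\}w$), and read off $\alpha$ and $\beta$ from the coefficients of $\xi$ and $(u^\dag\xi)u$. Your formulas for $\delta_w\tilde F$ and for $d\Psi_w$ are correct; the opening appeal to equivariance and to Zakrzewski's classification is harmless but unnecessary, since the direct computation already lands in the form \eqref{ZakPBgen} with $\alpha=f^2$ and $\beta=\half+2(1+|w|^2)f'/f$. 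Also, the interaction of the radial term with the projection is simpler than you fear: the extra piece of $w\tilde\xi^\dag$ is a real multiple of $\ri\,ww^\dag$, which is anti-hermitian and hence lies entirely in $\k$, so it just contributes $2\ri f'\,Im(\xi^\dag w)\,|w|^2 w$.

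The genuine problem is that both auxiliary identities you ``anticipate'' are wrong, and the second is precisely the one that must produce \eqref{Zakcondn}, so as written the plan does not close. With $t=|u|^2$ and $\tau=|w|^2$, the implicit relations are $t=f(\tau)^2\tau$ and $\alpha(t)=f(\tau)^2$; differentiating gives $dt/d\tau=\alpha^2/(\alpha-t\alpha')$ and hence
\[
\frac{f'}{f}=\frac{\alpha\alpha'}{2(\alpha-t\alpha')},\qquad\hbox{i.e.}\qquad f'=\frac{\alpha^{3/2}\alpha'}{2(\alpha-t\alpha')},
\]
not $f'=\alpha'/(2(\alpha-t\alpha'))$. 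Likewise your displayed identity $2f'\tau+f=\alpha'\,(dt/d\tau)\,f^{-1}$ reduces, via $2ff'=\alpha'\,dt/d\tau$, to $2f'\tau+f=2f'$, which is false. The identity actually needed is the corrected one above: substituting it into $\beta=\half+2(1+\tau)f'/f$ and using $\alpha\tau=t$ gives $\beta=\half+(\alpha+t)\alpha'/(\alpha-t\alpha')$, whence $2(\alpha-t\alpha')\beta=\alpha+t\alpha'+2\alpha\alpha'$, which is \eqref{Zakcondn}. With that repair your argument coincides with the paper's proof, which records the three relations \eqref{tautrelns1} and eliminates $\tau$ at the end.
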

and
\begin{proposition}\label{Zak2special}
If $\alpha$ is never zero, the Poisson structure given by \eqref{ZakPBgen}, \eqref{ZakHamvfgen} and \eqref{Zakcondn} can be transformed to the case with $\alpha=1$ by the map
\[
w\mapsto u = f(|w|^2)w,\quad \hbox{ with }f(|w|^2)=\frac1{\sqrt{\alpha(|u|^2)}}
\]

\end{proposition}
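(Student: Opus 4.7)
The plan is to verify Proposition \ref{Zak2special} by a direct computation that parallels the proof of Proposition \ref{special2Zak}, with the roles of source and target Poisson structures exchanged. First I would rewrite the map explicitly: solving $u=w/\sqrt{\alpha(|u|^2)}$ for $w$ gives
\[
w=\sqrt{\alpha(|u|^2)}\,u,\qquad |w|^2=\alpha(|u|^2)\,|u|^2,
\]
an explicit $\U(n)$-equivariant radial diffeomorphism $\Phi:u\mapsto w$ between open subsets of $\CC^n$. The hypothesis that $\alpha$ is nowhere zero (together with positivity, so that the square root is real) guarantees that $\Phi$ is well-defined and is a local diffeomorphism wherever $\alpha(s)+s\alpha'(s)\neq 0$.

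Next I would check that $\Phi$ is Poisson, from the $\alpha=1$, $\beta=\tfrac12$ structure on the $u$-side to the general $\alpha,\beta$ structure on the $w$-side. The cleanest route is to transport a linear Hamiltonian $F_\xi(w)=\mathrm{Im}(\xi^\dag w)$ back to $u$-space, giving the nonlinear function $(F_\xi\circ\Phi)(u)=\sqrt{\alpha(|u|^2)}\,\mathrm{Im}(\xi^\dag u)$, and to compute its Hamiltonian vector field on $u$-space by the Leibniz rule: one contribution is $\sqrt{\alpha(|u|^2)}$ times the linear-piece field of \eqref{ZakHamvfgen} at $\alpha=1$, $\beta=\tfrac12$, and a second contribution comes from the derivative $\alpha'(|u|^2)$ acting on the radial factor. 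Pushing this forward via $d\Phi$ to $w$-space should then give exactly $\XX_{F_\xi}^w$ as prescribed by \eqref{ZakHamvfgen} with the given $\alpha,\beta$ evaluated at $|w|^2$.

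After using $|w|^2=\alpha(|u|^2)|u|^2$ to translate between arguments, the comparison reduces to matching individual terms of the Hamiltonian vector field. The $(w\xi^\dag)_{\k}w$-term and the $(\xi^\dag w)\,w$-term agree immediately by $\U(n)$-equivariance of the radial rescaling and by the scaling behaviour of the skew-hermitian projection. The identifications of the coefficient of $\xi$ with $\alpha(|w|^2)$, and of the coefficient of $(w^\dag\xi)w$ with $\beta(|w|^2)$, are precisely the content of the Zakrzewski relation \eqref{Zakcondn}, once the chain rule has generated the $\alpha'$-term and the rescaling factor $\sqrt{\alpha}$ has been tracked through the pushforward.

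The main obstacle is therefore the bookkeeping: several $\alpha'$-terms appear when derivatives of $\sqrt{\alpha(|u|^2)}$ interleave with $d\Phi$, and the only way they can combine into the required $\beta$-term on the $w$-side is via \eqref{Zakcondn}. As a sanity check, for constant $\alpha\equiv c$ the map collapses to the linear rescaling $w=\sqrt{c}\,u$, relation \eqref{Zakcondn} degenerates to $2c\beta=c$ giving $\beta=\tfrac12$ on both sides, and the verification becomes trivial.
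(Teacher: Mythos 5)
Your overall strategy --- verifying the Poisson property on the spanning family of linear Hamiltonians by pushing forward their Hamiltonian vector fields and matching terms, with \eqref{Zakcondn} doing the work on the $\beta$-coefficient --- is exactly the paper's strategy; the only structural difference is that you run the computation from the $\alpha=1$ side (pulling back $\mathrm{Im}(\xi^\dag w)$ to $u$-space), whereas the paper pulls back $\mathrm{Im}(\xi^\dag u)$ to $w$-space and computes with the general bracket there. For a diffeomorphism the two directions are equivalent, but the paper's direction is the more economical one, because the coefficients $\alpha,\alpha',\beta$ of the source bracket are already functions of $|w|^2$ and \eqref{Zakcondn} applies verbatim, whereas in your direction you must first translate arguments through the radial relation before \eqref{Zakcondn} becomes usable.

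There are, however, two genuine problems. First, the explicit map you extract, $w=\sqrt{\alpha(|u|^2)}\,u$ with $|w|^2=\alpha(|u|^2)\,|u|^2$, is not the map for which the verification closes. The normalization is forced by the coefficient of the constant term $\xi$: after the pushforward that coefficient is $f^2\alpha$ with $\alpha$ evaluated where the source bracket lives, namely at $|w|^2$, and the paper's proof accordingly sets $\alpha(|w|^2)\,f(|w|^2)^2=1$, i.e.\ $|u|^2=|w|^2/\alpha(|w|^2)$ (equivalently, the inverse of the map of Proposition \ref{special2Zak}). With your version the $\xi$-coefficient comes out as $\alpha(|u|^2)$ rather than the required $\alpha(|w|^2)$, and \eqref{Zakcondn}, being a pointwise identity in a single argument, cannot repair a mismatch between $\alpha$ evaluated at two different points; your constant-$\alpha$ sanity check cannot detect this, since that is precisely the case $\alpha(|u|^2)=\alpha(|w|^2)$. (The displayed formula in the proposition invites your reading, but the proof makes clear which normalization is meant, and you need to flag and resolve that discrepancy rather than build on the literal formula.) Second, even with the correct map, the proposal defers the entire content to ``bookkeeping'': the assertion that the $\alpha'$-terms assemble into the required $\beta$- and $\tfrac12$-coefficients via \eqref{Zakcondn} \emph{is} the proof --- in the paper it is the identity $\tfrac12\alpha'|w|^2+\alpha'\beta|w|^2-\alpha\beta+\alpha\alpha'=-\tfrac12\alpha$, obtained after substituting $2ff'=-f^2\alpha'/\alpha$ --- and it must be carried out, not presumed to work.
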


\begin{proofof}{Proposition}{special2Zak}
Assuming that the Poisson bracket on $\CC^n$ is given by \eqref{ZakPBgen} with $\alpha=1$, $\beta=\half$, let's look at the map
\be
w\mapsto u := f(|w|^2)w,
\ee
and consider the function
\be
H(w) = Im(\xi^\dag u) = f(|w|^2) Im(\xi^\dag w).
\ee
We compute, dropping the argument $|w|^2$ of $f$ and of its derivative $f'$,
\be
d_wH = f\xi^\dag + 2\ri f' Im(\xi^\dag w)w^\dag,
\ee
and hence
\be
\XX_H(w) = f(w\xi^\dag)_\k w + 2\ri f'Im(\xi^\dag w)|w|^2w - f\xi + 2\ri f(\xi^\dag w+w^\dag\xi)w - \half(\xi^\dag w+w^\dag\xi)w,
\ee
from which we get
\[
\{|w|^2,H\} = - f(\xi^\dag w+w^\dag \xi) - |w|^2f(\xi^\dag w+w^\dag) = -(1+|w|^2)f(\xi^\dag w+w^\dag \xi).
\]
Putting these together, after some simplifications, we have
\be\label{sttoZk}
\ba
\XX_H(u) &= f' \{|w|^2,H\}w + f\XX_H(w)\\
&=
(u\xi^\dag)_\k u - f^2\xi - \half(\xi^\dag u)u - \left[\half + 2(1+|w|^2|)\frac{f'}{f}\right](u^\dag\xi)u.
\ea
\ee
Let us set
\be
\ba
f(|w|^2)^2 &= \alpha(|u|^2), \\
\half + 2(1+|w|^2|)\frac{f'}{f}(|w|^2) &= \beta(|u|^2),
\ea
\ee
so that \eqref{sttoZk} takes the general form \eqref{ZakPBgen}. 
Let $|u|^2=t$ and $|w|^2=\tau$. Then, as $|u|^2 = f(|w|^2)^2|w|^2$, we have
\be\label{tautrelns1}
\left\{\ba
t&=f(\tau)^2\tau,\\
\alpha(t)&=f(\tau)^2,\\
\beta(t) &= \half + 2(1+\tau)f'(\tau)/f(\tau).
\ea\right.
\ee
Eliminating $\tau$ from the relations \eqref{tautrelns1}, we get Zakrzewski's condition \eqref{Zakcondn} on the functions $\alpha$ and $\beta$.

\end{proofof}

\begin{proofof}{Proposition}{Zak2special}

Assuming that the Poisson bracket on $\CC^n$ is given by \eqref{ZakPBgen}, for general $\alpha$ and $\beta$ satisfying \eqref{Zakcondn},
let us make the change of variables $w\mapsto u$:
\be
u = f(\vert w\vert^2) w.
\ee
Proceeding now in an analogous fashion to that in the previous proof, we obtain
\be
\ba
\dot w = \XX_H(w) 
&=
 f\bigl[(w\xi^\dag)_\k w - \alpha\xi - \half(\xi^\dag w)w - \beta(w^\dag \xi)w\bigr]\\
&\qquad\qquad
+ \ri Im(\xi^\dag w) f'\bigl[ \vert w\vert^2 + 2\beta\vert w\vert^2 + 2\alpha\bigr]w,
\ea
\ee
which gives
\[
\ba
\{ |w|^2,H\} 
&=
- fRe(\xi^\dag w)[2\alpha+\vert w\vert^2 + 2\beta\vert w\vert^2].
\ea
\]
Hence we have
\be\label{udot}
\ba
\dot u &= f\XX_H(w) + f' \{|w|^2,H\} w\\
&=
 (u\xi^\dag)_\k u - \xi - \half(\xi^\dag u)u 
 \\
&\qquad\qquad
- \bigl[ff' \vert w\vert^2 + 2ff'\beta\vert w\vert^2 + f^2\beta+ 2ff'\alpha\bigr] (w^\dag \xi)  w.
\ea
\ee
Now, 
{on condition that the function $t\mapsto\alpha(t)$ is never zero}, we may choose $\alpha f^2=1$, so that $\alpha'f^2+2\alpha ff' =0$, or
\be
2ff' = -f^2\frac{\alpha'}{\alpha},
\ee
and subsitituting this in the square brackets expression on the second line of the equation for \eqref{udot}, 
we have, using the condition \eqref{Zakcondn},
\[
\ba
\frac{f^2}{\alpha} \bigl[ \half\alpha'\vert w\vert^2 + \alpha'\beta\vert w\vert^2 - \alpha\beta + \alpha\alpha'\bigr]
= - \frac{f^2}2.
\ea
\]
Putting all this together, we find
\[
\XX_{Im(\xi^\dag u)}(u) =  (u\xi^\dag)_\k u - \xi - \half(\xi^\dag u)u - \half (u^\dag \xi) u.
\]
\end{proofof}

\begin{remark}\label{scalingremark}
Provided that $\lambda\neq0$, the assumption $\lambda=1$ is a special case of Proposition \ref{Zak2special}.
\end{remark}



\section
{The $\U(n)$ Poisson-symmetric symplectic form on $\CC^n$}
\label{sympstruct}

To show that the Poisson structure on $\CC^n$ given by Proposition \ref{ZakPBreal} is nondegenerate, we'll pass to local coordinates to find an explicit expression for the symplectic form in these coordinates. This is then extended to an expression which is well-defined globally.

Let us introduce the notation
\be
\CC^n\owns \bk : (\bk)_l=\delta_{kl};
\ee
that is, $\bk$ is the constant vector with all components zero except for the $k$th component, which is $1$.

Let us notice that for $\xi^\dag = \bk^T$, $Im\,\tr(\xi^\dag w) = Im(w_k)$, and for $\xi^\dag =\ri\bk^T$, $Im\,\tr(\xi^\dag w)=Re(w_k)$. A convenient route to expressing the Poisson bracket in terms of the components of $w$ is to compute the Hamiltonian vector fields. The only non-trivial step involves the projection $(w\xi^\dag)_\k$. For $\xi^\dag = \ri\bl^T$,
\be
\ba
\ri w\bl^T &=  \ri \sum_{r=1}^n w_r\br\,\bl^T\\
&= 
\Bigl(\half\ri(w_l+\overline{w_l}) \bl\,\bl^T + \ri\sum_{r>l}^n[w_r\br\,\bl^T + \overline{w_r}\bl\,\br^T]\Bigr) \\
&\qquad+ 
\Bigl(\half\ri(w_l-\overline{w_l})\bl\,\bl^T + \ri\sum_{r<l} w_r\br\,\bl^T - \ri\sum_{r>l}\overline{w_r}\bl\,\br^T\Bigr)
\ea
\ee
and the first bracket is a matrix in $\k$, while the second is a matrix in $\b$. Hence, we get
\be
( \ri w\bl^T)_\k =  \Bigl( \half\ri(w_l+\overline{w_l}) \bl\,\bl^T + \ri\sum_{r>l}^n[w_r\br\,\bl^T + \overline{w_r}\bl\,\br^T]\Bigr),
\ee
whence
\be
(\ri w\bl^T)_\k w = \ri Re(w_l)w_l\bl + \ri\sum_{r>l}^n[w_rw_l\br + |w_r|^2\bl].
\ee
Similarly, for $\xi^\dag=\bl^T$,
\be
( w\bl^T)_\k w = \ri Im(w_l)w_l\bl + \sum_{r>l}^n[w_rw_l\br - |w_r|^2\bl ].
\ee
We obtain then,
\be
\left\{
\ba
\XX_{Re(w_l}(w) &= 
\ri Re(w_l)w_l\bl + \ri\sum_{r>l}^n[w_rw_l\br + |w_r|^2\bl ]
+\lambda\ri\bl - \half(\ri w_l - \ri\overline{w_l})w\\
\XX_{Im(w_l)}(w) &= 
\ri Im(w_l)w_l\bl + \sum_{r>l}^n[w_rw_l\br - |w_r|^2\bl ]
- \lambda\bl - \half(w_l + \overline{w_l})w
\ea\right.
\ee
which combine to give
\be
\left\{\ba
\XX_{w_l}(w) &=
\ri w_l^2\bl + 2\ri\sum_{r>l}^nw_rw_l\br - \ri w_l w \\
\XX_{\overline{w}_l}(w) &= \ri |w_l|^2 \bl + 2\ri\sum_{r>l}^n |w_r|^2\bl + 2\lambda\ri\bl + \ri\overline{w}_l w
\ea
\right.
\ee
or, equivalently,
\be
\ba
\{w_k,w_l\} &= 2\ri \delta_{k>l}w_kw_l - \ri w_kw_l + \ri\delta_{kl}w_k^2\\ 
&=
\ri\sgn(k-l) w_kw_l,\\
\{w_k , \overline{w}_l\} &=
\ri\delta_{kl}|w_l|^2 + 2\ri\delta_{kl}\sum_{i>k} |w_i|^2  + 2\ri\lambda\delta_{kl} + \ri w_k\overline{w}_l \\
&=
\ri\delta_{kl}\left(2\sum_{r\geq k}|w_r|^2  + 2\lambda - |w_k|^2\right) + \ri w_k\overline{w}_l.
\ea
\ee
Using these, we get
\[
\{w_k,|w_l|^2\} = \ri\Bigl[ \bigl(1+\sgn(k-l)\bigr)|w_l|^2  + \delta_{kl}\bigl(2\lambda + 2\sum_{r\geq k} |w_r|^2 - |w_k|^2\bigr)\Bigr]w_k,
\]
and hence
\be
\{|w_k|^2,|w_l|^2\}=0.
\ee
Next, setting $w_k=|w_k|e^{\ri\phi_k}$, we have
\[
\ba
|w_k| \{e^{\ri\phi_k},|w_l|^2\} &= \{w_k,|w_l|^2\}\\ 
&=
\ri\Bigl[ \bigl(1- \delta_{kl} +\sgn(k-l)\bigr)|w_l|^2 + 2\delta_{kl}\bigl(\lambda+\sum_{r\geq k} |w_r|^2\bigr)\Bigr]\,|w_k|e^{\ri\phi_k},\\
\ea
\]
and then, from
\[
|w_k|e^{\ri\phi_l}\{e^{\ri\phi_k},|w_l|\} + |w_l|e^{\ri\phi_k}\{|w_k|,e^{\ri\phi_l}\} + |w_k|\,|w_l|\{e^{\ri\phi_k},e^{\ri\phi_l}\} = \ri\sgn(k-l)|w_k|\,|w_l|e^{\ri\phi_k}e^{\ri\phi_l},
\]
we obtain
\be
\{e^{\ri\phi_k},e^{\ri\phi_l}\}=0.
\ee
It is convenient to make the invertible change of variables, defined by
\be
G_k= \lambda+\sum_{r=k}^n|w_r|^2 ,
\ee
in terms of which the Poisson bracket relations are
\be
\ba
\{G_k,G_l\}&=0,\\
\{e^{\ri\phi_k},e^{\ri\phi_l}\}&=0,\\
\{e^{\ri\phi_k},G_l\} &= 2\ri\delta_{k\geq l}e^{\ri\phi_k}G_l.
\ea
\ee
The change of variables
\be
(e^{\ri\phi_1}\dots,e^{\ri\phi_n})\mapsto (e^{\ri\phi_1}, e^{\ri[\phi_2-\phi_1]},\dots,e^{\ri[\phi_n-\phi_{n-1}]})
\ee
diagonalizes the Poisson relations:
\be
\ba
\{e^{\ri\phi_1},G_l\} &= 2\ri\delta_{l1}e^{\ri\phi_1}G_1,\\
\{e^{\ri[\phi_k-\phi_{k-1}]},G_l\} &= 2\ri\delta_{kl} e^{\ri[\phi_k-\phi_{k-1}]}G_l,\quad k>1,
\ea
\ee
and so we obtain the corresponding symplectic form:
\[
\ba
\Omega_\lambda (w) &= \frac1{2\ri e^{\ri\phi_1}G_1}d(e^{\ri\phi_1})\wedge dG_1 
+ \frac1{2\ri}\sum_{k=2}^n\left( \frac1{e^{\ri[\phi_k-\phi_{k-1}]}}\frac1{G_k} d\left(e^{\ri[\phi_k-\phi_{k-1}]}\right)\wedge d G_k \right)\\
&=
\frac1{2\ri }\frac{d(e^{\ri\phi_1})}{e^{\ri\phi_1}}\wedge \frac{dG_1}{G_1} 
+ \frac1{2\ri}\sum_{k=2}^n\left( \frac{d(e^{\ri\phi_k})}{e^{\phi_k}} - \frac{d(e^{\ri\phi_{k-1}})}{e^{\ri\phi_{k-1}}}\right) \wedge \frac{d G_k}{G_k} \\ 
&=
\frac1{2\ri }\frac{d(e^{\ri\phi_n})}{e^{\ri\phi_n}}\wedge \frac{dG_n}{G_n} +
\frac1{2\ri}\sum_{k=1}^{n-1}\frac{d(e^{\ri\phi_k})}{e^{\ri\phi_k}}\wedge\left(\frac{dG_k}{G_k} - \frac{dG_{k+1}}{G_{k+1}}\right) .
\ea
\]
Using the identity $G_k-G_{k+1} = |w_k|^2$, the last bracketed term simplifies:
\[
\ba
\frac{dG_k}{G_k} - \frac{dG_{k+1}}{G_{k+1}} &= \frac1{G_kG_{k+1}} \Bigl(G_{k+1} d(G_{k+1} + |w_k|^2 ) - (G_{k+1}+|w_k|^2)dG_{k+1}\Bigr)\\
&=
\frac1{G_kG_{k+1}} \Bigl(G_{k+1} d( |w_k|^2 ) - |w_k|^2 dG_{k+1}\Bigr),
\ea
\]
and, writing the other derivatives in the form
\[
\ba
\frac{d(e^{\ri\phi_k})}{e^{\ri\phi_k}} &= \frac1{2|w_k|^2}(\overline{w}_kdw_k - w_kd\overline{w}_k),\\
d(|w_k|^2)&= \overline{w}_kdw_k + w_kd\overline{w}_k,
\ea
\]
we obtain
\be
\ba
\Omega_\lambda(w) &= \frac1{2\ri}\sum_{k=1}^n\frac12\frac{\overline{w}_kdw_k - w_k d \overline{w}_k }{|w_k|^2} \wedge \frac{\overline{w}_kdw_k + w_kd\overline{w}_k}{G_k}
-
\frac1{2\ri}\sum_{k=1}^{n-1} \frac12( \overline{w}_kdw_k - w_k d \overline{w}_k ) \wedge \frac{dG_{k+1}}{G_kG_{k+1}} \\
&=
-\frac\ri2\sum_{k=1}^n\frac1{G_k}dw_k\wedge d\overline{w}_k 
-
\frac\ri4\sum_{k=1}^{n-1}\frac1{G_kG_{k+1}}dG_{k+1}\wedge(\overline{w}_kdw_k - w_kd\overline{w}_k).
\ea
\ee
In this form it is easy to recognise that $\Omega_\lambda$ is a deformation of the canonical sympletic form on $\CC^n$: for $\lambda=1$, with $w=N^{-1}z$, we get $N^2\Omega_1(w)\to -\ri/2\sum_{k=1}^n dz_k\wedge d\overline{z}_k$ as $N\to\infty$.




\end{document}